\theoremstyle{plain}
\newtheorem{proposition}{Proposition}
\newtheorem{lemma}{Lemma}
\newtheorem{theorem}{Theorem}
\newtheorem{corollary}{Corollary}
\newtheorem{remark}{Remark}
\def\bmg{{\bm g}}
\def\Hone{{\mathcal{H}_{1}}}
\def\Htwo{{\mathcal{H}_{2}}}
\def\cZ{{\mathcal{Z}}}
\definecolor{DGREEN}{rgb}{0,0.7,0.3}
\definecolor{DRED}{rgb}{.7,0.1,0.3}
\newcounter{mnotecount}%[section]
\newcommand{\mnotex}[1]%{}
{\protect{\stepcounter{mnotecount}}$^{\mbox{\footnotesize $\bullet$\themnotecount}}$ 
\marginpar{%\color{red}%
\raggedright\tiny\em
$\!\!\!\!\!\!\,\bullet$\themnotecount: #1} }
\begin{document}

\title{\textbf{Killing spinor data on distorted black hole horizons and the uniqueness of stationary vacuum black holes}}

%\title{\textbf{Killing spinor data on non-expanding horizons and the uniqueness of stationary vacuum black holes}}

\author[,1]{M.J. Cole \footnote{E-mail
    address:{\tt m.j.cole@qmul.ac.uk}}}
\author[,2,3]{I. R\'acz \footnote{E-mail address:{\tt racz.istvan@wigner.mta.hu}}}
\author[,1]{J.A. Valiente Kroon \footnote{E-mail address:{\tt j.a.valiente-kroon@qmul.ac.uk}}}
%\author[1]{Con T. Ributor}

\affil[1]{School of Mathematical Sciences, Queen Mary, University of London,
Mile End Road, London E1 4NS, United Kingdom.}

\affil[2]{Faculty of Physics, University of Warsaw, Ludwika Pasteura 5, 02-093 Warsaw, Poland}

\affil[3]{Wigner RCP, H-1121 Budapest, Konkoly Thege Mikl\'{o}s \'{u}t
29-33, Hungary}

\maketitle

\begin{abstract}
We make use of the black hole holograph construction of
 [I. R\'acz, Stationary black holes as holographs, Class. Quantum
  Grav. 31, 035006 (2014)]
to analyse the existence of Killing spinors in the
domain of dependence of the horizons of distorted black holes. In
particular, we provide conditions on the bifurcation sphere ensuring
the existence of a Killing spinor. These conditions can be understood
as restrictions on the curvature of the bifurcation sphere and ensure
the existence of an axial Killing vector on the 2-surface. We obtain
the most general 2-dimensional metric on the bifurcation sphere for
which these curvature conditions are satisfied. Remarkably,
these conditions are found to be so restrictive that, in the
considered particular case, the free data on the bifurcation surface
(determining a distorted black hole spacetime) is completely
determined by them.  In addition, we formulate further conditions on
the bifurcation sphere ensuring that the Killing vector associated to
the Killing spinor is Hermitian. Once the existence of a Hermitian
Killing vector is guaranteed, one can use a characterisation of the
Kerr spacetime due to Mars to identify the particular subfamily of
2-metrics giving rise to a member of the Kerr family in the
black hole holograph construction. Our analysis sheds light
on the role of asymptotic flatness and curvature conditions on the
bifurcation sphere in the context of the problem of uniqueness of
stationary black holes. The Petrov type of the considered
distorted black hole spacetimes is also determined.
\end{abstract}

%\newpage 

\tableofcontents 

%\newpage 

\section{Introduction}

In \cite{Rac14} it is shown that the 4-dimensional geometry of a
spacetime admitting a pair of expansion- and shear-free null
hypersurfaces $\Hone$ and $\Htwo$ intersecting on a two-surface
$\cZ\equiv\Hone\cap\Htwo$ can uniquely be determined in the domain of
dependence of $\Hone\cup\Htwo$, once suitable data ---consisting of
three complex functions--- has been prescribed on
$\cZ=\Hone\cap\Htwo$. This set-up provides a basis for the use of the
\emph{characteristic initial value problem} in the investigation of a
variety of black hole configurations by inspecting the freedom in
the specification of the data on the bifurcation surface $\cZ$ of the
horizons only. In the following, we will often refer to this set-up
as \emph{R\'{a}cz's black hole holograph construction}. In fact, the
set-up introduced in \cite{Rac07, Rac14} is suitable to
host all of the \emph{stationary distorted electrovacuum black hole
spacetimes} ---within the class of solutions to the Einstein-Maxwell
equations with non-zero cosmological constant.

\medskip
As it was proposed already in \cite{Rac07, Rac14} the black hole
holograph construction should  open a new
avenue in the black hole uniqueness problem. To this end note that the
Kerr-Newman family of solutions (describing a charged, rotating black
hole) is an example of a family of exact solutions to the
Einstein-Maxwell equations satisfying these conditions, and so it
belongs this class of distorted black hole solutions. Thus, one can
naturally ask what further conditions are necessary to impose on the
horizons in order to single out the Kerr-Newman family from the more
general class, and how restrictive these conditions are.

\medskip
In the present article, we make use of a characterisation of the Kerr solution by
Killing spinors to to identify the appropriate set of
conditions on the data at the bifurcation sphere $\cZ$. Killing spinors are known to
represent \emph{hidden symmetries} of a spacetime, and the existence
of such a field on the Kerr spacetime is directly related to the
existence of the Carter constant, which allows the geodesic equations
to be completely integrated \cite{Car68b} ---see also
\cite{Car73a,WalPen70}. In recent work, it has been shown that the
existence of a Killing spinor on a spacetime, along with the
assumption of asymptotic flatness, can be used to identify the
spacetime as a member of the Kerr or Kerr-Newman families
\cite{ColVal16a}. These ideas have been used in previous work to
determine whether initial data corresponds to exact Kerr data. The
assumption on the existence of a Killing spinor can be recast as an
initial value problem, producing a set of \emph{Killing spinor initial
data equations} that must be satisfied on a spacelike initial
hypersurface. These constraint equations can be used, for example, to
determine whether the initial data set on the hypersurface corresponds
to initial data for the exact Kerr spacetime. In a similar way, in the
present article, we show it is possible to guarantee the existence of
a Killing spinor on the domain of dependence
$D(\Hone\cup\Htwo)$ of the intersecting expansion
and shear-free horizons $\Hone\cup\Htwo$ by prescribing data for the
Killing spinor and, in accordance with the black hole
holograph construction, this data need only be given on the
intersection surface $\cZ$. The only restriction on the background
spacetime is the prescription of the only gauge invariant
Weyl spinor component $\Psi_{2}$ in terms of this initial data.

\medskip
In this article, we consider the vacuum case and set the goal of
identifying the Kerr family of solutions to the Einstein equations
from the general class of \emph{stationary} distorted vacuum black
hole spacetimes. We give a set of conditions which must be satisfied
on the bifurcation sphere $\cZ$ to ensure the existence of a Killing
spinor on the domain of dependence $D(\Hone\cup\Htwo)$ of
$\Hone\cup\Htwo$ ---which is nothing but the interior of the black hole
region in the smooth setting, whereas it also contains the domain of
outer communication if analyticity is allowed--- and describe further
conditions which must be given there to single out the Kerr
solution. Our main results can be described as follows:

\begin{enumerate}[(i)]
\item \emph{We identify the conditions that need to be imposed on the
initial data surface ---comprised by two expansion- and shear-free}
horizons intersecting on a two-surface $\cZ$--- to ensure the
existence of a Killing spinor in the domain of dependence of the
horizons, $\mathcal{H}_1\cup \mathcal{H}_2$. These conditions are
stated in Lemmas \ref{KillingSpinorH1}, \ref{KillingSpinorH2},
\ref{KillingVectorZ}, \ref{KillingVectorH1} and
\ref{KillingVectorH2}. These conditions set restrictions on both the
free specifiable data for the Killing spinor on
$\mathcal{H}_1\cup\mathcal{H}_2$ and on the components of the Weyl
spinor and some of the spin connection coefficients.

\item We show that the conditions obtained in (i) can be imposed by
  satisfying a set of \emph{Killing spinor constraints} at the
  bifurcation sphere $\mathcal{Z}$. In particular, it turns out that the
  whole Killing spinor data can be propagated along
  $\mathcal{H}_1\cup\mathcal{H}_2$ from some \emph{basic Killing spinor data}
  on $\mathcal{Z}$. The Killing spinor constraints on $\cZ$ are given
  in  Proposition \ref{HSZeroReduceToZ}. 

\item Using R\'{a}cz's black hole holograph construction, it follows that
\emph{if the Killing spinor constraints are satisfied, then one can
ensure the existence of a Killing spinor everywhere in the domain of
dependence of the horizons of distorted black hole.} This result is
stated more precisely in Theorem
\ref{Theorem:ExistenceKillingSpinors}.  

\item All of the above results are local ---i.e.~independent of the
topology of $\cZ$. Note, however, if one restrict considerations to
black holes, in virtue of Hawking's black hole topology theorem
\cite{Hawk72,HE73} (see also Corollary 4.2 of \cite{Rac07} relevant
for generic distorted black holes) $\cZ$ has to have the topology of a
2-sphere. \emph{Using this assumption, it is shown that the Killing
spinor constraints imply, in particular, that the Killing
vector field ---that always comes together with the existence of a
generic Killing spinor--- gives rise to be an axial Killing vector
field on the bifurcation surface}. This result is given in
Proposition \ref{AxialSymmetry:BifurcationSphere}.

\item \emph{We show how to encode, in terms of further constraints on $\cZ$,
  that the Killing vector field determined by the Killing spinor is
  Hermitian (i.e.~real)} in the domain of dependence of
  $\mathcal{H}_1\cup \mathcal{H}_2$. The relevant conditions are spelled out in full details
  in Lemma \ref{Lemma:HermiticityRefined}.

\item \emph{We determine the most general (regular) two-metric and
associated curvature scalar on the bifurcation sphere $\cZ$ ensuring
that the Killing spinor constraints are satisfied.} This result is
presented in Proposition \ref{Proposition:MetricsZ}.

\item \emph{ Notably, the aforementioned Killing spinor constraints,
can be seen to be geometrically equivalent to the freedom one has in
choosing initial data in R\'{a}cz's black hole holograph
construction. } The corresponding results are presented in Section
\ref{DeterninancyOnGeom}.

\item \emph{It is also shown that the existence of the Killing spinor,
in the generic case, implies that the domain of dependence of
$\mathcal{H}_1\cup \mathcal{H}_2$ must be of Petrov type D.}  This
result is presented in Subsection~\ref{PetrovType} (see also
Corollary~\ref{cor: PetrovType}).

\item \emph{Finally, we also give a clear identification of that
subclass of basic initial data on $\cZ$ which gives rise to a member
of the Kerr family of spacetimes in the domain of dependence of
$\mathcal{H}_1\cup \mathcal{H}_2$.}  The basic idea behind this
calculation is to make use of Mars's invariant characterisation of the
Kerr spacetime that is summarised in Theorem
\ref{Theorem:SpinorialCharacterisationKerr}. The conditions on the
freely specifiable data leading to a development isomorphic to Kerr
are spelled out in Proposition \ref{Proposition:Kerr}.

\end{enumerate}

\subsection*{Overview} 
This paper is structured as follows: in Section
\ref{Section:KillingSpinors}, we recall the results of
\cite{ColVal16a}, illustrating how the existence of a Killing spinor
can be used to characterise the Kerr spacetime. This is done in the form of a
local result requiring the evaluation of two constants. 

\smallskip
In Section
\ref{Section:CharacteristicInitialValueProblem}, we summarise the
construction of the characteristic problem in \cite{Rac14}, used to
define the class of distorted black holes to be considered in this
article. 

\smallskip
In Section \ref{Section:KillingSpinorData}, we decompose the
wave equation for the Killing spinor into equations intrinsic to the
horizons, providing a system of transport equations for the components
of the Killing spinor. Furthermore, by finding a system of homogeneous
wave equations for a collection of zero-quantity fields and imposing
appropriate initial data for the system, we find further conditions
(differential and algebraic constraints) for the components of the
Killing spinor and their first derivatives on the bifurcate horizon
$\Hone\cup\Htwo$ ---the \emph{Killing spinor data conditions} on $\mathcal{Z}$.

\smallskip
In Section \ref{Section:ConstraintsZ}, we
investigate these constraints. We show that the conditions
intrinsic to the bifurcation sphere $\cZ$ imply a specific form for
the components of the Killing spinor. We also show that the constraints
intrinsic to $\Hone$ or $\Htwo$ satisfy ordinary differential
equations along the generators of the relevant horizons, and so can be
replaced with conditions on the bifurcation sphere, or become
redundant. In this way, conditions on the extended horizon
construction are reduced to conditions only on the bifurcation surface
$\cZ$. 

\smallskip
In Section \ref{AxialSymmetry:BifurcationSphere} it is shown
that the Killing spinor data conditions on $\cZ$ imply that the bifurcation sphere is an
axially symmetri 2-surface. 

\smallskip
In Section \ref{Section:HermiticityKillingVector} further conditions
on $\cZ$ are obtained which ensure that the Killing vector associated
to the Killing spinor is a Hermitian (i.e. real) vector. 

\smallskip
Section \ref{Section:DeterminationKappa1} discusses the more general
solution to the constraints on $\cZ$. This solution fixes the Gaussian
curvature of the bifurcation sphere $\mathcal{Z}$ and, in turn, also
the functional form of the metric of the 2-surface and the spin
coefficient $\tau$. Using this metric one can use R\'{a}cz's black hole
holograph construction to (locally in a neighbourhood of $\cZ$) obtain
the most general family of vacuum (with vanishing cosmological
constant) distorted black holes with Killing spinors.

\smallskip
Section \ref{Section:Kerr} is devoted to the task of identifying the Kerr out of
the class of spacetimes constructed in the previous section. The main
tools for this is a characterisation of Kerr spacetime in terms of
Killing spinors based on a more general result by Mars \cite{Mar00} ---see Theorem
\ref{Theorem:SpinorialCharacterisationKerr} in Section \ref{Section:KillingSpinors}. 

\smallskip
We provide some concluding remarks in Section \ref{Section:Conclusions}.

\subsection*{Notation and conventions}
In what follows $(\mathcal{M},\bmg)$ will denote a vacuum
spacetime. The metric $\bmg$ is assumed to have signature
$(+,-,-,-)$. The Latin letters $a,\,b,\ldots$ are used as abstract
tensorial spacetime indices. The script letters $\mathcal{A},\,
\mathcal{B},\ldots$ are used to denote \emph{angular coordinates}. The
Latin capital letters $A,\,B,\ldots$ are used as abstract spinorial
indices.

\medskip
We make systematic use of the standard Newman-Penrose (NP) formalism as discussed
in, say, \cite{PenRin84,Ste91}. Standard NP notation and conventions will be used ---see
e.g. \cite{Ste91}. In particular, if $\eta$ is a smooth scalar on a 2-surface $\cZ$ with
spin-weight $s$, the action of the $\eth$ and  $\overline{\eth}$ operators on
is defined by
\begin{equation}
\eth \eta = \delta \eta + s \,(\overline{\alpha}-\beta)\,\eta\,, \qquad \overline{\eth}
\eta = \overline{\delta}\,\eta - s\,(\alpha-\overline{\beta})\,\eta\,.
\label{eths}
\end{equation}
One also has that
\begin{equation}
( \overline{\eth} \eth - \eth \overline{\eth})\, \eta = s\,K_{\mathcal{G}} \,\eta\,,
\label{laplace}
\end{equation}
where $K_{\mathcal{G}}$ denotes the Gaussian curvature of $\cZ$.

\medskip
We shall also make use of the representation of $\eth$ and
$\overline{\eth}$ operators following the construction in Section 4.14
of \cite{PenRin84}. In particular, by choosing an arbitrary
holomorphic function $z$ the 2-metric $\bm \sigma$ on $\cZ$ can be
given as
\begin{equation}\label{metric_sigma}
{\bm \sigma} = -\frac{1}{P\overline{P}}\,\big( \mathbf{d}z\otimes
\mathbf{d}\overline{z} + \mathbf{d}\overline{z}\otimes \mathbf{d}z  \big),
\end{equation}
where $P$ is a complex function on $\mathcal{Z}$. If $\mathcal{Z}$ was
the unit sphere $\mathbb{S}^2$, then the coefficient $P$ would have
the form $P=\frac{1}{2}(1+z\overline{z})$.

\medskip
The operators $\eth$ and $\overline\eth$---acting on a scalar $\eta$ of spin-weight $s$---are defined as (see (4.14.3)-(4.14.4) in \cite{PenRin84})
\begin{equation}\label{eth_P}
\eth \eta \equiv P\overline{P}^{-s} \frac{\partial}{\partial {z}}\big(\overline{P}^s \eta 
\big)\,, \qquad \overline\eth \eta \equiv \overline{P}P^{s} \frac{\partial}{\partial
	\overline{z}}\big(  P^{-s} \eta\big)\,. 
\end{equation}
As the complex coordinates $z$ and $\overline{z}$ have no spin-weight direct 
calculations readily verify that 
\[
\eth {z} =P, \qquad \eth \overline{z} =0\,, 
\]
and that 
\[
\eth \overline{P} =0\,, \qquad \overline{\eth} P=0\,.
\]

\medskip
Note, finally, that in the generic setup for the characteristic
initial value problem the initial data is given on a pair of
intersecting null hypersurfaces $\Hone$ and $\Htwo$. The solution to
the Einstein's equations is known to exist in certain domains. We
shall denote the domain of dependence of $\mathcal{H}_1\cup
\mathcal{H}_2$ by $D(\mathcal{H}_{1}\cup\mathcal{H}_{2})$. The extent
of $D(\mathcal{H}_{1}\cup\mathcal{H}_{2})$ is known to depend on the
techniques used in verifying the existence of solutions. According to
the claims in \cite{Ren90} it is covering only a neighbourhood
$\mathcal{O}$ of the spacelike 2-surface $\cZ$ (indicated by the blue
coloured area on Fig.\ref{fig}). Nevertheless, when techniques of
energy estimates are used, as e.g.~in \cite{Luk12}, the domain of
dependence can be seen to be larger covering (at least certain)
neighbourhood of the two intersecting null hypersurfaces $\Hone$ and
$\Htwo$ (indicated by the green coloured area---including the blue
one---on Fig.\ref{fig}). Hereafter, we shall refer to the domain of
dependence without explicit mentioning of its extent. This is done to
simplify the presentation. The size of this domain does not play a
significant role in our discussions.

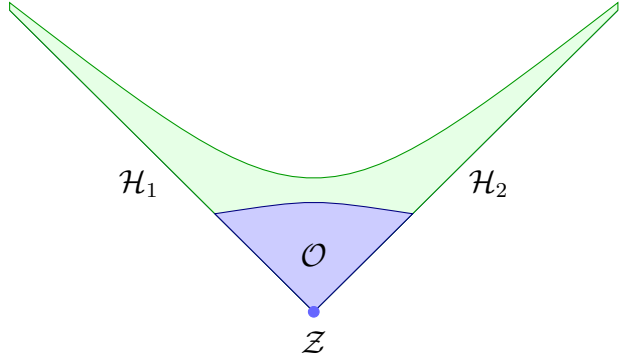
\begin{figure}\label{fig}
	\centering
	\begin{tikzpicture}
	\draw [blue!50!black] (0,0) --(4,4);
	\draw [blue!50!black] (0,0) --(-4,4);
	\filldraw[fill=green!10!white, draw=green!60!black]
	(0,0) -- (-4,4) -- (-4,4.1) .. controls (0,1) and (0,1) .. (4,4.1) -- (4,4) -- cycle;
	\filldraw[fill=blue!20!white, draw=blue!50!black]
	(0,0) -- (-1.3,1.3) .. controls (0,1.5) and (0,1.5) .. (1.3,1.3) -- cycle;
	\draw (2.3,1.7) node {\large $\Htwo$};
	\draw (-2.3,1.7) node {\large $\Hone$};
	\draw (0,0.75) node {\large $\mathcal{O}$};
	\draw (0,-0.4) node {\large $\cZ$};
	\filldraw [blue!60!white] (0,0) circle (2pt);
	\end{tikzpicture}
	\caption{
		{
	The possible extents of the domain of dependence of the initial data surface, comprised by a pair of null hypersurfaces $\Hone$ and $\Htwo$  intersecting on a two-surface $\cZ=\mathcal{H}_1\cap\mathcal{H}_2$ in the characteristic initial value problem, is indicated. As described in the main text these depend on the techniques used to control the existence of solutions to the specific initial value problem.}
	}
\end{figure}

\section{An invariant characterisation of the Kerr spacetime}
\label{Section:KillingSpinors}
In this section we provide a brief overview of a characterisation of
the Kerr spacetime by means of Killing spinors.

\subsection{Killing spinors}

A \emph{Killing spinor} is a symmetric rank 2 spinor
$\kappa_{AB}=\kappa_{(AB)}$ satisfying the \emph{Killing spinor
  equation}
\begin{equation}
\nabla_{A'(A} \kappa_{BC)}=0.
\label{KillingSpinorEquation}
\end{equation}
Given a spinor $\kappa_{AB}$, the spinor 
\begin{equation}
\xi_{AA'} \equiv
\nabla^P{}_{A'} \kappa_{AP}
\label{KillingSpinorToKillingVector}
\end{equation}
 is the spinorial counterpart of a
(possibly complex) Killing vector. Thus, it satisfies the equation
\[
\nabla_{AA'} \xi_{BB'} + \nabla_{BB'} \xi_{AA'}=0.
\]

\medskip
Conditions on a spacelike hypersurface $\mathcal{S}$ ensuring the
existence of a Killing spinor on the future domain of dependence of
$\mathcal{S}$, $D^+(\mathcal{S})$, have
been analysed in \cite{BaeVal10b,GarVal08b}. In view of the subsequent
discussion it will be  convenient to define the following
\emph{zero quantities:}
\begin{eqnarray*}
&& H_{A'ABC} \equiv 3 \nabla_{A'(A} \kappa_{BC)}\,, \label{eq: defH}\\
&& S_{AA'BB'}
\equiv \nabla_{AA'} \xi_{BB'} + \nabla_{BB'}\xi_{AA'}\,.\label{eq: defS}
\end{eqnarray*}
A straightforward consequence of the Killing spinor equation is the
wave equation
\begin{equation}
\square \kappa_{AB} +\Psi_{ABCD} \kappa^{CD} =0,
\label{KillingSpinorWaveEquation}
\end{equation}
where $\Psi_{ABCD}$ denotes the Weyl spinor.

\medskip
A calculation then yields the following:

\begin{proposition}
\label{Proposition:KillingSpinorPropagationSystem}
Let $\kappa_{AB}$ be a solution to equation
\eqref{KillingSpinorWaveEquation}. Then the spinor fields $H_{A'ABC}$
and $S_{AA'BB'}$ satisfy the system of wave equations
\begin{subequations}
\begin{eqnarray}
&& \square H_{AA'BC} = 4 \big( \Psi_{(AB}{}^{PQ} H_{C)PQA'} +
\nabla_{(A}{}^{Q'} S_{BC)Q'A'} \big),  \label{WaveEquationH}\\
&& \square S_{AA'BB'} = -\nabla_{AA'} ( \Psi_B{}^{PQR} H_{B'PQR}) -
\nabla_{BB'} (\Psi_A{}^{PQR} H_{A'PQR})  \nonumber\\
&& \hspace{3cm}+ 2 \Psi_{AB}{}^{PQ}
S_{PA'QB'} + 2 \overline{\Psi}_{A'B'}{}^{P'Q'} S_{AP'BQ'}. \label{WaveEquationS}
\end{eqnarray}
\end{subequations}
\end{proposition}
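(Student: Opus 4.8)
The plan is to derive the two wave equations by applying the d'Alembertian $\square = \nabla^{AA'}\nabla_{AA'}$ to the definitions of $H_{A'ABC}$ and $S_{AA'BB'}$ and then systematically commute derivatives, reducing all curvature terms that appear to expressions in the Weyl spinor $\Psi_{ABCD}$ (and its conjugate) using the vacuum field equations. The key technical ingredients are the standard spinor commutator identities (the $\square_{AB}$ and $\square_{A'B'}$ operators acting on spinors of arbitrary valence, as in Chapter 4 of \cite{PenRin84}), which express $\nabla_{A'(A}\nabla^{A'}_{\;B)}$ and $\nabla_{A(A'}\nabla^A_{\;B')}$ in terms of curvature spinors; in vacuum these reduce to the Weyl spinor alone. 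Throughout one uses that $\kappa_{AB}$ solves \eqref{KillingSpinorWaveEquation}, i.e.\ $\square\kappa_{AB} = -\Psi_{ABCD}\kappa^{CD}$, rather than the Killing spinor equation itself, so that $H_{A'ABC}$ and $S_{AA'BB'}$ are treated as genuinely nonzero fields. This is the point that makes the computation yield a homogeneous-in-$(H,S)$ right-hand side: the system closes on the pair $(H,S)$.

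First I would establish the relation between $S_{AA'BB'}$ and $H_{A'ABC}$: differentiating the definition $\xi_{AA'} = \nabla^P{}_{A'}\kappa_{AP}$ and using the irreducible decomposition of $\nabla_{AA'}\kappa_{BC}$ into its totally symmetric part (proportional to $H$) and its trace part, one finds that $S_{AA'BB'}$ can be written, modulo terms built from $H$ and its first derivative, in a form that lets the two zero-quantities be handled together. Concretely, $\nabla_{BB'}\xi_{AA'}$ expands into a symmetrised second derivative of $\kappa$ plus a divergence term; symmetrising in $AA'\leftrightarrow BB'$ and substituting the wave equation for the ``box'' piece produces the $\Psi\,S$ and $\overline{\Psi}\,S$ terms in \eqref{WaveEquationS} together with divergence-of-$H$ terms that reorganise into $\nabla_{AA'}(\Psi_B{}^{PQR}H_{B'PQR})$ after using the Bianchi identity $\nabla^A{}_{A'}\Psi_{ABCD} = 0$ (vacuum) to move derivatives off $\Psi$.

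For \eqref{WaveEquationH} I would apply $\square$ directly to $H_{AA'BC} = 3\nabla_{A'(A}\kappa_{BC)}$. Commuting the outer $\square$ past the $\nabla_{A'(A}$ gives a term with $\nabla_{A'(A}\square\kappa_{BC)}$ — into which the wave equation \eqref{KillingSpinorWaveEquation} is substituted, generating $\nabla_{A'(A}(\Psi_{BC)}{}^{PQ}\kappa_{PQ})$ that, after Leibniz and re-expressing $\nabla\kappa$ in terms of $H$ (the $\Psi\nabla\kappa$ piece) and $\nabla\Psi$ via Bianchi, contributes the $\Psi_{(AB}{}^{PQ}H_{C)PQA'}$ term — plus commutator (curvature) terms of the schematic form $\Psi * \nabla\kappa$, which again resolve into $\Psi * H$ and into the $\nabla_{(A}{}^{Q'}S_{BC)Q'A'}$ term once the trace part of $\nabla\kappa$ is recognised as $\xi$ and its symmetrised derivative as $S$. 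The bookkeeping of index symmetrisations and the repeated use of the curvature conventions of \cite{PenRin84} is the main obstacle: there is no conceptual difficulty, but one must track signs and numerical factors (the $4$ and the $2$'s) carefully and verify that every term not proportional to $H$, $S$, or a derivative thereof cancels — in particular all terms linear in $\kappa$ with two factors of curvature, and all terms with $\nabla\Psi$ that are not of divergence type, must drop out by the vacuum Bianchi identity. I expect the cleanest route is to carry out the calculation in the GHP/NP-covariant spinor formalism, checking the final identity is conformally/gauge consistent as a sanity check.
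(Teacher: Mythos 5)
Your proposal is correct and follows essentially the same route as the paper, which itself offers no more detail than ``a calculation then yields the following'' --- namely, a direct spinorial computation that applies $\square$ to the definitions of $H_{A'ABC}$ and $S_{AA'BB'}$, commutes derivatives via the curvature operators of \cite{PenRin84}, substitutes the wave equation \eqref{KillingSpinorWaveEquation} (rather than the Killing spinor equation, so that the system closes homogeneously on the pair $(H,S)$, as you rightly emphasise), and eliminates the stray $\nabla\Psi$ terms with the vacuum Bianchi identity. The only thing left undone is the explicit index bookkeeping that fixes the numerical coefficients in \eqref{WaveEquationH}--\eqref{WaveEquationS}, which the paper likewise omits.
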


\begin{remark}
{\em As the above equations constitute a system
  of  homogeneous linear wave equations for the fields $H_{A'ABC}$ and $S_{AA'BB'}$, it
follows that they readily imply conditions for the
existence of a Killing spinor in the development of a given initial
value problem for the Einstein field equations.} 
\end{remark}

\subsection{The Killing form and the Ernst potential}
In this section let $\xi_{AA'}$ denote the spinorial counterpart of a
real Killing vector $\xi^a$. Accordingly, $\xi_{AA'}$ is assumed to be
Hermitian. The spinorial counterpart of the Killing form of $\xi^a$,
namely, 
\[
H_{ab} \equiv \nabla_{[a} \xi_{b]} = \nabla_a \xi_b
\]
is given by
\[
H_{AA'BB'} \equiv \nabla_{AA'} \xi_{BB'}.
\]
As a consequence of the antisymmetry in the pairs ${}_{AA'}$ and
${}_{BB'}$, the latter can be decomposed into irreducible parts as 
\[
H_{AA'BB'}\equiv \eta_{AB}\epsilon_{A'B'} + \overline{\eta}_{A'B'} \epsilon_{AB},
\]
where $\eta_{AB}$ is a symmetric spinor. In the following we will make
use of the self-dual part of $H_{AA'BB'}$, denoted by
$\mathcal{H}_{AA'BB'}$, and defined by
\[
\mathcal{H}_{AA'BB'} \equiv H_{AA'BB'} + \mbox{i} H^*_{AA'BB'}.
\]
It follows readily that 
\[
\mathcal{H}_{AA'BB'} = 2 \eta_{AB} \epsilon_{A'B'}.
\]
The spinor $\eta_{AB}$ can be expressed in terms of $\xi_{AA'}$ as
\[
\eta_{AB} = \frac{1}{2}\nabla_{AA'} \xi_B{}^{A'}.
\]
If, moreover, $\xi_{AA'}$ is obtained from a Killing spinor
$\kappa_{AB}$ using formula \eqref{KillingSpinorToKillingVector}, then
one has that
\[
\eta_{AB} = -\frac{3}{4}\Psi_{ABCD} \kappa^{CD}.
\]
For later use we also define
\[
\mathcal{H}^2 \equiv \mathcal{H}_{ab}\mathcal{H}^{ab} =8 \eta_{AB}\eta^{AB}.
\]

\medskip
The \emph{Ernst form} of the Killing vector $\xi^a$ is defined as
\[
\chi_a = 2 \xi^b\mathcal{H}_{ba}.
\]
It is well-known that in vacuum, the Ernst form closed and thus,
locally exact. Therefore, there exists a scalar, the \emph{Ernst
  potential} $\chi$, that satisfies
\[
\chi_a = \nabla_a \chi.
\]
A calculation then readily yields that
\[
\chi_{AA'} = 4 \eta_{AB} \xi^B{}_{A'} = 3 \kappa^{CF}\Psi_{ABCF}
\nabla_{DA'} \kappa^{DB}.
\]

\subsection{Mars's characterisation of the Kerr spacetime}

In \cite{Mar00} is has been shown that the Kerr spacetime can be
characterised in terms of an alignment condition of the Weyl tensor
and the Killing form of the stationary Killing vector of the
spacetime. This invariant characterisation admits both  local and 
semi-global versions. In \cite{ColVal16a} it has been shown that the
alignment condition follows if the spacetime is assumed to have a
Killing spinor. More precisely, one has the following:

\begin{theorem}
\label{Theorem:SpinorialCharacterisationKerr}
Let $(\mathcal{M},\bmg)$ denote a smooth vacuum spacetime endowed with
a Killing spinor $\kappa_{AB}$ satisfying $\kappa_{AB}\kappa^{AB}\neq 0$,
such that the spinor $\xi_{AA'}\equiv \nabla^B{}_{A'} \kappa_{AB}$ is
Hermitian. Then there exist two complex constants $\mathfrak{l}$ and
$\mathfrak{c}$ such that 
\begin{equation}
\mathcal{H}^2 = - \mathfrak{l}( \mathfrak{c}-\chi)^4.
\label{Condition:Mars}
\end{equation}
If, in addition, $\mathfrak{c}=1$ and $\mathfrak{l}$ is real positive,
then $(\mathcal{M},\bmg)$ is locally isometric to the Kerr spacetime. 
\end{theorem}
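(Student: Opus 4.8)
\emph{Proof strategy.} The plan is threefold. First, I would show that the hypotheses force the Weyl spinor to be aligned with the Killing spinor and the spacetime to be of Petrov type D. Second, I would exploit this alignment, together with the Hermiticity of $\xi_{AA'}$ and the vacuum field equations, to write both $\mathcal{H}^2$ and the Ernst potential $\chi$ as rigid functions of a single scalar built from $\kappa_{AB}$, and then eliminate that scalar to obtain \eqref{Condition:Mars}. Third, I would observe that, once \eqref{Condition:Mars} holds, the stated normalisations of $\mathfrak{c}$ and $\mathfrak{l}$ are precisely the hypotheses of Mars's invariant characterisation of the Kerr spacetime in \cite{Mar00}, which is then invoked to conclude the local isometry.

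For the first step, one differentiates the Killing spinor equation \eqref{KillingSpinorEquation} once more and commutes covariant derivatives by means of the spinorial Ricci identities; in vacuum (so that $\Phi_{ABA'B'}=0$, $\Lambda=0$) this yields the purely algebraic integrability condition $\Psi_{(ABC}{}^{E}\,\kappa_{D)E}=0$. Since $\kappa_{AB}\kappa^{AB}\neq 0$, the spinor $\kappa_{AB}$ has two distinct principal spinors $o_A$, $\iota_A$, which may be normalised by $o_A\iota^A=1$, so that $\kappa_{AB}=2\kappa_{1}\,o_{(A}\iota_{B)}$ with $\kappa_{1}$ nowhere vanishing. The integrability condition then forces both $o_A$ and $\iota_A$ to be repeated principal spinors of $\Psi_{ABCD}$, hence $\Psi_{ABCD}=6\Psi_{2}\,o_{(A}o_{B}\iota_{C}\iota_{D)}$ and $(\mathcal{M},\bmg)$ is of Petrov type D (assuming $\Psi_{2}\not\equiv 0$; the conformally flat, i.e.\ flat, case is degenerate and is handled separately, \eqref{Condition:Mars} then holding only trivially with $\chi$ constant). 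Feeding this back into \eqref{KillingSpinorEquation} and using the vacuum Bianchi identities in GHP form one recovers the classical relation $\kappa_{1}\propto\Psi_{2}^{-1/3}$. In particular $\eta_{AB}=-\tfrac34\Psi_{ABCD}\kappa^{CD}$ is proportional to $o_{(A}\iota_{B)}$, so that $\Psi_{ABCD}\propto\eta_{(AB}\eta_{CD)}$: the Weyl tensor is aligned with the self-dual Killing form $\mathcal{H}_{AA'BB'}=2\eta_{AB}\epsilon_{A'B'}$, which is exactly the alignment hypothesis used by Mars. The assumption $\kappa_{AB}\kappa^{AB}\neq0$ is what guarantees here that $o_A$ and $\iota_A$ are linearly independent and that $\eta_{AB}$ is non-null, so that $\mathcal{H}^2\neq0$.

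For the second step, in the NP frame adapted to $o_A,\iota_A$ one computes $\mathcal{H}^2=8\eta_{AB}\eta^{AB}$ as a nonzero constant multiple of $\Psi_{2}^{4/3}$ (equivalently of $\kappa_{1}^{-4}$). The Hermiticity of $\xi_{AA'}\equiv\nabla^{B}{}_{A'}\kappa_{AB}$ guarantees that $\xi^{a}$ is a genuine real Killing vector, so that the Ernst form is closed and the Ernst potential $\chi$ is well defined as in the preceding subsection; expanding $\chi_{AA'}=4\eta_{AB}\xi^{B}{}_{A'}$ in the adapted frame and using the GHP equations for the frame and for $\Psi_2$ one finds that $\nabla_{a}\chi$ is proportional to $\nabla_{a}\big(\Psi_{2}^{1/3}\big)$. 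Integrating, $\chi=\mathfrak{c}-(\mathrm{const})\,\Psi_{2}^{1/3}$ for some complex constant $\mathfrak{c}$, and eliminating $\Psi_{2}$ between this and the formula for $\mathcal{H}^2$ produces $\mathcal{H}^2=-\mathfrak{l}\,(\mathfrak{c}-\chi)^4$ for a second complex constant $\mathfrak{l}$. That $\mathfrak{c}$ and $\mathfrak{l}$ are genuine constants, rather than slowly varying functions, is checked by differentiating the candidate identity and verifying, via the vacuum equations and the Killing spinor equation, that the derivative vanishes. With \eqref{Condition:Mars} in hand, the additional hypotheses $\mathfrak{c}=1$ and $\mathfrak{l}\in\mathbb{R}$, $\mathfrak{l}>0$, are precisely the conditions under which Mars \cite{Mar00} proves that a vacuum spacetime carrying a Killing vector whose Killing form is aligned with the Weyl tensor in this way is locally isometric to Kerr; invoking that theorem completes the argument.

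The step I expect to be the main obstacle is the bookkeeping of the second step: checking that the single scalar governing $\mathcal{H}^2$, $\chi$ and $\Psi_{2}$ enters each of them in exactly the rigid way demanded by \eqref{Condition:Mars}, and in particular that the proportionality factors are constants, requires a systematic use of the full GHP/NP system for a vacuum type-D metric together with the Killing spinor equation and careful tracking of numerical factors. The Hermiticity assumption on $\xi_{AA'}$ is essential here, since without it the Ernst form need not be exact and $\chi$ is not even defined; this is also where the hypothesis $\kappa_{AB}\kappa^{AB}\neq0$ re-enters, guaranteeing that $\mathcal{H}^2$ is nowhere zero so that the manipulations leading to \eqref{Condition:Mars} are legitimate.
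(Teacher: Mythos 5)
This theorem is quoted in the paper from \cite{ColVal16a} and \cite{Mar00} without proof, so there is no in-paper argument to compare against; your reconstruction (integrability condition $\Psi_{(ABC}{}^{E}\kappa_{D)E}=0$ forcing type D and alignment, the constancy of $\kappa_1^3\Psi_2$ giving $\mathcal{H}^2\propto\kappa_1^{-4}$ and $\mathfrak{c}-\chi\propto\kappa_1^{-1}$, then invoking Mars) is precisely the route taken in those references. It is also consistent with the explicit formulas the paper later uses in Section \ref{Section:ApplyingMars}, namely $\mathcal{H}^2=-36\kappa_1^2\Psi_2^2$, $\chi=\mathfrak{x}-18\kappa_1^2\Psi_2$ and $\mathfrak{l}=36/(18^4\mathfrak{K}^2)$, so I see no gap.
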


\section{The characteristic initial value problem on expansion and
  shear-free hypersurfaces}
\label{Section:CharacteristicInitialValueProblem}

In \cite{Rac14}, by adopting and slightly generalising results of
\cite{Fri81a}, a systematic analysis of the null characteristic
initial value problem for the Einstein-Maxwell equations in terms of
the Newman-Penrose formalism was carried out. In particular, it was
shown how to obtain a system of reduced evolution equations forming a
first order symmetric hyperbolic system of equations. Moreover, it was
shown that the solutions to these evolution equations imply, in turn,
a solution to the full Einstein-Maxwell system provided that the inner
(constraint) equations on the initial null hypersurfaces hold. For
this type of setting, the theory for the characteristic initial value
problem developed in \cite{Ren90} applies and ensures the local
existence and uniqueness of a solution of the reduced evolution equations.

 The general results described in the previous paragraph were then
used to investigate electrovacuum spacetimes $(\mathcal{M},\bmg,{\bm
F})$ possessing a pair of null hypersurfaces $\mathcal{H}_1$ and
$\mathcal{H}_2$ generated by expansion and shear-free geodesically
complete null congruences, with intersection on a two dimensional spacelike
hypersurface $\mathcal{Z}\equiv \mathcal{H}_1\cap \mathcal{H}_2$. The
configuration formed by $\mathcal{H}_1$ and $\mathcal{H}_2$ constitute
a \emph{bifurcate horizon}. In general, the freely specifiable data on
$\mathcal{Z}$ do not possess any symmetry in addition to the horizon
Killing vector (implied by the non-expanding character of these
horizons). Thus, these spacetimes constitute the generic class of 
\emph{stationary} distorted electrovacuum spacetimes.  The key
observation resulting from the analysis in \cite{Rac14} is, for the
\emph{vacuum case}, summarised in the following:

\begin{theorem}
\label{Theorem:ExistenceCIVPSpacetime}
Suppose that $(\mathcal{M},\bmg)$ is a vacuum spacetime with a
vanishing Cosmological constant possessing a pair of null
hypersurfaces $\mathcal{H}_1$ and $\mathcal{H}_2$ generated by
expansion and shear-free geodesically complete null congruences,
intersecting on a 2-dimensional spacelike hypersurface
$\mathcal{Z}\equiv \mathcal{H}_1 \cap \mathcal{H}_2$. Then, the metric
$\bmg$ is uniquely determined (up to diffeomorphisms) on 
a neighbourhood $\mathcal{O}$ of $\mathcal{Z}$ contained in
the domain of
dependence $D(\mathcal{H}_1 \cap \mathcal{H}_2)$ of $\mathcal{H}_1$
and $\mathcal{H}_2$, once a complex vector field $\zeta^{\mathcal{A}}$
determining the induced metric ${\bm
\sigma}$ on $\mathcal{Z}$ and the spin connection coefficient $\tau$
are specified on $\mathcal{Z}$.
\end{theorem}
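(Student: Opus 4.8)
The plan is to set up the null characteristic initial value problem in the Newman--Penrose formalism adapted to the bifurcate horizon $\mathcal{H}_1\cup\mathcal{H}_2$, following the scheme of \cite{Fri81a} and \cite{Rac14}. First I would introduce a pair of null tetrads $\{l^a,n^a,m^a,\bar m^a\}$ in which $l^a$ is tangent to (and affinely parametrises) the generators of $\mathcal{H}_2$ and $n^a$ does the same for $\mathcal{H}_1$, while $m^a,\bar m^a$ span the tangent space of the cross-sections; on $\mathcal{Z}$ the vector $m^a$ is determined by the complex vector field $\zeta^{\mathcal{A}}$ that encodes the 2-metric $\bm\sigma$ via $\sigma = m_a\bar m_b + \bar m_a m_b$. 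The expansion- and shear-free hypotheses translate into the vanishing of the NP spin coefficients $\rho,\sigma$ on $\mathcal{H}_2$ and $\mu,\lambda$ on $\mathcal{H}_1$; together with geodesic completeness and the standard gauge fixing (e.g.\ $\kappa=\epsilon=\pi=0$ on $\mathcal{H}_2$, $\nu=\gamma=\tau'=0$ on $\mathcal{H}_1$, parallel propagation of the tetrad) this pins down most of the connection freely and shows that the only genuinely free datum on $\mathcal{Z}$ beyond $\zeta^{\mathcal{A}}$ is the spin coefficient $\tau$. I would then invoke the general result of \cite{Rac14} (itself resting on \cite{Fri81a}) that the reduced NP equations form a first-order symmetric hyperbolic system on $D(\mathcal{H}_1\cap\mathcal{H}_2)$ for which \cite{Ren90} gives local existence and uniqueness in a neighbourhood $\mathcal{O}$ of $\mathcal{Z}$.

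The key steps, in order, are: (1) show that the intrinsic (constraint) equations on $\mathcal{H}_1$ and on $\mathcal{H}_2$ are \emph{hierarchical transport equations} along the generators, so that specifying $\bm\sigma$ (equivalently $\zeta^{\mathcal{A}}$) and $\tau$ on $\mathcal{Z}$ determines, by integration along the null generators, all remaining NP quantities — connection coefficients, curvature components $\Psi_i$ — on the whole of $\mathcal{H}_1\cup\mathcal{H}_2$; here one uses the vacuum Bianchi identities and the commutator relations, and the vanishing of $\rho,\sigma$ resp.\ $\mu,\lambda$ is what closes the hierarchy (in particular the Gauss equation of $\mathcal{Z}$ fixes $\Psi_2$ on $\mathcal{Z}$ in terms of $K_{\mathcal{G}}$). (2) Feed this characteristic data into the symmetric hyperbolic reduced evolution system and apply the Rendall--type existence theorem \cite{Ren90} to obtain a solution on $\mathcal{O}$. (3) Show the \emph{propagation of the constraints}: the subsidiary (zero-quantity) system measuring the failure of the full NP--Einstein equations satisfies a homogeneous symmetric hyperbolic (or wave) system with vanishing initial data on $\mathcal{H}_1\cup\mathcal{H}_2$, hence vanishes on $\mathcal{O}$, so the reduced solution is a genuine vacuum metric. (4) Uniqueness up to diffeomorphism: two solutions with the same $(\zeta^{\mathcal{A}},\tau)$ differ only by the residual tetrad/coordinate gauge freedom, which the above normalisation exhausts.

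The main obstacle I anticipate is step (1): one has to verify carefully that the constraint hierarchy on each horizon actually \emph{closes} with only $\zeta^{\mathcal{A}}$ and $\tau$ as free data on $\mathcal{Z}$ — i.e.\ that every other NP scalar satisfies an ODE along the generators whose initial value on $\mathcal{Z}$ is either prescribed by $\tau$ and $\bm\sigma$ or forced by the constraint equations themselves (the subtle point being the interplay of the $\eth$-equations on the cross-sections with the transport equations, and the role of geodesic completeness in ensuring the integration is global along each generator). Once this bookkeeping is done, the rest is an application of the standard symmetric-hyperbolic machinery together with constraint propagation, exactly as in \cite{Fri81a,Rac14}, and the theorem follows. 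I would present steps (2)--(4) briefly by citing \cite{Rac14, Ren90, Fri81a} and devote the detailed argument to step (1).
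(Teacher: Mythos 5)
Your proposal follows essentially the same route as the paper, which in fact quotes this theorem from \cite{Rac14} without reproof and summarises exactly the ingredients you describe (Gaussian null coordinates, the gauge conditions \eqref{NPGaugeCondition1}--\eqref{NPGaugeCondition4}, the hierarchy of solved constraints in Table \ref{Table:CharacteristicInitialData} showing that everything on $\Hone\cup\Htwo$ is determined by $\zeta^{\mathcal{A}}$ and $\tau$ on $\cZ$, and the appeal to \cite{Ren90} for the reduced symmetric hyperbolic system together with constraint propagation). The only caveat is a harmless labelling swap: in the paper's conventions $l^a=(\partial_r)^a$ is tangent to the generators of $\Hone$ and $n^a$ to those of $\Htwo$, so the expansion- and shear-free conditions read $\rho=\sigma=0$ on $\Hone$ and $\mu=\lambda=0$ on $\Htwo$, opposite to your assignment.
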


\subsection{Summary of the construction}
In the remainder of this article we will require further information
concerning the construction in \cite{Rac14}. Throughout, let
$(\mathcal{M},\bmg)$ denote a vacuum spacetime and let $\mathcal{H}_1$
and $\mathcal{H}_2$ denote two null hypersurfaces in
$(\mathcal{M},\bmg)$ intersecting on a spacelike 2-surface
$\mathcal{Z}$.

\begin{remark}
{\em In the remaining of this section, the topology of $\mathcal{Z}$ will
not be relevant in the discussion. The situation will, however, change
when we attempt to single out the Kerr spacetime.}
\end{remark}

Let $n^a$ denote a smooth future-directed null vector
on $\mathcal{Z}$ tangent to $\mathcal{H}_2$, which is extended to
$\mathcal{H}_2$ by requiring it to satisfy $n^b \nabla_b n_a=0$ on
$\mathcal{H}_2$. Moreover, let $u$ be an affine parameter along the
null generators of $\mathcal{H}_2$, so that $u=0$ on $\mathcal{Z}$
and $\mathcal{Z}_u$ are the associated 1-parameter family of smooth cross
sections of $\mathcal{H}_2$. We choose a further null vector $l^a$ as
the unique future-directed null vector field on $\mathcal{H}_2$ which
is orthogonal to the 2-dimensional cross sections $\mathcal{Z}_u$ and
satisfies the normalisation condition $n_{a}l^{a}=1$. Consider now the
null geodesics starting on $\mathcal{H}_2$ with tangent
$l^a$. Since $\mathcal{H}_2$ is assumed to be smooth and the vector
fields $n^a$ and $l^a$ are smooth on $\mathcal{H}_2$ by construction,
these geodesics do not intersect in a sufficiently small open
neighbourhood $\mathcal{O}\subset \mathcal{M}$ of $\mathcal{H}_2$. Let
now $r$ denote the affine parameter along the null geodesics starting
on $\mathcal{H}_2$ with tangent $l^a$, chosen such
that $r=0$ of $\mathcal{H}_2$. By construction one has that $l^a
=(\partial/\partial r)^a$. The affine parameter defines a smooth
function $r:\mathcal{O}\rightarrow \mathbb{R}$. The function
$\mathcal{H}_2\rightarrow \mathbb{R}$ defined by the affine parameter
of the integral curves of $n^a$ can be extended to a smooth function
$u:\mathcal{O}\rightarrow\mathbb{R}$ by requiring it to be constant
along the null geodesics with tangent $l^a$. 

The construction of the previous paragraph is complemented by choosing
suitable coordinates $(x^{\mathcal{A}})$ on patches of $\mathcal{Z}$
and extending them to $\mathcal{O}$ by requiring them to be constant
along the integral curves of the vectors $l^a$ and $n^a$. In this
manner one obtains a system of \emph{Gaussian null coordinates}
$(u,r,x^{\mathcal{A}})$ on patches of $\mathcal{O}$. In
each of these patches the spacetime metric $\bmg$ takes the form
\begin{equation}
\bmg = g_{00} \mathbf{d}u\otimes \mathbf{d}u +( \mathbf{d}u\otimes
\mathbf{d}r+\mathbf{d}r\otimes\mathbf{d}u) + g_{0\mathcal{A}} (
\mathbf{d}u\otimes\mathbf{d}x^{\mathcal{A}} + \mathbf{d}x^{\mathcal{A}}\otimes\mathbf{d}u) +
g_{\mathcal{A}\mathcal{B}}
\mathbf{d}x^{\mathcal{A}}\otimes\mathbf{d}x^{\mathcal{B}},
\label{LineElement}
\end{equation}
where $g_{00}$, $g_{0\mathcal{A}}$, $g_{\mathcal{A}\mathcal{B}}$ are
smooth functions of the coordinates $(u,r,x^{\mathcal{A}})$ such that 
\begin{equation}
g_{00}= g_{0\mathcal{A}}=0, \quad \mbox{on} \quad \mathcal{H}_2,
\label{Metric:BehaviourH2}
\end{equation}
and
$g_{\mathcal{A}\mathcal{B}}$ is a negative definite $2\times 2$
matrix. Observe that by construction in $\mathcal{O}$ the $u=0$ and $r=0$ hypersurfaces coincide with $\mathcal{H}_1$ and $\mathcal{H}_2$, respectively. 

\medskip

In the following it will be convenient to consider the components of
the contravariant form of the metric associated to the line element
\eqref{LineElement}. A calculation shows that components of the contravariant metric $g^{ab}$ in the {Gaussian null coordinates}
$(u,r,x^{\mathcal{A}})$ can be given as 
\[
g^{ab} \rightleftharpoons
\left(
\begin{array}{ccc}
0 & 1 & 0 \\
1 & g^{11} & g^{1\mathcal{B}} \\
0 & g^{\mathcal{A}1} & g^{\mathcal{A}\mathcal{B}}
\end{array}
\right).
\]
The metric functions $g^{11}$, $g^{1\mathcal{A}}$ and $
g^{\mathcal{A}\mathcal{B}}$ can be conveniently parametrised in terms
of real-valued functions $U$, $X^{\mathcal{A}}$ and complex-valued
functions $\omega$, $\zeta^{\mathcal{A}}$ on $\mathcal{O}$ such that 
\[
g^{11}=2(U-\omega\overline{\omega}), \qquad g^{1\mathcal{A}} =
X^{\mathcal{A}}-(\overline{\omega}\zeta^{\mathcal{A}} +\omega
\overline{\zeta}^{\mathcal{A}}), \qquad g^{\mathcal{A}\mathcal{B}}
=-(\zeta^{\mathcal{A}} \overline{\zeta}^{\mathcal{B}} + \zeta^{\mathcal{B}} \overline{\zeta}^{\mathcal{A}}).
\]
Accordingly, setting
\[
l^a = (\partial_r)^a, \qquad n^a =  (\partial_u)^a + U\,
(\partial_r)^a + X^{\mathcal{A}}\, (\partial_{x^\mathcal{A}})^a, \qquad
m^a = \omega (\partial_r)^a + \zeta^{\mathcal{A}}\, (\partial_{x^\mathcal{A}})^a,
\]
one obtains a complex (NP) null tetrad $\{l^a,\, n^a,\, m^a,\, \overline{m}^a\}$
in $\mathcal{O}$. As a result of the conditions in
\eqref{Metric:BehaviourH2} one has that
\[
U = X^{\mathcal{A}}=\omega =0, \quad \mbox{on} \quad \mathcal{H}_2.
\]
It follows from the previous discussion that $m^a$ and $\overline{m}^a$ are
everywhere tangent to the sections $\mathcal{Z}_u$ of
$\mathcal{H}_2$. In general, the complex null vectors $m^a$ and
$\overline{m}^a$ are not parallelly propagated along the null generators of
$\mathcal{H}_2$. 

Associated to the NP null tetrad $\{l^a,\, n^a,\, m^a,\, \overline{m}^a\}$
in $\mathcal{O}$ one has the directional derivatives
\begin{eqnarray*}
&& D = \frac{\partial}{\partial r}, \\
&& \Delta = \frac{\partial}{\partial u} + U\frac{\partial}{\partial
   u}+ X^{\mathcal{A}}\frac{\partial}{\partial x^{\mathcal{A}}},\\
&& \delta = \omega \frac{\partial}{\partial r} + \zeta^{\mathcal{A}}
   \frac{\partial}{\partial x^{\mathcal{A}}}.
\end{eqnarray*}

\begin{remark}
{\em By construction, one has that $D$ is an intrinsic derivative to
  $\mathcal{H}_1$ pointing along the null generators of this
  hypersurface. Similarly, $\Delta$ is intrinsic to $\mathcal{H}_2$
  and points in the direction of its null generators. Finally, $\{
  m^a,\, \overline{m}^a\}$ are differential operators which on
  $\mathcal{H}_2$ are intrinsic to the sections of constant $u$,
  $\mathcal{Z}_u$. Observe, however, that while $\delta$ restricted to
$\mathcal{H}_1$ is still intrinsic to the null hypersurface, it is not
intrinsic to the sections of constant $r$.}
\end{remark}

The NP null tetrad constructed in the previous paragraph can be specialised further to simplify the associate spin-connection
coefficients. By parallelly propagating $\{l^a,\, n^a,\, m^a,\,
\overline{m}^a\}$ along the null geodesics with tangent $l^a$ one finds that
\begin{subequations}
\begin{eqnarray}
&\kappa=\pi =\epsilon=0,& \label{NPGaugeCondition1}\\
&\rho= \overline{\rho}, \qquad \tau=\overline{\alpha} +\beta, \qquad
  \mbox{everywhere on }\quad \mathcal{O}.& \label{NPGaugeCondition2}
\end{eqnarray} 
\end{subequations}
Moreover, from the condition $n^b\nabla_b n^a=0$ on $\mathcal{H}_2$ it
follows that 
\begin{equation}
\nu =0 \qquad \mbox{on} \quad \mathcal{H}_2.
\label{NPGaugeCondition3}
\end{equation}
Also, using that $u$ is an affine parameter of the generators of
$\mathcal{H}_2$ one finds that $\gamma+\overline{\gamma}=0$ along these
generators. One can specialise further by suitably rotating the
vectors $\{ m^a,\, \overline{m}^a \}$ so as to obtain 
\begin{equation}
\gamma=0, \qquad \mbox{on} \quad \mathcal{H}_2.
\label{NPGaugeCondition4}
\end{equation}

\subsubsection{Solving the NP constraint equations}
The NP Ricci and Bianchi identities split into a subset of intrinsic
(constraint) equations  to $\mathcal{H}_1\cup\mathcal{H}_2$ and a
subset of transverse (evolution) equations. In \cite{Rac14} the gauge
introduced in the previous subsection was used to systematically
analyse the constraint equations on $\mathcal{H}_1\cup\mathcal{H}_2$
with the aim of identifying the freely specifiable data on this pair
of intersecting
hypersurfaces under the assumption that it is expansion and
shear-free. The results from this analysis can be conveniently
presented in the form of a table ---see Table \ref{Table:CharacteristicInitialData}. 
{
	\renewcommand{\arraystretch}{1.5}
\begin{table}[t]
\centering \small \hskip-.15cm
\begin{tabular}{|l|l|l|} 
\hline ${{\mathcal{H}}_1}$ &  ${{\mathcal{Z}}}$
$\phantom{\frac{\frac12}{A}}$ & ${{\mathcal{H}}_2}$ \\ \hline \hline
 $\mathrm{D}\zeta^{\mathcal{A}} =0$ &  $\zeta^{\mathcal{A}}$ $\phantom{\frac{\frac12}{A}}$ \hfill (data)  & $\Delta\zeta^{\mathcal{A}} =0$ \\  \hline 
 $\omega =-r\,\tau$ &  $\omega =0$  $\phantom{\frac{\frac12}{A}}$ \hfill  & $\omega = 0$  \hfill (geometry) \\  \hline 
 $X^{\mathcal{A}}=r\,[\tau\,\overline{\zeta}^{\mathcal{A}}+\overline{\tau}\,\zeta^{\mathcal{A}}]$ $\phantom{\frac{\frac12}{A}}$ &  $X^{\mathcal{A}}=0$ $\phantom{\frac{\frac12}{A}}$ \hfill   & $X^{\mathcal{A}}=0$ \hfill (geometry) \\  \hline 
 $U = -r^2\,\left[\,2\,\tau\,\overline{\tau}+\frac12\left(\Psi_2+\overline{\Psi}_2\right)\right]$ &  $U =0$ $\phantom{\frac{\frac12}{A}}$ \hfill   &  $U = 0$ \hfill (geometry) \\  \hline
 $\rho =0$ &  $\rho =0$ $\phantom{\frac{\frac12}{A}}$ & $\rho = u\, \left(\,\overline{\delta}\tau
 - 2\,\alpha\,\tau - \Psi_2\right)$ \\  \hline 
 $\sigma=0$  &
 $\sigma=0$ $\phantom{\frac{\frac12}{A}}$ &
 $\sigma=u \, \left(\,\delta\tau - 2\,\beta\,\tau\,\right)$ \\ \hline 
 $\mathrm{D}\tau =0$ &  $\tau$ $\phantom{\frac{\frac12}{A}}$ \hfill (data)  & $\Delta\tau =0$ \\  \hline 
 $\mathrm{D}\alpha=\mathrm{D}\beta=0$ & $\alpha,\beta, \tau=\overline{\alpha}+\beta$ $\phantom{\frac{\frac12}{A}}$ &
  $\Delta \alpha=\Delta \beta=0$ \\ \hline
  $\gamma=r \, \left(\,\tau\,\alpha+ \overline{\tau}\,\beta +\Psi_2\,\right)$ & $\gamma=0$ $\phantom{\frac{\frac12}{A}}$ \hfill  & $\gamma=0$  \hfill (gauge) \\ \hline 
 $\mu = r \,\Psi_2 $ & $\mu =0$ $\phantom{\frac{\frac12}{A}}$ & $\mu = 0$  \\ \hline
 $\lambda=0$ & $\lambda=0$ $\phantom{\frac{\frac12}{A}}$ & $\lambda=0$ \\ \hline
 $\nu=\frac{1}{2}\,r^2\,\left(\,\overline{\delta}\Psi_2 + \overline{\tau}\,\Psi_2\,\right)$ &  $\nu=0$ $\phantom{\frac{\frac12}{A}}$ \hfill  & 
 $\nu=0$ \hfill (gauge) \\ \hline
 $\Psi_0=0$ & $\Psi_0=0$ $\phantom{\frac{\frac12}{A}}$ & $\Psi_0=\frac{1}{2}\,u^2\,(\delta^2\Psi_2 - (7\,\tau+2\,\beta)\,\delta\Psi_2 + 12\,\tau^2 \Psi_2)$ \\ \hline
  $\Psi_1=0$ &
 $\Psi_1=0$ $\phantom{\frac{\frac12}{A}}$ & $\Psi_1=u\,\left(\,\delta\Psi_2 -3\,\tau \,\Psi_2\,\right)$ \\ \hline
 $\mathrm{D} \Psi_2 =0$ &
 $\zeta^A, \tau$ \hskip-.4cm $\phantom{\frac{\frac12}{A}}$
 $\rightarrow \ \alpha,\beta,\Psi_2$ & $\Delta\Psi_2=0$ \\ \hline
 $\Psi_3=r\,\overline{\delta}\Psi_2 $& $\Psi_3=0$  & $\Psi_3=0$ 
  \\ \hline 
 $\Psi_4=\frac{1}{2}\,r^2\,\left(\,\overline{\delta}^2\Psi_2 +2\,\alpha\,\overline{\delta}\Psi_2\,\right)$ & $\Psi_4=0$  & $\Psi_4=0$ 
  \\ \hline
\end{tabular}
\caption{\small The full initial data set on the intersecting null
  hypersurfaces $\mathcal{H}_1\cup\mathcal{H}_2$. }
\label{Table:CharacteristicInitialData}
\end{table}
}

\begin{remark}
\label{Remark:GeometryZ}
{\em As already mentioned, in what follows we will mostly be
  interested in the situation where $\cZ$ is diffeomorphic to a unit
  2-sphere, i.e.~$\cZ\approx \mathbb{S}^2$. From the definition of the
  operators $\eth$ and $\overline{\eth}$ as given in \eqref{eths},
  along with those of the NP spin connection coefficients $\alpha$ and
  $\beta$, it follows that the connection on $\mathcal{Z}$ is encoded
  in the combination $\overline\alpha-\beta$.  As discussed in
  \cite{Rac14}, given the freely specifiable data $\zeta^A$ and $\tau$
  one can readily compute the NP coefficients $\alpha$,
  $\beta$. These, in turn, can be used, together with the NP Ricci
  equation
\begin{equation}
\Psi_2 =-\delta{\alpha} + \overline{\delta} \beta +
\alpha\,\overline{\alpha} -2\, \alpha\, \beta +
\beta\,\overline{\beta}
\label{Psi2}
\end{equation}
to determine the Weyl spinor component $\Psi_2$ on $\cZ$. From the latter it is straightforward to deduce
\begin{equation}
2\,\mbox{Re}( \Psi_2)=\Psi_2+\overline\Psi_2 = -\delta\,(\,{\alpha}-\overline\beta\,) - \overline{\delta} \,(\,\overline{\alpha}-\beta \,) + 2\,(\,{\alpha}-\overline\beta\,)\,(\,\overline{\alpha}-\beta \,)
\label{RePsi2}
\end{equation}	
which implies that the real part of $\Psi_2$ ---in accordance with the
fact that $-2 \,\mbox{Re}( \Psi_2)$ is the Gaussian curvature
$K_{\mathcal{G}}$ of $\cZ$ (see, i.e.~Proposition 4.14.21 in
\cite{PenRin84})--- depends only on the combination
$\overline\alpha-\beta$ which is completely intrinsic to
$\cZ$. Analogously, by making use of \eqref{Psi2}  $\tau$ and the
imaginary part of $\Psi_2$ can be seen to be closely related to each other via
\begin{equation}
2\,\mbox{i\,Im}( \Psi_2)=\Psi_2-\overline\Psi_2=\overline\delta\tau - \delta \overline\tau - 2\,\left(\, \beta\,\overline\tau - \overline\beta\,\tau \,\right) \,.
\label{ImPsi2}
\end{equation}
}
\end{remark}

\section{The Killing spinor data conditions for a characteristic
  initial problem}
\label{Section:KillingSpinorData}
In this section we adapt the analysis of Killing spinor initial data
in \cite{BaeVal10b} to the setting of a
characteristic initial data set ---see also
\cite{GarVal08a}. 

\subsection{Construction of the Killing
  spinor candidate}
\label{Section:KillingSpinorCandidate}
In this section we investigate the characteristic initial value problem for
the wave equation, equation \eqref{KillingSpinorWaveEquation},  governing the evolution of the Killing spinor
candidate $\kappa_{AB}$. An approach to the formulation of the characteristic initial value problem for
wave equations on intersecting null hypersurfaces $\mathcal{H}_1$ and
$\mathcal{H}_2$ has been analysed in \cite{Ren90}. Our discussion
follows the ideas of this analysis closely.

\subsubsection{Basic set-up}
Let $\{o^A,\iota^A\}$ denote a spin dyad normalised according to $o_A
\iota^A=1$. The spinor $\kappa_{AB}$ can be written as
\[
\kappa_{AB} = \kappa_2 o_A o_B - 2\kappa_1 o_{(A} \iota_{B)} +
\kappa_0 \iota_A \iota_B.
\]
so that
\[
\kappa_0 \equiv \kappa_{AB} o^A o^B, \qquad \kappa_1 \equiv
\kappa_{AB} o^A \iota^B, \qquad \kappa_2\equiv \kappa_{AB}
\iota^A\iota^B.
\]
It can be readily verified that the scalars $\kappa_2$, $\kappa_1$ and
$\kappa_0$ have, respectively, spin weights $-1,\, 0, 1$ --- i.e. they
transform as
\[
\kappa_j \mapsto  e^{-2(j-1)\mbox{i}\vartheta}\kappa_j
\]
under a rotation $\{o^A,\iota^A\}\mapsto \{ e^{\mbox{i}\vartheta}o^A,\, e^{-\mbox{i}\vartheta}\iota^A\}$.

\medskip
A direct decomposition of the wave equation
\eqref{KillingSpinorWaveEquation} using the NP formalism readily
yields the following equations for the independent components
$\kappa_0,\, \kappa_1$ and $\kappa_2$ of the spinor $\kappa_{AB}$:
\begin{subequations}
\begin{eqnarray}
&& D\Delta \kappa_2 + \Delta D\kappa_2 -\delta\overline{\delta}\kappa_2 -
\overline{\delta}\delta \kappa_2 \nonumber \\
&& \hspace{1cm} + (\mu+\overline{\mu} +3 \gamma-\overline{\gamma}) D\kappa_2
-(\rho+\overline{\rho})\Delta \kappa_2 +(\overline{\tau}-3\alpha
-\overline{\beta})\delta \kappa_2 +(\overline{\alpha}-5\beta +\tau) \overline{\delta}
\kappa_2 \nonumber\\
&& \hspace{1cm} +(\Psi_2 + 2\alpha\overline{\alpha} -8 \alpha\beta -2\beta
\overline{\beta} -2\gamma\rho + 2 \mu\rho -2\gamma\overline{\rho} + 2\lambda
\sigma + 2 \alpha \tau + 2\beta \overline{\tau} + 2 D\gamma-2\delta \alpha
-2 \overline{\delta}\beta ) \kappa_2 \nonumber\\
&& \hspace{1cm}+(\Psi_4 -4\lambda\mu)\kappa_0=0, \label{KillingSpinorWaveEquationNP1}\\
&& D\Delta \kappa_1 + \Delta D\kappa_1-\delta\overline{\delta}\kappa_1
-\overline{\delta}\delta \kappa_1 \nonumber\\
&& \hspace{1cm}-2\tau D\kappa_2
+(\mu+\overline{\mu}-\gamma-\overline{\gamma}) D\kappa_1 + 2\nu D\kappa_0
-(\rho+\overline{\rho})\Delta\kappa_1 + 2\rho \delta\kappa_2
+(\alpha-\overline{\beta}+\overline{\tau})\delta \kappa_1 \nonumber\\
&& \hspace{1cm}-2\lambda \delta
\kappa_2 + 2\sigma \overline{\delta}\kappa_2 + (\overline{\alpha}-\beta +\tau
)\overline{\delta}\kappa_1-2\mu \overline{\delta}\kappa_0 \nonumber\\
&& \hspace{1cm} +(-\Psi_1 -\overline{\alpha}\rho + 3\beta\rho +
\alpha\sigma + \overline{\beta}\sigma \overline{\rho}\tau -\sigma\overline{\tau}-D\tau
+\delta\rho \overline{\delta}\sigma)\kappa_2 \nonumber\\
&& \hspace{1cm}+(-\Psi_3 +\overline{\alpha}\lambda +\beta \lambda +
3\alpha\mu -\overline{\beta} \mu -\nu \rho -\nu\overline{\rho} +\lambda\tau
+\mu\overline{\tau} + D\nu -\delta \lambda -\overline{\delta}\mu)\kappa_0 =0, \label{KillingSpinorWaveEquationNP2}\\
&& D\Delta \kappa_0 + \Delta D\kappa_0-\delta\overline{\delta} \kappa_0
-\overline{\delta}\delta \kappa_0 \nonumber\\
&& \hspace{1cm} + (\mu+\overline{\mu} -5 \gamma-\overline{\gamma})
D\kappa_0 -(\rho+\overline{\rho}) \Delta \kappa_0 +(5\alpha -\overline{\beta} +
\overline{\tau})\delta \kappa_0 +(\overline{\alpha}+3\beta +
\tau)\overline{\delta}\kappa_0 \nonumber\\
&& \hspace{1cm} + (\Psi_2 -2 \alpha\overline{\alpha} -8 \alpha\beta +
2\beta\overline{\beta} +2 \gamma \rho +2 \mu \rho + 2\gamma\overline{\rho} + 2
\lambda \sigma -2 \alpha \tau -2 \beta \overline{\tau} -2 D\gamma + 2\delta
\alpha + 2 \overline{\delta}\beta ) \kappa_0 \nonumber\\
&& \hspace{1cm} + (\Psi_0-4\rho\sigma )\kappa_2=0. \label{KillingSpinorWaveEquationNP3}
\end{eqnarray}
\end{subequations}
The above expressions are completely general: no assumption on the
spacetime (other than satisfying the vacuum field equations) or the
gauge has been made.

\begin{remark}\label{tetrad}
{\em  In the sequel we investigate the consequences of
these equations on the hypersurfaces $\mathcal{H}_1$ and
$\mathcal{H}_2$. For this we consider a spin dyad $\{ o^A,\,\iota^A  \}$
adapted to the NP null tetrad $\{ l^a,n^a,\, m^a, \overline{m}^a \}$ ---if
$\{ l^{AA'},n^{AA'},\, m^{AA'}, \overline{m}^{AA'} \}$ denote the
spinorial counterparts of the null tetrad, one has the correspondences
\[
l^{AA'} = o^A \overline{o}^{A'}, \qquad n^{AA'} = \iota^A \overline{\iota}^{A'},
\qquad m^{AA'} =o^A\overline{\iota}^{A'}, \qquad \overline{m}^{AA'} =\iota^A \overline{o}^{A'},
\]
and the gauge conditions
\eqref{NPGaugeCondition1}-\eqref{NPGaugeCondition2},
\eqref{NPGaugeCondition3} and \eqref{NPGaugeCondition4} hold when
computing the corresponding NP spin-connection coefficients by means
of derivatives of the spin dyad.}
\end{remark}

\subsubsection{The transport equations on $\mathcal{H}_1$}
Consider now the restriction of equations
\eqref{KillingSpinorWaveEquationNP1}-\eqref{KillingSpinorWaveEquationNP3}
to the null hypersurface $\mathcal{H}_1$ with tangent $l^a$. It
follows then that $D$ is a directional derivative along the null
generators of $\mathcal{H}_1$, while $\Delta$ is a
directional derivative transversal to $\mathcal{H}_1$. Using the
NP commutator $[D,\Delta]$ equation to rewrite $\Delta D \kappa_0$,
$\Delta D \kappa_1$, $\Delta D \kappa_2$ in terms of $D \Delta
\kappa_0$, $D \Delta \kappa_1$ and $D \Delta \kappa_2$, equations
\eqref{KillingSpinorWaveEquationNP1}-\eqref{KillingSpinorWaveEquationNP3}
take the form:
\begin{subequations}
\begin{eqnarray}
&& 2 D \Delta \kappa_0 - \delta \overline{\delta} \kappa_0 -\overline{\delta}
\delta \kappa_0 +( \overline{\alpha} + 3\beta ) \overline{\delta}\kappa_0  +
(5\alpha -\overline{\beta})\delta \kappa_0 +(\mu+\overline{\mu}-4\gamma)
D\kappa_0 + 4\tau D\kappa_1 \nonumber \\
&& \hspace{1cm} + 2 \kappa_1 D\tau +(\Psi_2 -2 \alpha\overline{\alpha} - 8 \alpha\beta +
2\beta\overline{\beta} -2 \alpha \tau -2\beta \overline{\tau} - 2 D\gamma +
2\delta \alpha + 2 \overline{\delta}\beta) \kappa_0=0,\label{KillingSpinorWaveEquationNP1H1}\\
&& 2 D \Delta \kappa_1 - \delta \overline{\delta} \kappa_1
-\overline{\delta}\delta \kappa_1 -2 \nu D \kappa_0 + (\mu +
\overline{\mu})D\kappa_1 + 2 \tau D\kappa_2 + (\alpha -\overline{\beta})
\delta\kappa_1 + 2 \mu \overline{\delta}\kappa_0 +
(\overline{\alpha}-\beta)\overline{\delta}\kappa_1 \nonumber \\
&& \hspace{1cm} + (\Psi_3 -3 \alpha\mu + \overline{\beta}\mu - \mu
\overline{\tau
} -D\nu + \overline{\delta}\mu)\kappa_0 -2 \Psi_2 \kappa_1 + \kappa_2
D\tau=0,\label{KillingSpinorWaveEquationNP2H1}\\
&& 2 D \Delta \kappa_2 -\delta \overline{\delta} \kappa_2
-\overline{\delta}\delta \kappa_2 -4 \nu D\kappa_1 +(4\gamma+
\mu+\overline{\mu})D\kappa_2 -(3\alpha + \overline{\beta})\delta\kappa_2 + 4 \mu
\overline{\delta}\kappa_1 +(\overline{\alpha}-5\beta)\overline{\delta}\kappa_2
\nonumber \\
&& \hspace{1cm} + (\Psi_2 + 2 \alpha\overline{\alpha} - 8 \alpha\beta -2
\beta \overline{\beta} + 2 \alpha \tau + 2\beta\overline{\tau} + 2 D\gamma -2
\delta\alpha -2 \overline{\delta}\beta)\kappa_2 \nonumber \\
&& \hspace{1cm} +(2\alpha \mu -2 \Psi_3 + 2\overline{\beta}\mu -2\mu
\overline{\tau} -2 D\nu + 2\overline{\delta}\mu) \kappa_1 + \Psi_4 \kappa_0=0.
\label{KillingSpinorWaveEquationNP3H1}
\end{eqnarray}
\end{subequations}
If the value of the components $\kappa_0,\, \kappa_1,\, \kappa_2$ are
known on $\mathcal{H}_1$, then the above equations can be read as a
system of ordinary differential equations for the transversal derivatives
\[
\Delta \kappa_0, \quad \Delta \kappa_1, \quad \Delta \kappa_2,
\]
 along the null generators of
$\mathcal{H}_1$. Initial data for these transport equations is
naturally prescribed on $\mathcal{Z}$. 

\subsubsection{The transport equations on $\mathcal{H}_2$}
Similarly, one can consider the restriction of equations
\eqref{KillingSpinorWaveEquationNP1}-\eqref{KillingSpinorWaveEquationNP3}
to the null hypersurface $\mathcal{H}_2$ with tangent $n^a$. Thus,
$\Delta$ is a directional derivative along the null generators of
$\mathcal{H}_2$, $\delta$ and $\overline{\delta}$ are intrinsic derivatives
while $D$ is transversal to $\mathcal{H}_2$.  In this
case one uses the NP commutator $[D,\Delta]$ to rewrite $D
\Delta\kappa_0$, $D \Delta\kappa_1$, $D \Delta\kappa_2$ in terms of
$\Delta D \kappa_0$, $\Delta D \kappa_1$, $\Delta D \kappa_2$ and
lower order terms so that equations
\eqref{KillingSpinorWaveEquationNP1}-\eqref{KillingSpinorWaveEquationNP3}
take the form
\begin{subequations}
\begin{eqnarray}
&&2 \Delta D \kappa_0 -\delta\overline{\delta}\kappa_0 -\overline{\delta} \delta
\kappa_0 -(\rho + \overline{\rho}) \Delta \kappa_0 + 4\tau D\kappa_1 +
(5\alpha -\overline{\beta} + 2\overline{\tau}) \delta\kappa_0 + (\overline{\alpha}+
3\beta + 2\tau)\overline{\delta}\kappa_0 \nonumber \\
&& \hspace{1cm} +
4\sigma \overline{\delta}\kappa_1 -4\rho
\delta\kappa_1 +(\Psi_2 - 2\alpha\overline{\alpha} -8\alpha\beta +2
\beta\overline{\beta} -2 \alpha \tau -2 \beta \overline{\tau} + 2 \delta \alpha
+ 2\overline{\delta}\beta)\kappa_0 \nonumber \\
&& \hspace{1cm} + (2\overline{\alpha}\rho + 2\beta \rho + 6\alpha\sigma
-2\overline{\beta}\sigma -2 \overline{\rho}\tau + 2\sigma\overline{\tau} + 2D\tau -2
\delta\rho -2\overline{\delta}\sigma -2\Psi_1) \kappa_1\nonumber \\
&& \hspace{1cm}+
(\Psi_0-4\rho\sigma)\kappa_2=0,\label{KillingSpinorWaveEquationNP1H2}\\
&& 2 \Delta D \kappa_1 -\delta\overline{\delta}\kappa_1 -\overline{\delta} \delta
\kappa_1-(\rho+\overline{\rho})\Delta\kappa_1 +2\tau D\kappa_2
+(\alpha-\overline{\beta}+2\overline{\tau})\delta\kappa_1 + (\overline{\alpha}-\beta+2\tau)\overline{\delta}\kappa_1\nonumber\\
&& \hspace{1cm} -2 \rho \delta
\kappa_2 +2 \sigma \overline{\delta}\kappa_2-2\Psi_2 \kappa_1 \nonumber\\
&& \hspace{1cm}+(\Psi_1 -\overline{\alpha}\rho
-3\beta\rho -\alpha\sigma -\overline{\beta}\sigma -\overline{\rho}\tau +
\sigma\overline{\tau} + D\tau -\delta\rho -\overline{\delta}\sigma)\kappa_2=0, \label{KillingSpinorWaveEquationNP2H2}
\\
&& 2 \Delta D \kappa_2 -\delta\overline{\delta}\kappa_2 -\overline{\delta} \delta
\kappa_2-(\rho+\overline{\rho})\Delta\kappa_2
+(2\overline{\tau}-3\alpha-\overline{\beta}) \delta\kappa_2 +
(\overline{\alpha}-5\beta +2\tau)\overline{\delta}\kappa_2 \nonumber \\
&& \hspace{1cm} +(\Psi_2 + 2\alpha\overline{\alpha} -8\alpha\beta -2
\beta\overline{\beta} +2 \alpha\tau + 2\beta\overline{\tau} -2 \delta\alpha -2
\overline{\delta}\beta)\kappa_2=0.
\label{KillingSpinorWaveEquationNP3H2}
\end{eqnarray}
\end{subequations}
If the values of $\kappa_0$, $\kappa_1$, $\kappa_2$ are known on
$\mathcal{H}_2$ then the above equations can be read as a system
of ordinary differential equations for the transversal derivatives
\[
D\kappa_0, \qquad D\kappa_1, \qquad D\kappa_2,
\]
along the null generators of $\mathcal{H}_2$. Initial data for these
transport equations is naturally prescribed on $\mathcal{Z}$.

\subsubsection{Summary: existence of the Killing spinor candidate}

The discussion of the previous subsections combined with the methods
of \cite{Ren90} ---see also \cite{Kan96b}---  allows to formulate the
following existence result:

%\begin{proposition}[\textbf{\em existence result for the Killing
%spinor candidate}]
\begin{proposition}
Let $(\mathcal{M},\bmg)$ denote a spacetime satisfying the assumptions
of Theorem \ref{Theorem:ExistenceCIVPSpacetime}. Then, given a smooth
choice of fields $\kappa_0$, $\kappa_1$ and $\kappa_2$ on
$\mathcal{H}_1\cup \mathcal{H}_2$, there exists a neighbourhood
$\mathcal{O}$ of $\mathcal{Z}$ in $D(\mathcal{H}_1\cup\mathcal{H}_2)$
on which the wave equation \eqref{KillingSpinorWaveEquation} has a
unique solution $\kappa_{AB}$.
\label{ExistenceKSWaveEquation}
\end{proposition}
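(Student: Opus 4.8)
The plan is to establish existence and uniqueness of the Killing spinor candidate $\kappa_{AB}$ by combining the transport equations derived on $\mathcal{H}_1$ and $\mathcal{H}_2$ with the general theory of the characteristic initial value problem for linear wave equations on intersecting null hypersurfaces, as developed in \cite{Ren90} and \cite{Kan96b}. The key observation is that the wave equation \eqref{KillingSpinorWaveEquation} is, when decomposed via the NP formalism, a closed linear symmetric hyperbolic system for the three scalar components $\kappa_0$, $\kappa_1$, $\kappa_2$ (equations \eqref{KillingSpinorWaveEquationNP1}--\eqref{KillingSpinorWaveEquationNP3}), with smooth coefficients built from the Weyl and connection components of the background spacetime, which by Theorem \ref{Theorem:ExistenceCIVPSpacetime} are themselves smooth and uniquely determined on a neighbourhood $\mathcal{O}$ of $\mathcal{Z}$. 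Since the principal part of each scalar wave equation is $\square$, the characteristic initial value problem for such a system is well posed once data is prescribed on $\mathcal{H}_1\cup\mathcal{H}_2$.

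First I would note that, in the gauge of Section \ref{Section:CharacteristicInitialValueProblem} and with the adapted dyad of Remark \ref{tetrad}, the restriction of the system to $\mathcal{H}_1$ takes the form \eqref{KillingSpinorWaveEquationNP1H1}--\eqref{KillingSpinorWaveEquationNP3H1}, in which the only transversal derivatives appearing are $\Delta\kappa_0$, $\Delta\kappa_1$, $\Delta\kappa_2$, each multiplied by the factor $2D$ (so the terms are $2D\Delta\kappa_j$); thus, once the restrictions $\kappa_j|_{\mathcal{H}_1}$ are specified, these equations are ordinary differential equations along the null generators of $\mathcal{H}_1$ for the transversal derivatives $\Delta\kappa_j$, with initial values prescribed on $\mathcal{Z}$. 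Symmetrically, the restriction to $\mathcal{H}_2$, equations \eqref{KillingSpinorWaveEquationNP1H2}--\eqref{KillingSpinorWaveEquationNP3H2}, yields ODEs along the generators of $\mathcal{H}_2$ for the transversal derivatives $D\kappa_j$, again with initial data on $\mathcal{Z}$. This means that the free characteristic data for the wave equation is precisely a smooth choice of $\kappa_0$, $\kappa_1$, $\kappa_2$ on $\mathcal{H}_1\cup\mathcal{H}_2$ (subject only to agreement of the two restrictions at $\mathcal{Z}$, which is built into the statement ``a smooth choice of fields on $\mathcal{H}_1\cup\mathcal{H}_2$'').

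Having identified the correct notion of characteristic data, I would then invoke the Rendall--type local existence and uniqueness theorem for linear wave equations on a bifurcate null hypersurface \cite{Ren90} (together with the refinements in \cite{Kan96b}), applied componentwise to the coupled linear system \eqref{KillingSpinorWaveEquationNP1}--\eqref{KillingSpinorWaveEquationNP3}: given the smooth data on $\mathcal{H}_1\cup\mathcal{H}_2$, there exists a neighbourhood $\mathcal{O}$ of $\mathcal{Z}$ in $D(\mathcal{H}_1\cup\mathcal{H}_2)$ on which a unique smooth solution $(\kappa_0,\kappa_1,\kappa_2)$ exists; reassembling these into $\kappa_{AB}=\kappa_2 o_Ao_B - 2\kappa_1 o_{(A}\iota_{B)} + \kappa_0\iota_A\iota_B$ gives the desired $\kappa_{AB}$ solving \eqref{KillingSpinorWaveEquation}. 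The main technical point to be careful about --- and what I expect to be the principal obstacle --- is verifying that the hypotheses of the cited existence theorem are genuinely met: namely, that after using the NP commutator $[D,\Delta]$ the system has the required symmetric-hyperbolic (or more precisely, diagonal second-order wave) structure with no problematic lower-order terms obstructing the reduction to transport equations on the horizons, and that the neighbourhood $\mathcal{O}$ on which the background metric is controlled (from Theorem \ref{Theorem:ExistenceCIVPSpacetime}) can be taken to coincide with, or be shrunk to, the neighbourhood on which the wave equation is solved. These are essentially bookkeeping matters given the explicit form of the equations already displayed, so the proof amounts to assembling the pieces rather than proving anything substantially new.
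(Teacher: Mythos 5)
Your proposal matches the paper's argument: the paper likewise treats the NP-decomposed wave equation as a linear characteristic initial value problem with free data $\kappa_0,\kappa_1,\kappa_2$ on $\mathcal{H}_1\cup\mathcal{H}_2$, reduces the restricted equations to transport equations for the transversal derivatives via the $[D,\Delta]$ commutator, and cites the methods of Rendall and K\'ann\'ar for existence and uniqueness near $\mathcal{Z}$. The paper gives no more detail than this, so your write-up is, if anything, a more explicit version of the same proof.
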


\begin{remark}
{\em The assumption of smoothness of the fields $\kappa_0$, $\kappa_1$ and
$\kappa_2$ require, in particular, that the limits of these fields as
one approaches to $\mathcal{Z}$ on either $\mathcal{H}_1$ or
$\mathcal{H}_2$ coincide.}
\end{remark}

\subsection{The NP decomposition of the Killing spinor data conditions}

The conditions on the initial data for the Killing spinor candidate
$\kappa_{AB}$ constructed in the previous section which ensure that it
is, in fact, a Killing spinor follow from requiring that the
propagation system \eqref{WaveEquationH}-\eqref{WaveEquationS} of
Proposition \ref{Proposition:KillingSpinorPropagationSystem} has as a
unique solution ---the trivial (zero) one.

The purpose of this section is to analyse the characteristic initial
value problem for the Killing spinor equation propagation system
\eqref{WaveEquationH}-\eqref{WaveEquationS}.

\subsubsection{Basic observations}
We are interested in solutions to the system
\eqref{WaveEquationH}-\eqref{WaveEquationS} ensuring the existence of
a Killing spinor on $D(\mathcal{H}_1\cup\mathcal{H}_2)$. The
homogeneity of these equations on the fields $H_{AA'BC}$ and
$S_{AA'BB'}$ allows to formulate the following result:

\begin{lemma}
\label{Lemma:KillingSpinorEquationEvolutionSystem}
Let $(\mathcal{M},\bmg)$ denote a spacetime satisfying the assumptions
of Theorem \ref{Theorem:ExistenceCIVPSpacetime}. Further, assume that 
\[
H_{AA'BC}=0, \qquad S_{AA'BB'}=0 \qquad \mbox{on} \quad \mathcal{H}_1\cup\mathcal{H}_2.
\]
Then there exists a neighbourhood $\mathcal{O}$ of $\mathcal{Z}$ in
 $H_{AA'BC}$ and
$S_{AA'BB'}$ vanish everywhere on the domain of dependence of $\mathcal{H}_1\cup
	\mathcal{H}_2$.
\end{lemma}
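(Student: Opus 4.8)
\textbf{Proof proposal for Lemma~\ref{Lemma:KillingSpinorEquationEvolutionSystem}.}

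The plan is to recognise the system \eqref{WaveEquationH}--\eqref{WaveEquationS} of Proposition~\ref{Proposition:KillingSpinorPropagationSystem} as a closed, linear, homogeneous system of second-order wave equations for the combined field $(H_{AA'BC}, S_{AA'BB'})$, with principal part given by the (scalar) wave operator $\square$ acting componentwise, and to invoke the standard uniqueness theory for the characteristic initial value problem for such systems. Concretely, I would first rewrite the system in the Gaussian null coordinates $(u,r,x^{\mathcal{A}})$ and NP tetrad of Section~\ref{Section:CharacteristicInitialValueProblem}. In these coordinates $\square$ has principal symbol $g^{ab}\partial_a\partial_b = 2\partial_u\partial_r + g^{11}\partial_r^2 + 2 g^{1\mathcal{A}}\partial_r\partial_{\mathcal{A}} + g^{\mathcal{A}\mathcal{B}}\partial_{\mathcal{A}}\partial_{\mathcal{B}}$, so that $u=0$ ($\mathcal{H}_1$) and $r=0$ ($\mathcal{H}_2$) are characteristic hypersurfaces and the pair $\mathcal{H}_1\cup\mathcal{H}_2$ is exactly the data carrier required by the Rendall-type characteristic existence-uniqueness theorem \cite{Ren90} (see also \cite{Kan96b}). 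The right-hand sides of \eqref{WaveEquationH}--\eqref{WaveEquationS} involve only the fields $H_{AA'BC}$, $S_{AA'BB'}$, their first derivatives, and the (known, smooth) background curvature $\Psi_{ABCD}$ and connection coefficients; in particular no derivatives of order higher than one of the unknowns appear on the right, and the coefficients are smooth on $\mathcal{O}$. Hence the system is genuinely a linear symmetric-hyperbolic-reducible system with smooth coefficients.

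Next I would supply the characteristic initial data. For a second-order wave equation on intersecting null hypersurfaces the free data consist of the restriction of each unknown to $\mathcal{H}_1$ and to $\mathcal{H}_2$ (the transversal first derivatives being then determined along the generators by the equations themselves, exactly as in the transport-equation discussion of Section~\ref{Section:KillingSpinorData}). By hypothesis $H_{AA'BC}=0$ and $S_{AA'BB'}=0$ on all of $\mathcal{H}_1\cup\mathcal{H}_2$, so the characteristic data vanish identically; the required compatibility/consistency at $\mathcal{Z}$ (agreement of the limits from $\mathcal{H}_1$ and $\mathcal{H}_2$) is trivially satisfied since both limits are zero. By the uniqueness part of the characteristic Cauchy problem for linear wave systems with smooth coefficients, the zero field is the unique solution on a neighbourhood $\mathcal{O}$ of $\mathcal{Z}$ in $D(\mathcal{H}_1\cup\mathcal{H}_2)$; therefore $H_{AA'BC}=0$ and $S_{AA'BB'}=0$ throughout that neighbourhood, which is the assertion. (One may phrase this either via the abstract theorem of \cite{Ren90} or, if one wants the larger green region of Figure~\ref{fig}, via the energy-estimate methods of \cite{Luk12}; the size of $\mathcal{O}$ is immaterial here, as remarked in the Introduction.)

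The only genuinely delicate point — and the step I expect to be the main obstacle — is verifying that the system \eqref{WaveEquationH}--\eqref{WaveEquationS} really is \emph{closed}, i.e.\ that the right-hand sides can be written purely in terms of $H$, $S$, their first derivatives and known background quantities, with no hidden second derivatives of the unknowns and no appearance of $\nabla\Psi$ that cannot be reabsorbed. This is precisely the content of Proposition~\ref{Proposition:KillingSpinorPropagationSystem}, which I am entitled to assume; the terms $\nabla_{(A}{}^{Q'}S_{BC)Q'A'}$ and $\nabla_{AA'}(\Psi_B{}^{PQR}H_{B'PQR})$ are first-order in the unknowns, and the Bianchi identity in vacuum, $\nabla^A{}_{A'}\Psi_{ABCD}=0$, disposes of the derivative falling on $\Psi$. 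Granting that, the remainder is a routine application of the characteristic uniqueness theorem, and no further computation is needed.
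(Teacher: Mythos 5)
Your proposal is correct and follows essentially the same route as the paper: the paper's (one-line) proof likewise treats \eqref{WaveEquationH}--\eqref{WaveEquationS} as a linear homogeneous wave system, applies the characteristic initial value machinery of Section~\ref{Section:KillingSpinorCandidate} with vanishing data on $\mathcal{H}_1\cup\mathcal{H}_2$, and concludes by uniqueness that the trivial solution is the only one. Your additional remarks on closure of the system and the handling of the $\nabla\Psi$ terms are sound but not needed beyond what Proposition~\ref{Proposition:KillingSpinorPropagationSystem} already supplies.
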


\begin{proof}
The result follows from using the methods of Section
\ref{Section:KillingSpinorCandidate} on the equations
\eqref{WaveEquationH}-\eqref{WaveEquationS}, and the uniqueness of the
solutions to the characteristic initial value problem. 
\end{proof}

From the above lemma and the observations in Section \ref{Section:KillingSpinors} one
directly obtains the following result concerning the existence of
Killing spinors on $D(\mathcal{H}_1\cup\mathcal{H}_2)$:

\begin{proposition}
\label{KillingSpinorExistence}
Let $(\mathcal{M},\bmg)$ denote a spacetime satisfying the assumptions
of Theorem \ref{Theorem:ExistenceCIVPSpacetime}. Assume that initial
data $\kappa_0$, $\kappa_1$, $\kappa_2$ on
$\mathcal{H}_1\cup \mathcal{H}_2$ for the wave equation
\eqref{KillingSpinorWaveEquation} can be found such that 
\[
H_{AA'BC}=0, \qquad S_{AA'BB'}=0 \qquad \mbox{on} \quad \mathcal{H}_1\cup\mathcal{H}_2.
\]
Then the resulting Killing spinor candidate $\kappa_{AB}$ is, in fact,
a Killing spinor everywhere on the domain of dependence of
$\mathcal{H}_1\cup\mathcal{H}_2$. 
\end{proposition}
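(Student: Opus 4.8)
The plan is to reduce this proposition to Lemma~\ref{Lemma:KillingSpinorEquationEvolutionSystem} by exhibiting the implication ``Killing spinor candidate $\kappa_{AB}$ solves \eqref{KillingSpinorWaveEquation} and $H_{AA'BC}=S_{AA'BB'}=0$ on $\mathcal{H}_1\cup\mathcal{H}_2$'' $\Rightarrow$ ``$\nabla_{A'(A}\kappa_{BC)}=0$ everywhere on $\mathcal{O}$''. First I would invoke Proposition~\ref{ExistenceKSWaveEquation}: the prescribed data $\kappa_0,\kappa_1,\kappa_2$ on $\mathcal{H}_1\cup\mathcal{H}_2$ determine, via the transport equations of the previous subsections, a unique smooth solution $\kappa_{AB}$ of the wave equation \eqref{KillingSpinorWaveEquation} on a neighbourhood $\mathcal{O}$ of $\mathcal{Z}$. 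By hypothesis the data have been chosen so that the zero-quantities $H_{AA'BC}$ and $S_{AA'BB'}$ built from this $\kappa_{AB}$ (and from $\xi_{AA'}=\nabla^{P}{}_{A'}\kappa_{AP}$) vanish on all of $\mathcal{H}_1\cup\mathcal{H}_2$.

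Next I would feed this into the propagation system of Proposition~\ref{Proposition:KillingSpinorPropagationSystem}: since $\kappa_{AB}$ solves \eqref{KillingSpinorWaveEquation}, the pair $(H_{AA'BC},S_{AA'BB'})$ satisfies the closed, homogeneous, linear system of wave equations \eqref{WaveEquationH}--\eqref{WaveEquationS}. Now Lemma~\ref{Lemma:KillingSpinorEquationEvolutionSystem} applies verbatim --- with the vanishing of $(H,S)$ on $\mathcal{H}_1\cup\mathcal{H}_2$ as characteristic initial data --- and the uniqueness of solutions to the characteristic initial value problem for this symmetric hyperbolic system forces $H_{AA'BC}=0$ and $S_{AA'BB'}=0$ throughout $\mathcal{O}\subset D(\mathcal{H}_1\cup\mathcal{H}_2)$. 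In particular $H_{AA'BC}=3\nabla_{A'(A}\kappa_{BC)}=0$ on $\mathcal{O}$, which is precisely the statement that $\kappa_{AB}$ is a genuine Killing spinor (and, as a bonus, $S_{AA'BB'}=0$ re-confirms that $\xi_{AA'}$ is the spinorial counterpart of a Killing vector, consistent with equation \eqref{KillingSpinorToKillingVector}).

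The one genuinely delicate point --- and the only place where anything beyond bookkeeping is needed --- is the logical direction connecting the wave equation to the Killing spinor equation. Solving the \emph{second-order} equation \eqref{KillingSpinorWaveEquation} with the given data does not by itself guarantee $\nabla_{A'(A}\kappa_{BC)}=0$; that is exactly why one must separately impose $H=S=0$ on the initial hypersurfaces and then propagate. So the care required is (i) to check that the initial data for the \emph{auxiliary} system \eqref{WaveEquationH}--\eqref{WaveEquationS} --- namely $H_{AA'BC}$, $S_{AA'BB'}$ and their transversal derivatives along the generators --- are themselves determined by, and vanish as a consequence of, the stated conditions on $\kappa_0,\kappa_1,\kappa_2$ on $\mathcal{H}_1\cup\mathcal{H}_2$; and (ii) that the domains of existence coming from Proposition~\ref{ExistenceKSWaveEquation} and from Lemma~\ref{Lemma:KillingSpinorEquationEvolutionSystem} can be taken to be the same neighbourhood $\mathcal{O}$, which follows since both are produced by the same construction of Section~\ref{Section:KillingSpinorCandidate} applied to symmetric hyperbolic systems with data on $\mathcal{H}_1\cup\mathcal{H}_2$. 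Granting these, the proof is essentially a two-line chaining of the three previously established results, so I would present it as such: apply Proposition~\ref{ExistenceKSWaveEquation} to get $\kappa_{AB}$, apply Proposition~\ref{Proposition:KillingSpinorPropagationSystem} to get the homogeneous system for $(H,S)$, and apply Lemma~\ref{Lemma:KillingSpinorEquationEvolutionSystem} to conclude $H=S=0$, hence $\nabla_{A'(A}\kappa_{BC)}=0$, on $\mathcal{O}$.
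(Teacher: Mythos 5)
Your proposal is correct and follows exactly the route the paper takes: the paper states this proposition as a direct corollary of Lemma~\ref{Lemma:KillingSpinorEquationEvolutionSystem} (itself proved by applying the characteristic-IVP machinery of Section~\ref{Section:KillingSpinorCandidate} to the homogeneous system \eqref{WaveEquationH}--\eqref{WaveEquationS} of Proposition~\ref{Proposition:KillingSpinorPropagationSystem}), with the identity $H_{A'ABC}=3\nabla_{A'(A}\kappa_{BC)}$ giving the conclusion. Your remarks on the delicate points (in particular that the wave equation alone does not imply the Killing spinor equation) are consistent with, and slightly more explicit than, the paper's presentation.
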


\begin{remark}
{\em A standard computation shows that the condition
\[
H_{AA'BC}=0
\]
 is
equivalent to the equations
\begin{subequations}
\begin{eqnarray}
&& D \kappa_0 - 2\epsilon \kappa_0 + 2\kappa\kappa_1 =0, \label{NPH1}\\
&& \delta \kappa_0- 2\beta \kappa_0+2 \sigma\kappa_1=0, \label{NPH2}\\
&& \overline{\delta}\kappa_0+ 2D\kappa_1 -2 \pi \kappa_0-2 \alpha \kappa_0
+2 \kappa_1 \rho
+2 \kappa \kappa_2=0, \label{NPH3}\\
&& \Delta \kappa_0 +2 \delta \kappa_1+ 2 \sigma \kappa_2 -2 \mu
\kappa_0 +2\tau \kappa_1-2 \gamma
\kappa_0 =0, \label{NPH4}\\
&& D\kappa_2 + 2\overline{\delta}\kappa_1+ 2\rho \kappa_2-2\lambda
\kappa_0-2\pi\kappa_1+2 \epsilon \kappa_2=0, \label{NPH5}\\
&& \delta \kappa_2 +2 \Delta \kappa_1 +2 \tau \kappa_2 + 2\beta \kappa_2 -2\mu\kappa_1-2 \nu \kappa_0
=0, \label{NPH6}\\
&& \overline{\delta} \kappa_2 + 2\alpha \kappa_2 -2 \lambda\kappa_1=0, \label{NPH7}\\
&& \Delta \kappa_2 + 2 \gamma \kappa_2-2\nu\kappa_1=0.  \label{NPH8}
\end{eqnarray} 
\end{subequations}}
\end{remark}

\begin{remark}
{\em Using the notation
\[
\xi_{AA'} = \xi_{11'} o_A \overline{o}_{A'} + \xi_{10'} o_A \overline{\iota}_{A'} +
\xi_{01'} \iota_{A} \overline{o}_{A'} + \xi_{00'} \iota_A \overline{\iota}_{A'},
\]
equation \eqref{KillingSpinorToKillingVector} takes the form
\begin{subequations}
\begin{eqnarray}
&& \xi_{11'} = \Delta \kappa_1 - \delta \kappa_2 -2 \beta\kappa_2 +\tau \kappa_2+2\mu\kappa_1
-\nu \kappa_0, \label{XiExpanded1}\\
&& \xi_{10'} = D\kappa_2-\overline{\delta}\kappa_1 + 2 \epsilon \kappa_2 -
\rho \kappa_2 -2\pi\kappa_1+ \lambda \kappa_0 , \\ 
&& \xi_{01'} = \delta \kappa_1-\Delta\kappa_0  +2 \gamma \kappa_0 -
\mu \kappa_0 -2\tau\kappa_1+ \sigma \kappa_2, \\
&& \xi_{00'} = \overline{\delta}\kappa_0 -D\kappa_1 -2\alpha \kappa_0 + \pi
\kappa_0+2 \rho\kappa_1 -
\kappa \kappa_2.  \label{XiExpanded4}
\end{eqnarray}
\end{subequations}

If $\xi_{AA'}$ is required to be Hermitian so that it corresponds to
the spinorial counterpart of a real vector $\xi^a$ then one has the
reality conditions
\[
\xi_{00'} = \overline{\xi}_{0'0}, \qquad \xi_{11'} = \overline{\xi}_{1'1},
\qquad \xi_{01'} = \overline{\xi}_{1'0}, \qquad \xi_{10'} = \overline{\xi}_{0'1}
\]

A further calculation shows that the equation $S_{AA'BB'}=0$ takes, in
NP notation the form
\begin{subequations}
\begin{eqnarray}
&& D\xi_{00'}- \xi_{00'} \epsilon -  \xi_{00'} \overline{\epsilon} -  \xi_{10'} 
\kappa -  \xi_{01'} \overline{\kappa} = 0, \label{KVEqnNP1}\\
&&  \Delta\xi_{11'}+ \xi_{11'} \gamma + \xi_{11'} \overline{\gamma} +  \xi_{01'} \nu 
+  \xi_{10'} \overline{\nu} = 0, \label{KVEqnNP2}\\
&& 
   D\xi_{11'} + \Delta\xi_{00'}- \xi_{00'} \gamma -  \xi_{00'} \overline{\gamma} + \xi_{11'} 
\epsilon + \xi_{11'} \overline{\epsilon} + \xi_{01'} \pi + \xi_{10'} 
\overline{\pi} -  \xi_{10'} \tau -  \xi_{01'} \overline{\tau} = 0, \label{KVEqnNP3}\\
&&  \delta\xi_{11'}  -
   \Delta\xi_{01'}+ \overline{\alpha}\xi_{11'} + \xi_{11'} \beta + \xi_{01'} \gamma -  
\xi_{01'} \overline{\gamma} + \xi_{10'} \overline{\lambda} + 
\xi_{01'} \mu -  \xi_{00'} \overline{\nu} + \xi_{11'} \tau= 0, \label{KVEqnNP4}\\
&&\delta\xi_{01'} + \xi_{01'} \overline{\alpha} - \xi_{01'} \beta + \xi_{00'} 
\overline{\lambda} - \xi_{11'} \sigma =0,\label{KVEqnNP5}\\
&&  \delta\xi_{00'}-  D\xi_{01'} - \xi_{00'} \overline{\alpha} -  \xi_{00'} \beta + \xi_{01'} \epsilon 
-  \xi_{01'} \overline{\epsilon} + \xi_{11'} \kappa -  \xi_{00'} 
\overline{\pi} -  \xi_{01'} \overline{\rho} -  \xi_{10'} \sigma = 0, \label{KVEqnNP6}\\
&&  \overline{\delta}\xi_{11'}-
   \Delta\xi_{10'} +\xi_{11'} \alpha + \xi_{11'} \overline{\beta} -  \xi_{10'} \gamma + 
\xi_{10'} \overline{\gamma} + \xi_{01'} \lambda + \xi_{10'} 
\overline{\mu} -  \xi_{00'} \nu + \xi_{11'} \overline{\tau} = 0, \label{KVEqnNP7}\\
&& \overline{\delta}\xi_{10'} + \xi_{10'} \alpha - \xi_{10'} \overline{\beta} + \xi_{00'} 
\lambda - \xi_{11'} \overline{\sigma} = 0, \label{KVEqnNP8}
  \\
&&
   \delta\xi_{10'} +  \overline{\delta}\xi_{01'}- \xi_{01'} \alpha - \xi_{10'} \overline{\alpha} +  \xi_{10'} \beta +  
\xi_{01'} \overline{\beta} + \xi_{00'} \mu +  \xi_{00'} 
\overline{\mu} - \xi_{11'} \rho - \xi_{11'} \overline{\rho} = 0, \label{KVEqnNP9}\\
&&  \overline{\delta}\xi_{00'}-  D\xi_{10'} - \xi_{00'} \alpha -  \xi_{00'} \overline{\beta} -  \xi_{10'} 
\epsilon + \xi_{10'} \overline{\epsilon} + \xi_{11'} 
\overline{\kappa} -  \xi_{00'} \pi -  \xi_{10'} \rho -  \xi_{01'} 
\overline{\sigma} = 0. \label{KVEqnNP10}
\end{eqnarray}
\end{subequations}}
\end{remark}

In the remainder of this section we investigate these conditions on
$\mathcal{H}_1\cup\mathcal{H}_2$. 

\subsubsection{The condition $H_{AA'BC}=0$ on $\mathcal{Z}=\mathcal{H}_1\cap\mathcal{H}_2$} 

On  $\mathcal{Z}=\mathcal{H}_1\cap\mathcal{H}_2$ equations
\eqref{NPH1}-\eqref{NPH8} reduce to:
\begin{subequations}
\begin{eqnarray}
&& D\kappa_0 =0, \label{NPHZ1}\\
&& \Delta \kappa_2 =0, \label{NPHZ2}\\
&& \delta \kappa_0 -2\beta \kappa_0=0, \label{NPHZ3}\\
&& \Delta \kappa_0 + 2 \delta \kappa_1 + 2\tau \kappa_1 =0, \label{NPHZ4}\\
&& 2\Delta \kappa_1 +\delta \kappa_2 +2\beta \kappa_2 +2 \tau \kappa_2
=0, \label{NPHZ5}\\
&& 2 D\kappa_1+ \overline{\delta}\kappa_0  -2\alpha \kappa_0=0, \label{NPHZ6}\\
&& D\kappa_2 + 2 \overline{\delta}\kappa_1 =0, \label{NPHZ7}\\
&& \overline{\delta}\kappa_2 + 2 \alpha \kappa_2 =0. \label{NPHZ8}
\end{eqnarray}
\end{subequations}

In what follows, we regard equations \eqref{NPHZ3} and
\eqref{NPHZ8} as intrinsic to $\mathcal{Z}$. Making use of  the
operators $\eth$ and $\overline{\eth}$ (see \eqref{eths} for their
explicit form) these conditions can be concisely rewritten as
\begin{subequations}
\begin{eqnarray}
&&\eth \kappa_0 =\tau \kappa_0, \label{Kappa0Kappa2Za}\\
&& \overline{\eth}\kappa_2=-\overline{\tau}\kappa_2. \label{Kappa0Kappa2Zb}
\end{eqnarray}
\end{subequations}

\begin{remark}
{\em Equations \eqref{NPHZ1}-\eqref{NPHZ8} do not constrain the value of
the coefficient $\kappa_1$ on $\mathcal{Z}$. Instead, given an arbitrary (smooth)
choice of $\kappa_1$ and coefficients $\kappa_0$ and $\kappa_2$
satisfying the equations in \eqref{Kappa0Kappa2Za}-\eqref{Kappa0Kappa2Zb}, we regard equations
\eqref{NPHZ2}, \eqref{NPHZ4} and \eqref{NPHZ5} as prescribing the
initial values of the derivatives $\Delta \kappa_0$, $\Delta\kappa_1$
and $\Delta \kappa_2$ that need to be provided for the transport
equations
\eqref{KillingSpinorWaveEquationNP1H1}-\eqref{KillingSpinorWaveEquationNP3H1}
along $\mathcal{H}_1$. Similarly, we use equations \eqref{NPHZ1},
\eqref{NPHZ6} and \eqref{NPHZ7} to prescribe the initial values of the
derivatives $D \kappa_0$, $D\kappa_1$
and $D \kappa_2$ which are used, in turn, to solve the transport
equations
\eqref{KillingSpinorWaveEquationNP1H2}-\eqref{KillingSpinorWaveEquationNP3H2}
along $\mathcal{H}_2$.} 
\end{remark}

\subsubsection{The condition $H_{AA'BC}=0$ on $\mathcal{H}_1$} 

On  $\mathcal{H}_1$ equations
\eqref{NPH1}-\eqref{NPH8}  reduce to:
\begin{subequations}
\begin{eqnarray}
&& D\kappa_0 =0, \label{NPHH11}\\
&& \Delta \kappa_2 -2 \nu\kappa_1 + 2\gamma\kappa_2 =0, \label{NPHH12}\\
&& \delta \kappa_0 -2\beta \kappa_0=0, \label{NPHH13}\\
&& \Delta \kappa_0 + 2\delta \kappa_1 -2(\gamma +\mu)\kappa_0 + 2\tau
\kappa_1 =0, \label{NPHH14}\\
&& 2 \Delta \kappa_1 +\delta \kappa_2 + 2(\beta +\tau)\kappa_2 -2
\mu\kappa_1 -2 \nu \kappa_0 =0, \label{NPHH15}\\
&& 2 D\kappa_1 +\overline{\delta}\kappa_0 -2\alpha \kappa_0=0, \label{NPHH16}\\
&& D\kappa_2 + 2\overline{\delta}\kappa_1 =0, \label{NPHH17}\\
&& \overline{\delta}\kappa_2 + 2\alpha\kappa_2=0. \label{NPHH18}
\end{eqnarray}
\end{subequations}
Equations \eqref{NPHH11}, \eqref{NPHH16} and \eqref{NPHH17} are
interpreted as propagation equations along the null generators of
$\mathcal{H}_1$ which are used to propagate the initial values of $\kappa_0$,
$\kappa_1$ and $\kappa_2$  at $\mathcal{Z}$. In order to understand
the role equations \eqref{NPHH13} and \eqref{NPHH18} we consider the
expressions
\[
D(\delta \kappa_0 -2\beta \kappa_0), \qquad D(\overline{\delta}\kappa_2 + 2\alpha\kappa_2).
\]
A direct computation using the NP commutators shows that
\[
D(\delta \kappa_0 -2\beta \kappa_0) =-2 \kappa_0 D\beta \qquad
D(\overline{\delta}\kappa_2 + 2\alpha\kappa_2)=2\kappa_2 D\alpha -2
(\alpha-\overline{\beta})\overline{\delta}\kappa_1 - 2 \overline{\delta}^2\kappa_1.
\]
Evaluating the Ricci identities on $\mathcal{H}_1$ one finds
that $D\alpha=D\beta=0$ ---see also Table
\ref{Table:CharacteristicInitialData}. Thus, it follows that
\[
D(\delta \kappa_0 -2\beta \kappa_0) =0, \qquad D(\overline{\delta}\kappa_2
+ 2\alpha\kappa_2)=-2 \overline{\eth}^2 \kappa_1.
\]
Accordingly, equation \eqref{NPHH13} holds along $\mathcal{H}_1$ if it
is satisfied on $\mathcal{Z}$ ---this is equivalent to requiring
condition \eqref{Kappa0Kappa2Za} on $\mathcal{Z}$. Observe, however, that in order to
obtain the same conclusion for equation \eqref{NPHH18} one needs
$\overline{\eth}^2 \kappa_1=0$ on $\mathcal{H}_1$. 

\medskip
It remains to consider equations \eqref{NPHH12}, \eqref{NPHH14} and
\eqref{NPHH15}. These prescribe the value of the transversal
derivatives $\Delta \kappa_0$, $\Delta \kappa_1$ and $\Delta
\kappa_2$. Recall, however, that from the discussion of Section \ref{Section:KillingSpinorCandidate}
these derivatives satisfy transport equations along the generators of
$\mathcal{H}_1$. Thus, some compatibility conditions will
arise. Substituting the value of $\Delta \kappa_0$, given by equation
\eqref{NPHH14} into the transport equation
\eqref{KillingSpinorWaveEquationNP1H1}, and then using the NP
commutators, NP Ricci identities and equations
\eqref{NPHH11}, \eqref{NPHH16} and \eqref{NPHH17} to simplify one
obtains the condition 
\[
\Psi_2 \kappa_0=0.
\]
Similarly, substituting the value of $\Delta \kappa_1$ given by
equation \eqref{NPHH15} into the transport equation
\eqref{KillingSpinorWaveEquationNP2H1} and proceeding in similar
manner one finds the further condition
\[
\Psi_3 \kappa_0=0.
\]
Finally, the substitution of the value of $\Delta \kappa_2$ as given
by equation \eqref{NPHH12} eventually leads to the condition
\[
\Psi_4 \kappa_0 + 2 \Psi_3 \kappa_1 - 3\Psi_2 \kappa_2=0.
\]

\medskip
One can summarise the discussion of this subsection as follows:

\begin{lemma}
\label{KillingSpinorH1}
Assume that equations \eqref{NPHH11}, \eqref{NPHH16} and
\eqref{NPHH17} hold along $\mathcal{H}_1$ with initial data for
$\kappa_0$ and $\kappa_2$ on $\mathcal{Z}$ satisfying equations
\eqref{Kappa0Kappa2Za} and \eqref{Kappa0Kappa2Zb}, respectively, and that, in addition, 
\[
\overline{\eth}^2 \kappa_1=0, \quad \Psi_2 \kappa_0=0, \quad \Psi_3\kappa_0=0,
\quad \Psi_4 \kappa_0 + 2\Psi_3\kappa_1 - 3\Psi_2 \kappa_2=0, \quad
\mbox{on} \quad \mathcal{H}_1.
\] 
Then, one has that 
\[
H_{A'ABC}=0 \quad \mbox{on} \quad \mathcal{H}_1.
\]
\end{lemma}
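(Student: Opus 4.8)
The plan is to check directly, component by component, that the eight NP equations \eqref{NPHH11}--\eqref{NPHH18} encoding $H_{A'ABC}=0$ on $\mathcal{H}_1$ all hold. Three of them, \eqref{NPHH11}, \eqref{NPHH16} and \eqref{NPHH17}, are assumed outright. The remaining five split into two groups according to whether they involve only derivatives intrinsic to $\mathcal{H}_1$ (the pair \eqref{NPHH13}, \eqref{NPHH18}) or also the transversal derivatives $\Delta\kappa_j$ (the triple \eqref{NPHH14}, \eqref{NPHH15}, \eqref{NPHH12}). Throughout I would use that, by Proposition \ref{ExistenceKSWaveEquation}, the candidate $\kappa_{AB}$ solves the wave equation \eqref{KillingSpinorWaveEquation}, so on $\mathcal{H}_1$ the quantities $\Delta\kappa_j$ obey the transport equations \eqref{KillingSpinorWaveEquationNP1H1}--\eqref{KillingSpinorWaveEquationNP3H1} along the $D$-generators, with data on $\mathcal{Z}$.

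For the intrinsic pair, I would compute $D(\delta\kappa_0-2\beta\kappa_0)$ and $D(\overline{\delta}\kappa_2+2\alpha\kappa_2)$ using the NP commutators, the gauge conditions \eqref{NPGaugeCondition1}--\eqref{NPGaugeCondition2} together with the vanishing of $\rho$, $\sigma$, $\lambda$ on $\mathcal{H}_1$ (Table \ref{Table:CharacteristicInitialData}), and the Ricci identities, which on $\mathcal{H}_1$ give $D\alpha=D\beta=0$. Using \eqref{NPHH11} in the first case and \eqref{NPHH17} in the second, these reduce to $D(\delta\kappa_0-2\beta\kappa_0)=-2\kappa_0 D\beta=0$ and $D(\overline{\delta}\kappa_2+2\alpha\kappa_2)=-2\,\overline{\eth}^2\kappa_1$, where the identification $\overline{\eth}^2\kappa_1=\overline{\delta}^2\kappa_1+(\alpha-\overline{\beta})\overline{\delta}\kappa_1$ follows from the definitions \eqref{eths} and the spin weights of $\kappa_1$ and $\overline{\delta}\kappa_1$; the latter vanishes by the hypothesis $\overline{\eth}^2\kappa_1=0$ on $\mathcal{H}_1$. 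Since both integrands are zero, it suffices to see that $\delta\kappa_0-2\beta\kappa_0$ and $\overline{\delta}\kappa_2+2\alpha\kappa_2$ vanish on $\mathcal{Z}$; but there $\tau=\overline{\alpha}+\beta$, so \eqref{eths} turns \eqref{Kappa0Kappa2Za} and \eqref{Kappa0Kappa2Zb} into exactly $\delta\kappa_0=2\beta\kappa_0$ and $\overline{\delta}\kappa_2=-2\alpha\kappa_2$. Hence \eqref{NPHH13} and \eqref{NPHH18} hold along $\mathcal{H}_1$.

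For the transversal triple I would read \eqref{NPHH14}, \eqref{NPHH15} and \eqref{NPHH12} as algebraic expressions for $\Delta\kappa_0$, $\Delta\kappa_1$ and $\Delta\kappa_2$ on $\mathcal{H}_1$ and substitute them into the transport equations \eqref{KillingSpinorWaveEquationNP1H1}, \eqref{KillingSpinorWaveEquationNP2H1} and \eqref{KillingSpinorWaveEquationNP3H1} respectively. Expanding the resulting $D\Delta\kappa_j$ with the NP commutators, eliminating $D\kappa_j$ and $\delta\kappa_0$, $\overline{\delta}\kappa_0$, $\overline{\delta}\kappa_2$ by means of the now-established equations \eqref{NPHH11}, \eqref{NPHH13}, \eqref{NPHH16}, \eqref{NPHH17}, \eqref{NPHH18}, and inserting the Ricci and Bianchi identities in the present gauge (in particular $D\tau=D\Psi_2=0$ and the explicit $r$-profiles of $\gamma$, $\mu$, $\nu$, $\Psi_3$, $\Psi_4$ on $\mathcal{H}_1$ from Table \ref{Table:CharacteristicInitialData}), each transport equation should collapse to a single algebraic identity, namely $\Psi_2\kappa_0=0$, $\Psi_3\kappa_0=0$ and $\Psi_4\kappa_0+2\Psi_3\kappa_1-3\Psi_2\kappa_2=0$. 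These are precisely the remaining hypotheses, so the algebraically defined $\Delta\kappa_j$ solve the first-order transport system along the generators of $\mathcal{H}_1$; and on $\mathcal{Z}$, where $\gamma=\mu=\nu=0$, they reduce to \eqref{NPHZ4}, \eqref{NPHZ5} and \eqref{NPHZ2}, which are exactly the relations fixing the initial data $\Delta\kappa_j|_{\mathcal{Z}}$ for that system in the construction of $\kappa_{AB}$. By uniqueness of solutions of the transport equations the algebraic expressions must then agree with the true $\Delta\kappa_j$ on all of $\mathcal{H}_1$, that is, \eqref{NPHH14}, \eqref{NPHH15} and \eqref{NPHH12} hold there; together with the earlier steps this gives $H_{A'ABC}=0$ on $\mathcal{H}_1$.

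The conceptual content is light: everything rests on the homogeneity of the transport system and on the evaluation of a handful of commutator and propagation identities. The hard part will be purely computational, the lengthy substitution in the transversal triple: one must pick the correct NP commutator to commute $D$ past each angular derivative, invoke the right Ricci and Bianchi identities valid in the R\'acz gauge, and track spin and boost weights carefully when passing between $\eth,\overline{\eth}$ and $\delta,\overline{\delta}$, so that the unwieldy intermediate expressions genuinely cancel down to the three stated algebraic conditions.
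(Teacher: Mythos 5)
Your proposal is correct and follows essentially the same route as the paper: the intrinsic pair \eqref{NPHH13}, \eqref{NPHH18} is propagated off $\mathcal{Z}$ by computing $D(\delta\kappa_0-2\beta\kappa_0)$ and $D(\overline{\delta}\kappa_2+2\alpha\kappa_2)$ with $D\alpha=D\beta=0$, and the transversal triple is recovered by substituting the algebraic expressions for $\Delta\kappa_j$ into the transport equations, which collapse to exactly the three curvature conditions in the hypotheses. Your explicit appeal to uniqueness of the transport ODEs with the initial data \eqref{NPHZ2}, \eqref{NPHZ4}, \eqref{NPHZ5} on $\mathcal{Z}$ only makes precise a step the paper leaves implicit.
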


\subsubsection{The condition $H_{AA'BC}=0$ on $\mathcal{H}_2$} 

On  $\mathcal{H}_2$ equations
\eqref{NPH1}-\eqref{NPH8} reduce to:
\begin{subequations}
\begin{eqnarray}
&& D\kappa_0 =0, \label{NPHH21}\\
&& \Delta \kappa_2 =0, \label{NPHH22}\\
&& \delta \kappa_0 -2\beta \kappa_0+2\sigma \kappa_1=0, \label{NPHH23}\\
&& \Delta \kappa_0 + 2\delta \kappa_1 + 2\tau
\kappa_1 + 2\sigma\kappa_2=0, \label{NPHH24}\\
&& 2 \Delta \kappa_1 +\delta \kappa_2 + 2(\beta +\tau)\kappa_2 =0, \label{NPHH25}\\
&& 2 D\kappa_1 +\overline{\delta}\kappa_0 -2\alpha \kappa_0+ 2\rho\kappa_1=0, \label{NPHH26}\\
&& D\kappa_2 + 2\overline{\delta}\kappa_1+2 \rho\kappa_2 =0, \label{NPHH27}\\
&& \overline{\delta}\kappa_2 + 2\alpha\kappa_2=0. \label{NPHH28}
\end{eqnarray}
\end{subequations}
In analogy with the analysis on $\mathcal{H}_2$, in what follows we
regard equations \eqref{NPHH22}, \eqref{NPHH24} and \eqref{NPHH25} as
propagation equations for the components $\kappa_0$, $\kappa_1$ and
$\kappa_2$ along the generators of $\mathcal{H}_2$. Initial data for
these equations is naturally prescribed on $\mathcal{Z}$. 

\medskip
Now, regarding equation \eqref{NPHH28}, a direct computation shows that
\[
\Delta (\overline{\delta}\kappa_2 + 2\alpha\kappa_2) =0.
\]
Thus, if equation \eqref{NPHH28} is satisfied on $\mathcal{Z}$ then it holds along
the generators of $\mathcal{H}_2$ ---this equivalent to requiring \eqref{Kappa0Kappa2Zb}. A similar computation with equation
\eqref{NPHH23} yields the more complicated relation
\[
\Delta(\delta \kappa_0 -2\beta \kappa_0+2\sigma \kappa_1) =  -2 \eth^2
\kappa_1 -2 \kappa_2 \delta \sigma-3\sigma \delta
\kappa_2-2\overline{\alpha}\sigma \kappa_2.
\]
Observe that if $\kappa_2=0$ along $\mathcal{H}_2$, then the
obstruction to the propagation of equation \eqref{NPHH23} reduces to
the simple condition $\eth^2
\kappa_1=0$ which is somehow complementary to the condition
$\overline{\eth}^2\kappa_1=0$ on $\mathcal{H}_1$. 

\medskip
It remains to analyse the compatibility of equations \eqref{NPHH21},
\eqref{NPHH26} and \eqref{NPHH27} with the transport equations
\eqref{KillingSpinorWaveEquationNP1H2}-\eqref{KillingSpinorWaveEquationNP3H2}.
Substituting $D\kappa_{1}$, $\Delta\kappa_{0}$,  $D\kappa_{0}$
and $\delta\kappa_{0}$ given by equations \eqref{NPHH21},
\eqref{NPHH24}, \eqref{NPHH26} and \eqref{NPHH23} into equation
\eqref{KillingSpinorWaveEquationNP1H2} one obtains after some
manipulations the condition
\[
\Psi_{0}\kappa_{2} + 2\Psi_{1}\kappa_{1} - 3\Psi_{2}\kappa_{0} = 0.
\]
Similarly, after substituting $D\kappa_{1}$, $\Delta\kappa_{1}$ and
$D\kappa_{2}$ given by equations \eqref{NPHH26}, \eqref{NPHH25} and
\eqref{NPHH27} into equation \eqref{KillingSpinorWaveEquationNP2H2}
one obtains the condition
\[
\Psi_1 \kappa_2 =0.
\]

Finally, by substituting  $D\kappa_{2},
\Delta\kappa_{2}$ and $\overline{\delta}\kappa_{2}$ given
by \eqref{NPHH27}, \eqref{NPHH22} and
\eqref{NPHH28} into equation \eqref{KillingSpinorWaveEquationNP3H2},
one obtains the condition
\[
\Psi_{2}\kappa_{2} = 0.
\]

One can summarise the discussion of this subsection as follows:

\begin{lemma}
\label{KillingSpinorH2}
Assume that equations \eqref{NPHH22}, \eqref{NPHH24} and
\eqref{NPHH25} hold along $\mathcal{H}_2$ with initial data for
$\kappa_0$ and $\kappa_2$ on $\mathcal{Z}$ satisfying conditions
\eqref{Kappa0Kappa2Za} and \eqref{Kappa0Kappa2Zb}, respectively, and that, in addition, 
\[
\eth^2 \kappa_1 + \kappa_{2}\delta\sigma + \frac{3}{2}\sigma\delta\kappa_{2} + \overline{\alpha}\sigma\kappa_{2}=0, \quad \Psi_2 \kappa_2=0, \quad \Psi_1\kappa_2=0,
\quad \Psi_0 \kappa_2 + 2\Psi_1\kappa_1 - 3\Psi_2 \kappa_0=0, \quad
\mbox{on} \quad \mathcal{H}_2.
\] 
Then, one has that 
\[
H_{A'ABC}=0 \quad \mbox{on} \quad \mathcal{H}_2.
\]
\end{lemma}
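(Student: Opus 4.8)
The plan is to mimic the structure already used for the analysis on $\mathcal{H}_1$ in Lemma \ref{KillingSpinorH1}, carrying out the analogous bookkeeping on $\mathcal{H}_2$, where now $\Delta$ is intrinsic (along the generators), $\delta,\overline{\delta}$ are intrinsic to the cross-sections $\mathcal{Z}_u$, and $D$ is the transverse derivative. First I would take equations \eqref{NPH1}--\eqref{NPH8} and impose the gauge conditions valid on $\mathcal{H}_2$ (namely $\kappa=\pi=\epsilon=0$ from \eqref{NPGaugeCondition1}, $\gamma=0$ from \eqref{NPGaugeCondition4}, $\nu=0$ from \eqref{NPGaugeCondition3}, $\mu=\lambda=0$ and $U=X^{\mathcal{A}}=\omega=0$ from Table \ref{Table:CharacteristicInitialData}, while $\rho$ and $\sigma$ are generically nonzero off $\mathcal{Z}$) to obtain precisely the reduced system \eqref{NPHH21}--\eqref{NPHH28}. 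I would then declare \eqref{NPHH22}, \eqref{NPHH24}, \eqref{NPHH25} to be the transport (propagation) equations for $\kappa_2,\kappa_0,\kappa_1$ along the generators, with data on $\mathcal{Z}$, so that these hold by construction.

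Next I would treat the two ``spatial'' constraints \eqref{NPHH23} and \eqref{NPHH28} by propagating them: compute $\Delta(\overline{\delta}\kappa_2+2\alpha\kappa_2)$ and $\Delta(\delta\kappa_0-2\beta\kappa_0+2\sigma\kappa_1)$ using the NP commutators $[\Delta,\delta]$, the Ricci identities giving $\Delta\alpha=\Delta\beta=0$ (Table \ref{Table:CharacteristicInitialData}), the values of $\Delta\rho$, $\Delta\sigma$ from the table, and the transport equations \eqref{NPHH22}, \eqref{NPHH24}, \eqref{NPHH25} to eliminate $\Delta\kappa_j$. The first should propagate with zero right-hand side, so \eqref{NPHH28} reduces to its restriction \eqref{Kappa0Kappa2Zb} on $\mathcal{Z}$; the second should produce a right-hand side of the form $-2\eth^2\kappa_1-2\kappa_2\delta\sigma-3\sigma\delta\kappa_2-2\overline{\alpha}\sigma\kappa_2$, exactly the combination appearing in the lemma, so that demanding this vanish (together with \eqref{Kappa0Kappa2Za} on $\mathcal{Z}$) makes \eqref{NPHH23} hold along $\mathcal{H}_2$.

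The remaining task is the compatibility of \eqref{NPHH21}, \eqref{NPHH26}, \eqref{NPHH27} ---which fix the transverse derivatives $D\kappa_0$, $D\kappa_1$, $D\kappa_2$--- with the wave-equation-derived transport equations \eqref{KillingSpinorWaveEquationNP1H2}--\eqref{KillingSpinorWaveEquationNP3H2}. For each, I would substitute the proposed values of $D\kappa_0$, $D\kappa_1$, $D\kappa_2$ (and of $\delta\kappa_0$, $\Delta\kappa_0$, $\Delta\kappa_1$ where they occur) into the relevant second-order NP wave equation, then systematically reduce using the NP commutators $[\Delta,D]$, $[\delta,\overline\delta]$ (the latter involving $K_{\mathcal{G}}$ via \eqref{laplace}), the NP Ricci and Bianchi identities, and the Weyl-component values from Table \ref{Table:CharacteristicInitialData} (where on $\mathcal{H}_2$ one has $\Psi_3=\Psi_4=0$ but $\Psi_0,\Psi_1$ generically nonzero). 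The expectation is that \eqref{KillingSpinorWaveEquationNP1H2} collapses to $\Psi_0\kappa_2+2\Psi_1\kappa_1-3\Psi_2\kappa_0=0$, \eqref{KillingSpinorWaveEquationNP2H2} to $\Psi_1\kappa_2=0$, and \eqref{KillingSpinorWaveEquationNP3H2} to $\Psi_2\kappa_2=0$. Collecting all of these reproduces the hypotheses of the lemma, and then \eqref{NPH1}--\eqref{NPH8}, i.e.~$H_{A'ABC}=0$, hold on all of $\mathcal{H}_2$.

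The main obstacle I anticipate is purely computational: the substitutions into \eqref{KillingSpinorWaveEquationNP1H2}--\eqref{KillingSpinorWaveEquationNP3H2} generate long expressions in which the $\Psi_0$-, $\Psi_1$-dependent and $\rho$-, $\sigma$-dependent terms must cancel against each other after repeated use of the Bianchi identities on $\mathcal{H}_2$ (e.g.~those giving $D\Psi_2$, $\delta\Psi_1$ and the $u$-derivatives of $\Psi_0,\Psi_1$). Keeping careful track of spin weights and of which coefficients vanish on $\mathcal{H}_2$ versus only on $\mathcal{Z}$ is essential to see the collapse to the clean algebraic conditions. A secondary subtlety is confirming that the $\kappa_1$-terms organise into the $\eth^2\kappa_1$ combination rather than a bare $\delta^2\kappa_1$, which follows from the definition \eqref{eths} of $\eth$ together with $\tau=\overline{\alpha}+\beta$ from \eqref{NPGaugeCondition2}; once that is verified the statement of Lemma \ref{KillingSpinorH2} follows by assembling the pieces.
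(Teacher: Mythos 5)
Your proposal follows essentially the same route as the paper: the same split into transport equations \eqref{NPHH22}, \eqref{NPHH24}, \eqref{NPHH25}, the same $\Delta$-propagation of the constraints \eqref{NPHH23} and \eqref{NPHH28} (yielding exactly the combination $-2\eth^2\kappa_1-2\kappa_2\delta\sigma-3\sigma\delta\kappa_2-2\overline{\alpha}\sigma\kappa_2$ and the reduction to \eqref{Kappa0Kappa2Zb} on $\mathcal{Z}$), and the same compatibility check of \eqref{NPHH21}, \eqref{NPHH26}, \eqref{NPHH27} against \eqref{KillingSpinorWaveEquationNP1H2}--\eqref{KillingSpinorWaveEquationNP3H2} producing the three curvature conditions. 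This is correct and matches the paper's proof.
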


\begin{remark}\label{IntegrabilityCondition}
{\em One can show that the curvature conditions in Lemmas \ref{KillingSpinorH1} and \ref{KillingSpinorH2} are in fact components of the equation
\begin{equation*}
\Psi_{(ABC}{}^{F}\kappa_{D)F}=0.
\end{equation*}
The other components of this equation are trivially satisfied. As this
is a basis independent expression, the curvature conditions are
satisfied in all spin bases, not just the parallelly propagated
one. One can check this by considering Lorentz transformations and
null rotations about $l^a$ and $n^a$, and show that these conditions are
preserved. The equation above can be shown to be an integrability
condition for the Killing spinor equation, so it is unsurprising to
find components of it arising naturally from the analysis.}
\end{remark}

\subsubsection{The condition $S_{AA'BB'}=0$ at $\mathcal{Z}$}\label{KVFonZ}

Using the properties of $\mathcal{Z}$, as given explicitly in Table
\ref{Table:CharacteristicInitialData},  together with the conditions
\eqref{NPHZ1}-\eqref{NPHZ8} 
implied by the equation $H_{AA'BC}=0$ on $\mathcal{Z}$, equations \eqref{XiExpanded1}-\eqref{XiExpanded4} reads as
\begin{subequations}
\begin{eqnarray}
&& \xi_{11'} = -\frac{3}{2}(\eth\kappa_2 +\tau\kappa_2)\,, \label{KVatZ1}
\\
&& \xi_{10'} = -3 \overline{\eth} \kappa_1\,, \label{KVatZ2}\\
&& \xi_{01'} = 3 \eth\kappa_1\,, \label{KVatZ3}\\
&& \xi_{00'} = \frac{3}{2}(\eth \kappa_0 - \overline{\tau} \kappa_0)\,, \label{KVatZ4}
\end{eqnarray}
\end{subequations}
while on $\cZ$ equations \eqref{KVEqnNP1}-\eqref{KVEqnNP10} reduce to
\begin{subequations}
\begin{eqnarray}
&& D\xi_{00'} =0, \label{KVEqnNPZ1}\\
&& \Delta \xi_{11'} =0, \label{KVEqnNPZ2}\\
&& D\xi_{11'}+\Delta \xi_{00'}  -\tau\xi_{10'} - \overline{\tau}\xi_{01'} = 0, \label{KVEqnNPZ3}\\
&& \Delta\xi_{01'} - \delta\xi_{11'}- 2\tau \xi_{11'} = 0,\label{KVEqnNPZ4}\\
&& \delta\xi_{01'}  + (\overline{\alpha} -  \beta) \xi_{01'} = 0, \label{KVEqnNPZ5}\\
&& D\xi_{01'} - \delta \xi_{00'} + \tau \xi_{00'}  = 0, \label{KVEqnNPZ6}\\
&& \Delta\xi_{10'} - \overline{\delta}\xi_{11'} - 2 \overline{\tau} \xi_{11'} =
   0, \label{KVEqnNPZ7}\\
&& \overline{\delta}\xi_{10'}+ (\alpha -\overline{\beta}) \xi_{10'} = 0, \label{KVEqnNPZ8}\\
&&  \delta\xi_{10'} +  \overline{\delta}\xi_{01'} -( \overline{\alpha} -  \beta)
   \xi_{10'} -( \alpha- \overline{\beta}) \xi_{01'} = 0, \label{KVEqnNPZ9}\\
&&D\xi_{10'} - \overline{\delta}\xi_{00'} + \overline{\tau}\xi_{00'} = 0. \label{KVEqnNPZ10}
\end{eqnarray}
\end{subequations}
Equations \eqref{KVEqnNPZ5}, \eqref{KVEqnNPZ8} and \eqref{KVEqnNPZ9}
can be read as intrinsic equations for $\xi_{01'} $ and
$\xi_{10'}$. Expressing these in terms of the $\eth$ and $\overline{\eth}$
operators, observing that the spin-weight of $\xi_{01'}$ and
$\xi_{10'}$ are respectively $-1$ and $1$, one has that
\begin{subequations}
\begin{eqnarray}
&& \eth \xi_{01'} =0, \label{ConstraintAtZ1}\\
&& \overline{\eth} \xi_{10'} =0, \label{ConstraintAtZ2}\\
&& \eth \xi_{10'} + \overline{\eth}\xi_{01'} =0. \label{ConstraintAtZ3}
\end{eqnarray} 
\end{subequations}
Substituting conditions \eqref{KVatZ2}-\eqref{KVatZ3} into
conditions \eqref{ConstraintAtZ1}-\eqref{ConstraintAtZ2} above yield the simple conditions

\begin{equation}\label{preKVFonZ}
\eth^2\kappa_1 =0, \qquad \overline{\eth}^2 \kappa_1=0\,,
\end{equation}
whereas \eqref{ConstraintAtZ3}---as $\kappa_1$ is of zero spin-weight
quantity---reduces to the commutation relation
\[
\eth\overline{\eth}\kappa_1 - \overline{\eth}\eth\kappa_1=0\,.
\]

\begin{remark}
 {\em The above expressions indicate that the component
$\kappa_1$ has a very specific multipolar structure. Note, however,
that the $\eth$ and $\overline{\eth}$ above are not the ones corresponding
to $\mathbb{S}^2$ but of a 2-manifold diffeomorphic to it. Thus, in
order to further the discussion one needs to consider the conformal
properties of the operators. }
\end{remark}

Crucially, one can also show that equations
\eqref{KVEqnNPZ1}-\eqref{KVEqnNPZ4},
\eqref{KVEqnNPZ6}-\eqref{KVEqnNPZ7} and \eqref{KVEqnNPZ10} are implied
by equations \eqref{NPHZ1}-\eqref{NPHZ8}, the Ricci equations, and the
conditions of Lemmas 2 and 3 (which must be satisfied on
$\mathcal{Z}=\mathcal{H}_{1}\cap\mathcal{H}_{2}$. Summarising:

\begin{lemma}
\label{KillingVectorZ}
Assume that equations \eqref{NPHZ1}-\eqref{NPHZ8} hold on
$\mathcal{Z}$ and that, in addition,
\[
\eth^2\kappa_1 =0, \qquad \overline{\eth}^2 \kappa_1=0, \quad
\mbox{on} \quad \mathcal{Z}.
\]
Then one has that 
\[
S_{AA'BB'}=0 \quad \mbox{on} \quad \mathcal{Z}.
\]
\end{lemma}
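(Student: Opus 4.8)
The plan is to verify the ten Newman--Penrose components \eqref{KVEqnNP1}--\eqref{KVEqnNP10} of the equation $S_{AA'BB'}=0$ after restriction to $\mathcal{Z}$. A preliminary reduction uses Table~\ref{Table:CharacteristicInitialData} --- on $\mathcal{Z}$ one has $\rho=\sigma=\mu=\lambda=\nu=\gamma=\omega=0$, $\Psi_0=\Psi_1=\Psi_3=\Psi_4=0$, $\kappa=\pi=\epsilon=0$ and $\tau=\overline{\alpha}+\beta$ --- together with the eight relations \eqref{NPHZ1}--\eqref{NPHZ8} implied by $H_{AA'BC}=0$: inserting these into \eqref{XiExpanded1}--\eqref{XiExpanded4} collapses the components of $\xi_{AA'}$ to \eqref{KVatZ1}--\eqref{KVatZ4} and \eqref{KVEqnNP1}--\eqref{KVEqnNP10} to \eqref{KVEqnNPZ1}--\eqref{KVEqnNPZ10}, so it is enough to establish the latter ten. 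Of these, \eqref{KVEqnNPZ5}, \eqref{KVEqnNPZ8} and \eqref{KVEqnNPZ9} involve only $\delta$ and $\overline{\delta}$ and are therefore \emph{intrinsic} to $\mathcal{Z}$; as already seen in the discussion preceding the lemma, when written with $\eth$, $\overline{\eth}$ (i.e.\ as \eqref{ConstraintAtZ1}--\eqref{ConstraintAtZ3}) and with $\xi_{01'}=3\eth\kappa_1$, $\xi_{10'}=-3\overline{\eth}\kappa_1$ inserted, the first two reduce to $\eth^2\kappa_1=0$ and $\overline{\eth}^2\kappa_1=0$ --- precisely the two hypotheses --- while the third becomes $\eth\overline{\eth}\kappa_1-\overline{\eth}\eth\kappa_1=0$, which holds identically since $\kappa_1$ has spin weight zero and the right-hand side of \eqref{laplace} vanishes for $s=0$.

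It remains to treat the \emph{transversal} block \eqref{KVEqnNPZ1}--\eqref{KVEqnNPZ4}, \eqref{KVEqnNPZ6}, \eqref{KVEqnNPZ7}, \eqref{KVEqnNPZ10}, each of which contains a derivative $D\xi$ or $\Delta\xi$ pointing off $\mathcal{Z}$ along $\mathcal{H}_1$ or $\mathcal{H}_2$. Here I work in the setting where the conditions of Lemmas~\ref{KillingSpinorH1} and \ref{KillingSpinorH2} hold --- as they must in order that $H_{AA'BC}=0$ on all of $\mathcal{H}_1\cup\mathcal{H}_2$ --- so that in particular $D\kappa_0=0$ and $\Delta\kappa_2=0$ on $\mathcal{Z}$ and the curvature conditions of those lemmas, restricted to $\mathcal{Z}$, give $\Psi_2\kappa_0=\Psi_2\kappa_2=0$ there. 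For each of the seven equations I would: (i) reinsert the unreduced expressions \eqref{XiExpanded1}--\eqref{XiExpanded4}, generating second derivatives of $\kappa_0,\kappa_1,\kappa_2$; (ii) bring the mixed second derivatives to a standard order using the spin-weighted NP commutators $[D,\Delta]$, $[D,\delta]$, $[\Delta,\delta]$, $[\delta,\overline{\delta}]$, which also produces curvature terms; (iii) eliminate the surviving first derivatives using the $H_{AA'BC}=0$ relations along $\mathcal{H}_1$ and along $\mathcal{H}_2$, namely \eqref{NPHH11}--\eqref{NPHH18} and \eqref{NPHH21}--\eqref{NPHH28}, and --- for the genuinely transversal second derivatives left over --- the transport equations \eqref{KillingSpinorWaveEquationNP1H1}--\eqref{KillingSpinorWaveEquationNP3H1} and \eqref{KillingSpinorWaveEquationNP1H2}--\eqref{KillingSpinorWaveEquationNP3H2}; and (iv) simplify the remaining curvature coefficients with the NP Ricci and Bianchi identities. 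The expectation --- in line with the assertion stated just before the lemma --- is that after these cancellations each of the seven equations collapses to a linear combination of the algebraic conditions contained in the integrability condition $\Psi_{(ABC}{}^{F}\kappa_{D)F}=0$ of Remark~\ref{IntegrabilityCondition}, which on $\mathcal{Z}$ reduce to $\Psi_2\kappa_0=0$ and $\Psi_2\kappa_2=0$; since these hold by the curvature conditions noted above, all seven equations are satisfied. Together with the intrinsic block this yields $S_{AA'BB'}=0$ on $\mathcal{Z}$.

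The step I expect to be the main obstacle is precisely (iii)--(iv): there is no conceptual shortcut, and one must organise the computation so that every non-algebraic term cancels against another, leaving only the integrability conditions. A cleaner route, which I would try first, is to establish once and for all a tensorial identity expressing $S_{AA'BB'}$ as the sum of a first-order linear differential operator applied to $H_{AA'BC}$ and an algebraic term proportional to $\Psi_{(ABC}{}^{F}\kappa_{D)F}$ --- essentially the identity that lies behind the propagation system \eqref{WaveEquationH}--\eqref{WaveEquationS}. Granted such an identity, the conclusion is immediate: $H_{AA'BC}$ vanishes on all of $\mathcal{H}_1$ and on all of $\mathcal{H}_2$ by Lemmas~\ref{KillingSpinorH1} and \ref{KillingSpinorH2}, and since the tangent spaces of $\mathcal{H}_1$ and $\mathcal{H}_2$ together span the full tangent space at every point of $\mathcal{Z}$, every first covariant derivative of $H_{AA'BC}$ vanishes on $\mathcal{Z}$; the algebraic term vanishes on $\mathcal{Z}$ by the curvature conditions of the same lemmas; hence $S_{AA'BB'}=0$ on $\mathcal{Z}$. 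The one point that then requires checking is that $S_{AA'BB'}$ genuinely contains no second derivative of $\kappa_{AB}$ beyond those packaged inside $\nabla H_{AA'BC}$, which is exactly what the term-by-term Newman--Penrose computation confirms.
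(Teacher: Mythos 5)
Your main route is essentially the paper's own: the intrinsic block \eqref{KVEqnNPZ5}, \eqref{KVEqnNPZ8}, \eqref{KVEqnNPZ9} is converted via \eqref{KVatZ2}--\eqref{KVatZ3} into $\eth^2\kappa_1=0$, $\overline{\eth}^2\kappa_1=0$ and the spin-weight-zero commutator identity, while the seven transversal equations are argued to follow from the $H_{A'ABC}=0$ relations, the transport equations, the Ricci identities and the curvature conditions of Lemmas \ref{KillingSpinorH1} and \ref{KillingSpinorH2} restricted to $\mathcal{Z}$ (which indeed yield $\Psi_2\kappa_0=\Psi_2\kappa_2=0$ there, since $\Psi_0=\Psi_1=\Psi_3=\Psi_4=0$ on $\mathcal{Z}$). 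The paper does exactly this and, like you, does not display the transversal computation --- it simply asserts that those seven equations are implied; note also that both you and the paper thereby invoke hypotheses (the conditions of Lemmas \ref{KillingSpinorH1} and \ref{KillingSpinorH2}, which live on $\mathcal{H}_1\cup\mathcal{H}_2$) that are not literally listed in the lemma's statement, so the lemma must be read within the full Killing spinor data construction. Your alternative route --- expressing $S_{AA'BB'}$ as a first-order operator acting on $H_{A'ABC}$ plus at most an algebraic curvature remainder, and then using that $H_{A'ABC}$ vanishes on both null hypersurfaces, whose tangent spaces together span the full tangent space along $\mathcal{Z}$ --- is a genuinely different and more conceptual argument; it is the spinorial identity underlying the classical fact that in vacuum a Killing spinor automatically generates a (complex) Killing vector, and carried out it would replace the seven-equation Newman--Penrose verification by one covariant computation. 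As presented, however, it remains a sketch: the identity must still be derived and the pair-symmetric algebraic remainder pinned down (in vacuum one expects it to cancel), so it does not yet discharge the computational burden it is designed to avoid.
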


\subsubsection{The Killing vector equation on $\mathcal{H}_{1}$}

On $\mathcal{H}_{1}$, equations \eqref{KVEqnNP1}-\eqref{KVEqnNP10} reduce to:
\begin{subequations}
\begin{eqnarray}
&& D\xi_{00'} =0, \label{KVEqnNPH11}\\
&& \Delta \xi_{11'} + (\gamma + \overline{\gamma})\xi_{11'} + \nu\xi_{01'} + \overline{\nu}\xi_{10'}=0, \label{KVEqnNPH12}\\
&& D\xi_{11'}+\Delta \xi_{00'}  -\tau\xi_{10'} - \overline{\tau}\xi_{01'} - (\gamma + \overline{\gamma})\xi_{00'} = 0, \label{KVEqnNPH13}\\
&& \Delta\xi_{01'} - \delta\xi_{11'} - (\gamma - \overline{\gamma} + \mu)\xi_{01'} + \overline{\nu}\xi_{00'} - 2\tau \xi_{11'} = 0,\label{KVEqnNPH14}\\
&& \delta\xi_{01'}  + (\overline{\alpha} -  \beta) \xi_{01'} = 0, \label{KVEqnNPH15}\\
&& D\xi_{01'} - \delta \xi_{00'} + \tau \xi_{00'}  = 0, \label{KVEqnNPH16}\\
&& \Delta\xi_{10'} - \overline{\delta}\xi_{11'} - (\overline{\gamma}-\gamma + \overline{\mu})\xi_{10'} + \nu\xi_{00'} - 2 \overline{\tau} \xi_{11'} =
   0, \label{KVEqnNPH17}\\
&& \overline{\delta}\xi_{10'}+ (\alpha -\overline{\beta}) \xi_{10'} = 0, \label{KVEqnNPH18}\\
&&  \delta\xi_{10'} +  \overline{\delta}\xi_{01'} + (\mu + \overline{\mu})\xi_{00'} -( \overline{\alpha} -  \beta)
   \xi_{10'} -( \alpha- \overline{\beta}) \xi_{01'} = 0, \label{KVEqnNPH19}\\
&&D\xi_{10'} - \overline{\delta}\xi_{00'} + \overline{\tau}\xi_{00'} = 0. \label{KVEqnNPH110}
\end{eqnarray}
\end{subequations}

Substituting the components $\xi_{00'}, \xi_{01'}, \xi_{10'}$ and
$\xi_{11'}$, as given by \eqref{XiExpanded1}-\eqref{XiExpanded4}, into
these relations (being careful not to discard the $\Delta$ derivatives
of quantities which vanish on $\mathcal{H}_{1}$), and using equations
\eqref{NPHH11}-\eqref{NPHH18} and the Ricci equations, one finds that
\eqref{KVEqnNPH11}-\eqref{KVEqnNPH110} reduce to:
\begin{subequations}
\begin{eqnarray}
&& \eth^2\kappa_{1} = \kappa_{0}(\delta\mu+\mu\tau), \label{KVEqnNPH1Red1} \\
&&\overline{\eth}^2 \kappa_{1}=0, \label{KVEqnNPH1Red2}\\
&&\Psi_{2}\kappa_{0}=0, \label{KVEqnNPH1Red3}\\
&&\Psi_{3}\kappa_{0}=0, \label{KVEqnNPH1Red4}\\
&&\Psi_{4}\kappa_{0} + 2\Psi_{3}\kappa_{1} - 3\Psi_{2}\kappa_{2} = 0.\label{KVEqnNPH1Red5}
\end{eqnarray}
\end{subequations}

\begin{remark}
{\em The conditions \eqref{KVEqnNPH1Red2}-\eqref{KVEqnNPH1Red5} are exactly
the conditions of Lemma 2. The additional condition
\eqref{KVEqnNPH1Red1} must be satisfied on all of
$\mathcal{H}_{1}$. Note, however, that after some manipulations the condition
\[
D\left(\eth^2\kappa_{1} - \kappa_{0}(\delta\mu+\mu\tau)\right) = -2\delta(\Psi_{2}\kappa_{0})+4\beta\Psi_{2}\kappa_{0}=0
\]
can be shown to hold, where in the last step \eqref{KVEqnNPH1Red3} was
used. Accordingly, it suffices to guarantee \eqref{KVEqnNPH1Red1} on
$\mathcal{Z}$ as then it is satisfied on the whole of
$\mathcal{H}_{1}$ if condition \eqref{KVEqnNPH1Red3} holds on
$\mathcal{H}_1$. Furthermore, on $\mathcal{Z}$ the spin coefficient
$\mu$ vanishes, so \eqref{KVEqnNPH1Red1} reduces to $\eth^2\kappa_1
=0$ on $\mathcal{Z}$. Note that this is one of the
conditions appearing in Lemma \ref{KillingVectorZ}.}
\end{remark}

Summarising, we have the following lemma:
\begin{lemma}
\label{KillingVectorH1}
Assume that equations \eqref{NPHH11}-\eqref{NPHH18} hold on
$\mathcal{H}_1$, and the conditions of Lemmas \ref{KillingSpinorH1} and
\ref{KillingVectorZ} are satisfied. Then one has that
\[
S_{AA'BB'}=0 \quad \mbox{on} \quad \Hone.
\]
\end{lemma}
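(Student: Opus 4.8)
The plan is to follow the same strategy as in the proof of Lemma~\ref{KillingVectorZ}: the vanishing of $S_{AA'BB'}$ on $\mathcal{H}_1$ is equivalent to the ten scalar equations \eqref{KVEqnNPH11}--\eqref{KVEqnNPH110}, so it suffices to check that each of them follows from the stated hypotheses. First I would take the general NP form \eqref{KVEqnNP1}--\eqref{KVEqnNP10} of $S_{AA'BB'}=0$ and restrict it to $\mathcal{H}_1$, inserting the values of the metric functions and spin coefficients recorded in the $\mathcal{H}_1$ column of Table~\ref{Table:CharacteristicInitialData} (in particular $\kappa=\pi=\epsilon=0$ and $\rho=\sigma=\lambda=\Psi_0=\Psi_1=0$ identically on $\mathcal{H}_1$, while $\omega,U,X^{\mathcal{A}},\mu,\gamma,\nu,\Psi_3,\Psi_4$ are linear or quadratic in the affine parameter $r$); this yields \eqref{KVEqnNPH11}--\eqref{KVEqnNPH110}. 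I would then substitute the expressions \eqref{XiExpanded1}--\eqref{XiExpanded4} for $\xi_{00'},\xi_{01'},\xi_{10'},\xi_{11'}$ in terms of $\kappa_0,\kappa_1,\kappa_2$, being careful to retain the $\Delta$-derivatives of those spin coefficients and curvature components that vanish on $\mathcal{H}_1$ but not in a neighbourhood of it, since $\Delta$ is the transversal derivative there.

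The heart of the argument is the routine but lengthy simplification: using $H_{AA'BC}=0$ on $\mathcal{H}_1$, i.e.\ equations \eqref{NPHH11}--\eqref{NPHH18}, together with the NP commutators and the NP Ricci and Bianchi identities evaluated on $\mathcal{H}_1$, one collapses the ten relations \eqref{KVEqnNPH11}--\eqref{KVEqnNPH110} down to the five residual conditions \eqref{KVEqnNPH1Red1}--\eqref{KVEqnNPH1Red5}. Of these, \eqref{KVEqnNPH1Red2}--\eqref{KVEqnNPH1Red5} are precisely the curvature and multipole conditions $\overline{\eth}^2\kappa_1=0$, $\Psi_2\kappa_0=0$, $\Psi_3\kappa_0=0$, $\Psi_4\kappa_0+2\Psi_3\kappa_1-3\Psi_2\kappa_2=0$ on $\mathcal{H}_1$ appearing in Lemma~\ref{KillingSpinorH1}, hence are satisfied by hypothesis. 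I expect this reduction to be the main obstacle: it is where the computation is most delicate, both because of the number of terms and because one must track exactly which quantities and which of their derivatives survive on $\mathcal{H}_1$.

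It then remains to handle the one genuinely new relation \eqref{KVEqnNPH1Red1}, namely $\eth^2\kappa_1=\kappa_0(\delta\mu+\mu\tau)$. Rather than verifying it pointwise, I would show that it propagates along the generators of $\mathcal{H}_1$. A direct computation with the NP commutators and the relevant Ricci/Bianchi identities gives
\[
D\bigl(\eth^2\kappa_1 - \kappa_0(\delta\mu+\mu\tau)\bigr) = -2\,\delta(\Psi_2\kappa_0) + 4\beta\,\Psi_2\kappa_0,
\]
whose right-hand side vanishes because $\Psi_2\kappa_0=0$ on $\mathcal{H}_1$ (this is \eqref{KVEqnNPH1Red3}, already part of Lemma~\ref{KillingSpinorH1}). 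Hence $\eth^2\kappa_1-\kappa_0(\delta\mu+\mu\tau)$ is constant along the null generators of $\mathcal{H}_1$. On $\mathcal{Z}$ the coefficient $\mu$ vanishes, so there this quantity reduces to $\eth^2\kappa_1$, which is zero by the hypothesis drawn from Lemma~\ref{KillingVectorZ}. Therefore \eqref{KVEqnNPH1Red1} holds throughout $\mathcal{H}_1$, all ten equations \eqref{KVEqnNPH11}--\eqref{KVEqnNPH110} are satisfied, and consequently $S_{AA'BB'}=0$ on $\mathcal{H}_1$, as claimed.
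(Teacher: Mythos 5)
Your proposal is correct and follows essentially the same route as the paper: restrict \eqref{KVEqnNP1}--\eqref{KVEqnNP10} to $\Hone$, substitute \eqref{XiExpanded1}--\eqref{XiExpanded4} while retaining transversal $\Delta$-derivatives, reduce via \eqref{NPHH11}--\eqref{NPHH18} and the Ricci identities to the five conditions \eqref{KVEqnNPH1Red1}--\eqref{KVEqnNPH1Red5}, of which four are the hypotheses of Lemma~\ref{KillingSpinorH1} and the fifth is propagated along the generators from $\eth^2\kappa_1=0$ on $\mathcal{Z}$ using the identity $D\bigl(\eth^2\kappa_1-\kappa_0(\delta\mu+\mu\tau)\bigr)=-2\delta(\Psi_2\kappa_0)+4\beta\Psi_2\kappa_0$ and $\mu|_{\mathcal{Z}}=0$. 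This matches the paper's argument in all essentials, including the key propagation identity.
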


\subsubsection{The Killing vector equation on $\mathcal{H}_{2}$}

On $\mathcal{H}_{2}$, equations \eqref{KVEqnNP1}-\eqref{KVEqnNP10} reduce to:
\begin{subequations}
\begin{eqnarray}
&& D\xi_{00'} =0, \label{KVEqnNPH21}\\
&& \Delta \xi_{11'} = 0, \label{KVEqnNPH22}\\
&& D\xi_{11'}+\Delta \xi_{00'}  -\tau\xi_{10'} - \overline{\tau}\xi_{01'} = 0, \label{KVEqnNPH23}\\
&& \Delta\xi_{01'} - \delta\xi_{11'} - 2\tau \xi_{11'} = 0,\label{KVEqnNPH24}\\
&& \delta\xi_{01'}  + (\overline{\alpha} -  \beta) \xi_{01'} - \sigma\xi_{11'} = 0, \label{KVEqnNPH25}\\
&& D\xi_{01'} - \delta \xi_{00'} + \tau \xi_{00'} + \sigma\xi_{10'} + \rho\xi_{01'} = 0, \label{KVEqnNPH26}\\
&& \Delta\xi_{10'} - \overline{\delta}\xi_{11'} - 2 \overline{\tau} \xi_{11'} =
   0, \label{KVEqnNPH27}\\
&& \overline{\delta}\xi_{10'}+ (\alpha -\overline{\beta}) \xi_{10'} - \overline{\sigma}\xi_{11'} = 0, \label{KVEqnNPH28}\\
&&  \delta\xi_{10'} +  \overline{\delta}\xi_{01'} -( \overline{\alpha} -  \beta)
   \xi_{10'} -( \alpha- \overline{\beta}) \xi_{01'} -2\rho\xi_{11'} = 0, \label{KVEqnNPH29}\\
&&D\xi_{10'} - \overline{\delta}\xi_{00'} + \overline{\tau}\xi_{00'} + \overline{\sigma}\xi_{01'} + \rho\xi_{10'} = 0. \label{KVEqnNPH210}
\end{eqnarray}
\end{subequations}
Substituting the components $\xi_{00'}, \xi_{01'}, \xi_{10'}$ and
$\xi_{11'}$, as given by \eqref{XiExpanded1}-\eqref{XiExpanded4}, into
these relations (being careful not to discard the $D$ derivatives of
quantities which vanish on $\mathcal{H}_{2}$), and using equations
\eqref{NPHH21}-\eqref{NPHH28} and the Ricci equations, one finds that
\eqref{KVEqnNPH21}-\eqref{KVEqnNPH210} reduce to:
\begin{subequations}
\begin{eqnarray}
&&\eth^2 \kappa_1 + \kappa_{2}\delta\sigma + \frac{3}{2}\sigma\delta\kappa_{2} + \overline{\alpha}\sigma\kappa_{2}=0,  \label{KVEqnNPH2Red1} \\
&&\overline{\eth}^2 \kappa_{1} + \kappa_{2}\overline{\delta}\sigma - \frac{1}{2}\overline{\sigma}\delta\kappa_{2} - 3\overline{\alpha}\kappa_{2}\overline{\sigma}-\overline{\Psi}_{1}\kappa_{2} = 0, \label{KVEqnNPH2Red2}\\
&&\Psi_{1}\kappa_{2}=0, \label{KVEqnNPH2Red3}\\
&&\Psi_{2}\kappa_{2}=0, \label{KVEqnNPH2Red4}\\
&&\Psi_{0}\kappa_{2} + 2\Psi_{1}\kappa_{1} - 3\Psi_{2}\kappa_{0} = 0.\label{KVEqnNPH2Red5}
\end{eqnarray}
\end{subequations}

The conditions \eqref{KVEqnNPH2Red1} and
\eqref{KVEqnNPH2Red3}-\eqref{KVEqnNPH2Red5} are exactly the conditions
of Lemma 3. The additional condition \eqref{KVEqnNPH2Red2} must be
satisfied on all of $\mathcal{H}_{2}$.  Summarising, we have the
following lemma:

\begin{lemma}
\label{KillingVectorH2}
Assume that equations \eqref{NPHH21}-\eqref{NPHH28} hold on $\Htwo$, the conditions of Lemma \ref{KillingSpinorH2} are satisfied, and that in addition, 
\begin{equation*}
\overline{\eth}^2 \kappa_{1} + \kappa_{2}\overline{\delta}\sigma - \frac{1}{2}\overline{\sigma}\delta\kappa_{2} - 3\overline{\alpha}\kappa_{2}\overline{\sigma} -\overline{\Psi}_{1}\kappa_{2} =0\quad\mbox{on}\quad\Htwo.
\end{equation*}
Then one has that 
\[
S_{AA'BB'}=0 \quad \mbox{on} \quad \Htwo.
\]
\end{lemma}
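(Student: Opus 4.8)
The plan is to follow the same template already used in the proofs of Lemmas~\ref{KillingVectorZ} and~\ref{KillingVectorH1}, restricting the ten Killing-vector equations \eqref{KVEqnNP1}--\eqref{KVEqnNP10} to $\Htwo$ and showing that, modulo the hypotheses already in force, all of them are satisfied identically there. First I would record the intrinsic geometry of $\Htwo$ from Table~\ref{Table:CharacteristicInitialData}: on $\Htwo$ one has $\omega=X^{\mathcal{A}}=U=\gamma=\mu=\lambda=\nu=0$, $\kappa=\pi=\epsilon=0$ from the gauge, while $\rho$, $\sigma$, $\Psi_0$, $\Psi_1$ are the (generically non-vanishing) $u$-dependent quantities built from the data, vanishing on $\cZ$. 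Feeding these into \eqref{KVEqnNP1}--\eqref{KVEqnNP10} yields the reduced system \eqref{KVEqnNPH21}--\eqref{KVEqnNPH210} as quoted. The substantive work is then to insert the explicit expressions \eqref{XiExpanded1}--\eqref{XiExpanded4} for $\xi_{00'},\xi_{01'},\xi_{10'},\xi_{11'}$ into each of these ten equations.

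The key caveat — already flagged in the statement — is that one must \emph{not} prematurely drop the transversal derivative $D$ acting on quantities that vanish on $\Htwo$: although $\mu=0$ on $\Htwo$, $D\mu=\Psi_2\neq 0$ there (Table~\ref{Table:CharacteristicInitialData} gives $\mu=r\Psi_2$), and similarly $D\nu\neq0$, $D\gamma\neq0$, $DU$, $DX^{\mathcal{A}}$, $D\omega$ are non-zero. So in each equation I would carefully keep the $D$-derivatives of the $\xi$-components, which via \eqref{XiExpanded1}--\eqref{XiExpanded4} reintroduce these transversal curvature/connection quantities, and only then simplify using (i) the NP commutators $[D,\Delta]$, $[\delta,\overline\delta]$, $[D,\delta]$, (ii) the NP Ricci identities on $\Htwo$, and (iii) the $H$-conditions \eqref{NPHH21}--\eqref{NPHH28} together with the $\eth$-identities $\eth z=P$, $\overline\eth P=0$, and \eqref{laplace}. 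By the general principle behind Proposition~\ref{Proposition:KillingSpinorPropagationSystem} (the equation $S_{AA'BB'}=0$ is essentially $\nabla\xi+\nabla\xi$ for $\xi$ built from a Killing-spinor candidate), I expect six of the ten equations to collapse to identities once the conditions of Lemma~\ref{KillingSpinorH2} ($H_{AA'BC}=0$ on $\Htwo$, including the curvature conditions $\Psi_1\kappa_2=\Psi_2\kappa_2=0$ and $\Psi_0\kappa_2+2\Psi_1\kappa_1-3\Psi_2\kappa_0=0$ and the $\eth^2\kappa_1$-relation \eqref{KVEqnNPH2Red1}) are used, leaving the five genuine residual conditions \eqref{KVEqnNPH2Red1}--\eqref{KVEqnNPH2Red5}.

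Of these five, \eqref{KVEqnNPH2Red1} and \eqref{KVEqnNPH2Red3}--\eqref{KVEqnNPH2Red5} are precisely the hypotheses of Lemma~\ref{KillingSpinorH2} (equivalently the $H$-conditions on $\Htwo$), so they are already assumed; the single new ingredient is \eqref{KVEqnNPH2Red2}, the "$\overline\eth^2\kappa_1 + \kappa_2\overline\delta\sigma - \tfrac12\overline\sigma\delta\kappa_2 - 3\overline\alpha\kappa_2\overline\sigma - \overline\Psi_1\kappa_2 = 0$" relation, which is imposed as the extra assumption in the lemma. Unlike in the $\Htwo$-analog for $\Hone$ (where the extra condition $\eth^2\kappa_1=\kappa_0(\delta\mu+\mu\tau)$ was shown to propagate from $\cZ$ via a $D$-derivative argument and to reduce to $\eth^2\kappa_1=0$ on $\cZ$), here I would check whether \eqref{KVEqnNPH2Red2} likewise propagates: compute $\Delta$ of its left-hand side on $\Htwo$, using $\Delta\sigma$, $\Delta\kappa_1$, $\Delta\kappa_2$, $\Delta\Psi_1$ from the transport equations and Table~\ref{Table:CharacteristicInitialData}, and see whether it vanishes modulo the already-established conditions; if so, it suffices to impose \eqref{KVEqnNPH2Red2} on $\cZ$, where $\sigma=\Psi_1=0$ reduces it to $\overline\eth^2\kappa_1=0$, recovering a condition of Lemma~\ref{KillingVectorZ}.

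The main obstacle will be bookkeeping: the ten $S$-equations with the four expanded $\xi$-components and the non-vanishing transversal derivatives produce long expressions, and the cancellations rely on using the Ricci identities and commutators in exactly the right order. The conceptual content is modest — it is the same zero-quantity propagation argument as before — but I would want to double-check the coefficient in the $\delta\kappa_2$ term and the $\overline\alpha\kappa_2\overline\sigma$ term in \eqref{KVEqnNPH2Red2}, since those are the places where a slip in the commutator or in the spin-weight bookkeeping of $\overline\eth$ acting on the spin-weight $+1$ quantity $\xi_{01'}=3\eth\kappa_1$ would most easily creep in.
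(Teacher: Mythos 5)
Your proposal follows essentially the same route as the paper: restrict \eqref{KVEqnNP1}--\eqref{KVEqnNP10} to $\Htwo$, substitute \eqref{XiExpanded1}--\eqref{XiExpanded4} while retaining the $D$-derivatives of quantities vanishing on $\Htwo$, and reduce via the Ricci identities and \eqref{NPHH21}--\eqref{NPHH28} to the five conditions \eqref{KVEqnNPH2Red1}--\eqref{KVEqnNPH2Red5}, of which all but \eqref{KVEqnNPH2Red2} are supplied by Lemma \ref{KillingSpinorH2} and the last is the lemma's extra hypothesis. The propagation of \eqref{KVEqnNPH2Red2} from $\cZ$ that you sketch is not needed for this lemma (it assumes the condition on all of $\Htwo$) and is carried out separately in the paper in Section \ref{Section:ConstraintsZ}.
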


\section{Analysis the constraints on $\mathcal{Z}$}
\label{Section:ConstraintsZ}

In this section we analyse the constraints on $\mathcal{Z}$ obtained
in the previous section.

\subsection{Determining  $\kappa_{2}$ on $\cZ$}
\label{subsection:kappa2}

Consider now the restrictions we have concerning $\kappa_2$ on
$\mathcal{Z}$. To satisfy the condition $\Psi_{2}\kappa_{2}=0$ on
$\mathcal{H}_{2}$, applied in Lemma 3, we have that
$\Psi_{2}\kappa_{2}=0$ has to vanish on
$\mathcal{Z}\subset\mathcal{H}_{2}$, as well. Consistent with this
condition the following subcases can be seen to arise:

\medskip
\noindent
\textbf{\em i. Assume first that $\kappa_{2}$ is nowhere vanishing on
$\mathcal{Z}$.}  In this case $\Psi_{2}$ must vanish throughout
$\mathcal{Z}$. Note also that in virtue of Table
\ref{Table:CharacteristicInitialData} all the other Weyl spinor
components vanish on $\mathcal{Z}$, and thereby
\[
\Psi_{ABCD}|_{\mathcal{Z}}=0\,.
\]	
As shown in Table \ref{Table:CharacteristicInitialData}, $\Psi_{0}$
and $\Psi_{1}$ vanish on $\mathcal{H}_{1}$, and $\Psi_{3}$ and
$\Psi_{4}$ vanish on $\mathcal{H}_{2}$, respectively. Further, observe
that the Bianchi identities imply the following relations on
$\mathcal{H}_{1}$:
\begin{align*}
D\Psi_{2} &= 0, \\
D\Psi_{3} &= \overline{\delta}\Psi_{2}, \\
D\Psi_{4} &= 2\alpha\Psi_{3} + \overline{\delta}\Psi_{3}.
\end{align*}
As $\Psi_{2}$ vanishes on $\mathcal{Z}$ and $D$ is the directional
derivative along the geodesics generating $\mathcal{H}_{1}$, the first
of these equations imply that $\Psi_{2}=0$ on $\mathcal{H}_{1}$. By the same
argument, because the right hand side of the second of the above
relations has shown to vanish on $\mathcal{H}_{1}$, we have that
$\Psi_{3}=0$ on $\mathcal{H}_{1}$. In turn, this also implies that
$\Psi_{4}=0$ on $\mathcal{H}_{1}$ as a consequence of the last
relation. Therefore, along with the vanishing of $\Psi_{0}$ and
$\Psi_{1}$ on $\Hone$ all the Weyl spinor components vanish there
---that is one has 
\[
\Psi_{ABCD}|_{\mathcal{H}_{1}}=0\,.
\]
Similarly, the Bianchi identities imply the following relations on $\mathcal{H}_{2}$:
\begin{align*}
\Delta\Psi_{0} &= \delta\Psi_{1} - (4\tau + 2\beta)\Psi_{1} + 3\sigma\Psi_{2}, \\
\Delta\Psi_{1} &= \delta\Psi_{2} - 3\tau\Psi_{2}, \\
\Delta\Psi_{2} &= 0.
\end{align*}
As $\Psi_{2}$ vanishes on $\mathcal{Z}$, and $\Delta$ is the
directional derivative along the geodesics generating
$\mathcal{H}_{2}$, the third of these equations imply that $\Psi_{2}=0$ on
$\Htwo$. Thus, the right hand side of the second of the above
relations vanishes on $\Htwo$, and by the same argument we must have
$\Psi_{1}=0$ on $\Htwo$. The first relation then implies that
$\Psi_{0}=0$ on $\Htwo$. Therefore, along with the vanishing of
$\Psi_{3}$ and $\Psi_{4}$ on $\Htwo$ all the Weyl spinor components
vanish there. Thus, one has that 
\begin{equation*}
\Psi_{ABCD}|_{\mathcal{H}_{2}}=0\,.
\end{equation*}
Summarising, the non-vanishing of $\kappa_{2}$ on $\cZ$ implies that
all the Weyl spinor components vanish identically on the union of
$\cZ$, $\Hone$ and $\Htwo$. This, in the vacuum case,
implies that all components of the Riemann curvature tensor
vanish on $\mathcal{H}_1\cup\mathcal{H}_2$. It follows then that the
spacetime obtained in Theorem \ref{Theorem:ExistenceCIVPSpacetime}
is diffeomorphic to a portion of the Minkowski spacetime 
and the pair intersecting null hypersurfaces has to contains a
bifurcate Killing horizon corresponding to a choice of a boost Killing
vector field.

\medskip
%\begin{remark}
\medskip
\noindent
\textbf{\em ii. $\kappa_{2}$ vanishes somewhere on $\cZ$:} 
It follows from the discussion in the previous subsection that, unless
the spacetime is Minkowski,  $\kappa_{2}$ must vanish somewhere on
$\cZ$. It turns out that that if this is the case, then $\kappa_{2}$ must vanish on
some open subset of $\cZ$. To see this assume, on contrary, that
$\kappa_{2}$ vanishes only at isolated points. Choose one of them, say
$z\in\cZ$ with $\kappa_{2}(z)=0$ and a Cauchy sequence $\{z_{n}\}$
converging to $z$ in the metric topology of $\cZ\approx
\mathbb{S}^2$. Since $\kappa_{2}$ is assumed to vanish only at
isolated points to ensure $\Psi_{2}\kappa_{2}=0$ on $\cZ$, the
sequence $\{\Psi_{2}(z_{n})\}$ must be the identically zero sequence
in $\mathbb{R}$ which by continuity implies that $\Psi_{2}(z)=0$. As
an analogous argument apply to any of the isolated points where
$\kappa_{2}$ vanishes we get that  $\Psi_{2}$ must be identically zero
on  $\cZ$. As we saw in the previous section this would imply that the
spacetime is Minkowski ---in conflict with our assumption that the
geometry is not flat. This, in turn, verifies that whenever $\kappa_{2}$
vanishes somewhere on $\cZ$ it has to vanish on some (non-empty) open
subset of $\cZ$.

\medskip

\noindent
\textbf{\em iii. $\kappa_{2}$ vanishes on a (non-empty)
open subset of $\cZ$.} 
It follows from \eqref{Kappa0Kappa2Zb}, and from equation 
\eqref{eth_P}, that
\[
\overline{\eth}\kappa_{2} = - \overline{\tau}\,\kappa_{2}\,,
\]
can be written as
\begin{equation}
\overline{P}P\,\partial_{\overline{z}}\,({P}^{-1}\kappa_2)= - \overline{\tau}\,P\,(P^{-1}\,\kappa_{2})
\label{vanishkappa2}
\end{equation}
implying, in turn, that $\kappa_{2}$ has to be of the form
\begin{equation}
\kappa_2= {P}\cdot\exp\left(- \int\overline{\tau}\,\overline{P}^{-1}\,d\overline{z}+\varphi(z)\,\right)\,,
\label{vanishkappa3}
\end{equation}
where $\varphi(z)$ is an arbitrary holomorphic function. This,
however, in virtue of the non-vanishing of $P$, implies that
$\kappa_2$ cannot vanish on an open subset of $\mathcal{Z}$ unless it
is identically zero on $\mathcal{Z}$, i.e.
\[
\kappa_{2}|_{\mathcal{Z}}=0\,
\]
as we intended to show. Note also that the condition \eqref{NPHH22} requires then the
vanishing of $\kappa_{2}$ along the generators of $\Htwo$, thereby we
have
\[
\kappa_{2}|_{\Htwo}=0\,.
\]

\medskip
Summarising, in this subsection we have shown the following:
\begin{lemma}
Assume that 
\[
\Psi_2 \,\kappa_2 =0 \qquad \mbox{on}\quad \mathcal{Z}.
\]
Then, if $\kappa_{2}$ is nowhere vanishing on $\cZ$, then the solution
to the characteristic initial value problem must be diffeomorphic to
the Minkowski spacetime in the domain of dependence of
$D(\mathcal{H}_1\cup\mathcal{H}_2)$. Otherwise, $\kappa_2=0$ holds on
$\cZ$, and then it is also identically zero on $\Htwo$.
\end{lemma}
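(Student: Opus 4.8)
The plan is to carry out a case analysis according to the zero set of $\kappa_2$ on $\mathcal{Z}$, using the hypothesis $\Psi_2\kappa_2=0$, the intrinsic relation \eqref{Kappa0Kappa2Zb}, and the propagation of the Weyl scalars along the null generators of the horizons. First I would dispose of the case in which $\kappa_2$ is nowhere zero on $\mathcal{Z}$: then $\Psi_2\kappa_2=0$ forces $\Psi_2\equiv0$ on $\mathcal{Z}$, and since Table~\ref{Table:CharacteristicInitialData} already gives $\Psi_0=\Psi_1=\Psi_3=\Psi_4=0$ there, one has $\Psi_{ABCD}|_{\mathcal{Z}}=0$. Feeding this into the Bianchi identities restricted to $\mathcal{H}_1$ (schematically $D\Psi_2=0$, $D\Psi_3=\overline{\delta}\Psi_2$, $D\Psi_4=2\alpha\Psi_3+\overline{\delta}\Psi_3$) and integrating along the generators gives $\Psi_{ABCD}|_{\mathcal{H}_1}=0$; the mirror computation with the $\Delta$-Bianchi identities on $\mathcal{H}_2$ ($\Delta\Psi_2=0$, then $\Delta\Psi_1$ in terms of $\Psi_2$, then $\Delta\Psi_0$ in terms of $\Psi_1,\Psi_2$) gives $\Psi_{ABCD}|_{\mathcal{H}_2}=0$. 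Since the spacetime is vacuum, the full curvature vanishes on $\mathcal{H}_1\cup\mathcal{H}_2$, so by the uniqueness statement of Theorem~\ref{Theorem:ExistenceCIVPSpacetime} the development $D(\mathcal{H}_1\cup\mathcal{H}_2)$ is a portion of Minkowski space (carrying a bifurcate Killing horizon associated with a boost).

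Next I would show that if $\kappa_2$ vanishes somewhere on $\mathcal{Z}$ but the development is not flat, then $\kappa_2$ vanishes on a non-empty open subset of $\mathcal{Z}$. This is a continuity argument: the set $\{\kappa_2\neq0\}$ is open, and on it $\Psi_2=0$; were this set dense in $\mathcal{Z}$, then $\Psi_2$ would vanish identically by continuity, putting us back in the previous case and contradicting non-flatness. Hence $\{\kappa_2=0\}$ has non-empty interior.

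Finally, on such an open set I would exploit the intrinsic equation \eqref{Kappa0Kappa2Zb}. Written in the Penrose--Rindler representation \eqref{eth_P}, it becomes $\overline{P}P\,\partial_{\overline{z}}\big(P^{-1}\kappa_2\big)=-\overline{\tau}\,P\,\big(P^{-1}\kappa_2\big)$, an inhomogeneous Cauchy--Riemann-type equation for $f\equiv P^{-1}\kappa_2$, whose local solutions are $\kappa_2=P\exp\big(-\!\int\overline{\tau}\,\overline{P}^{-1}\,\mathrm{d}\overline{z}+\varphi(z)\big)$ with $\varphi$ holomorphic; since $P$ never vanishes and the exponential factor is nowhere zero, a solution that vanishes on an open set must vanish identically on $\mathcal{Z}$. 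Thus $\kappa_2|_{\mathcal{Z}}=0$, and then \eqref{NPHH22}, namely $\Delta\kappa_2=0$, propagates this along the generators to give $\kappa_2|_{\mathcal{H}_2}=0$. The step I expect to demand the most care is the Bianchi-identity propagation of the Weyl scalars in the first case: one must verify that the chains $\Psi_2\to\Psi_3\to\Psi_4$ on $\mathcal{H}_1$ and $\Psi_2\to\Psi_1\to\Psi_0$ on $\mathcal{H}_2$ close using exactly the gauge-fixed spin-connection coefficients recorded in Table~\ref{Table:CharacteristicInitialData}, so that the vanishing genuinely propagates off $\mathcal{Z}$ rather than merely holding there.
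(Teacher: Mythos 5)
Your proposal is correct and follows essentially the same route as the paper: the trichotomy (nowhere vanishing $\Rightarrow$ $\Psi_2\equiv 0$ on $\cZ$ and Bianchi propagation along both horizons gives flatness; zeros with empty interior ruled out by continuity of $\Psi_2$; vanishing on an open set forces $\kappa_2\equiv 0$ via the explicit solution of $\overline{\eth}\kappa_2=-\overline{\tau}\kappa_2$, then $\Delta\kappa_2=0$ propagates this along $\Htwo$). Your density phrasing of the middle step is a slightly cleaner version of the paper's isolated-points argument, but the substance is identical.
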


\subsection{Determining  $\kappa_{0}$ on $\cZ$}

The analysis of the previous section can be adapted, \emph{mutatis mutandi}, 
to the component $\kappa_{0}$ by noting that the vanishing of
$\Psi_{2}\kappa_{0}$ on $\Hone$, one of the conditions in Lemma
\ref{KillingSpinorH1}, can be traced back to the vanishing of
$\Psi_{2}\kappa_{0}$ on $\cZ$. Indeed, it can be shown that unless the
spacetime is Minkowski $\kappa_{0}$ must vanish on a non-empty subset
of $\cZ$. The only difference in the analysis lies on the analogue of
equation \eqref{vanishkappa2}. From equations
\eqref{Kappa0Kappa2Za} and \eqref{eth_P}
along with the fact that $\kappa_{0}$ is of spin weight $-1$, it follows that
\[
{\eth}\kappa_{0} =  {\tau}\,\kappa_{0}\,,
\]
can be written as
\begin{equation}
P\overline{P}\,\partial_{{z}}\,(\overline{P}^{-1}\kappa_0)= {\tau}\,\overline{P}\,(\overline{P}^{-1}\kappa_{0})
\label{vanishkappa4}
\end{equation}
which implies, in turn, that
$\kappa_{0}$ has to be of the form
\begin{equation}
\kappa_0= \overline{P}\cdot \exp\left( \int\,{\tau}\,{P}^{-1}\,d{z}+\varsigma(\overline{z})\,\right)\,,
 \label{vanishkappa5}
\end{equation}
where $\varsigma(\overline{z})$ is an arbitrary antiholomorphic function on $\mathcal{Z}$. From
here, by an argument analogous to that used for  $\kappa_{2}$ one
concludes that 
\[
\kappa_{0}|_{\cZ}=0
\]
and, moreover, as a consequence of equation  \eqref{NPHH11}, also that
\begin{equation*}
\kappa_{0}|_{\Hone}=0.
\end{equation*}

\medskip
Summarising:
\begin{lemma}
Assume that 
\[
\Psi_2 \,\kappa_0 =0 \qquad \mbox{on}\quad \mathcal{Z}.
\]
Then, if $\kappa_{0}$ is nowhere vanishing on $\cZ$, then the solution to the characteristic initial value problem must be diffeomorphic to the Minkowski spacetime in the domain of dependence of $D(\mathcal{H}_1\cup\mathcal{H}_2)$. Otherwise, $\kappa_0=0$ holds on $\cZ$, and then it is also identically
	zero on $\Hone$.
\end{lemma}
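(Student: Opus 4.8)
The plan is to run, \emph{mutatis mutandis}, the three-case argument already used for $\kappa_2$ in Section~\ref{subsection:kappa2}, with the roles of the holomorphic and antiholomorphic data interchanged. The only input is that the hypothesis $\Psi_2\kappa_0=0$ on $\mathcal{Z}$ is exactly the restriction to $\mathcal{Z}\subset\mathcal{H}_1$ of the condition $\Psi_2\kappa_0=0$ appearing in Lemma~\ref{KillingSpinorH1}, so the analysis below also secures that condition on $\mathcal{Z}$.

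\emph{Case (i): $\kappa_0$ nowhere zero on $\mathcal{Z}$.} Then $\Psi_2\equiv0$ on $\mathcal{Z}$; since Table~\ref{Table:CharacteristicInitialData} already gives $\Psi_0=\Psi_1=\Psi_3=\Psi_4=0$ on $\mathcal{Z}$, the whole Weyl spinor vanishes on $\mathcal{Z}$. Feeding this into the Bianchi identities restricted to $\mathcal{H}_1$ (with $D$ along the generators) propagates $\Psi_2=\Psi_3=\Psi_4=0$ along $\mathcal{H}_1$, and the corresponding Bianchi identities on $\mathcal{H}_2$ (with $\Delta$ along the generators) propagate $\Psi_2=\Psi_1=\Psi_0=0$ along $\mathcal{H}_2$, precisely as in the $\kappa_2$ case. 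Hence $\Psi_{ABCD}$, and therefore in vacuum the full Riemann tensor, vanishes on $\mathcal{H}_1\cup\mathcal{H}_2$, so by Theorem~\ref{Theorem:ExistenceCIVPSpacetime} the development is diffeomorphic to a portion of Minkowski spacetime.

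\emph{Cases (ii)--(iii): $\kappa_0$ vanishes somewhere.} Assuming the geometry is not flat, case (i) forces $\kappa_0$ to vanish somewhere on $\mathcal{Z}$; the same isolated-zeros continuity argument used for $\kappa_2$ (applied on the compact surface $\mathcal{Z}\approx\mathbb{S}^2$, noting that discrete zeros of $\kappa_0$ would force $\Psi_2$ to vanish on a dense set, hence everywhere, returning us to the flat case) upgrades this to: $\kappa_0$ vanishes on a non-empty \emph{open} subset of $\mathcal{Z}$. To conclude, I would integrate the intrinsic equation \eqref{Kappa0Kappa2Za}, $\eth\kappa_0=\tau\kappa_0$: in the $P$-representation \eqref{eth_P}, with $\kappa_0$ of spin weight $-1$, this reads $P\overline{P}\,\partial_z(\overline{P}^{-1}\kappa_0)=\tau\,\overline{P}\,(\overline{P}^{-1}\kappa_0)$, whose general solution is $\kappa_0=\overline{P}\exp\!\big(\int\tau P^{-1}\,dz+\varsigma(\overline{z})\big)$ for an arbitrary antiholomorphic function $\varsigma$. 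Since $\overline{P}$ and the exponential are nowhere zero, $\kappa_0$ cannot vanish on an open set unless $\kappa_0|_{\mathcal{Z}}=0$; and then equation \eqref{NPHH11}, $D\kappa_0=0$ on $\mathcal{H}_1$, propagates this along the generators to give $\kappa_0|_{\mathcal{H}_1}=0$.

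The step I expect to require the most care is recasting \eqref{Kappa0Kappa2Za} in the $P$-formalism: one must track the conjugation correctly --- the $\kappa_0$ equation involves $\partial_z$ and $\overline{P}^{-1}$, the mirror image of the $\kappa_2$ equation --- and keep the spin-weight bookkeeping straight, and one should phrase the ``open zero set $\Rightarrow$ identically zero'' implication so that it depends only on $\overline{P}$ being locally non-vanishing (true in any holomorphic chart), hence is patch-independent. The two case-reductions are then routine given the $\kappa_2$ analysis and Theorem~\ref{Theorem:ExistenceCIVPSpacetime}.
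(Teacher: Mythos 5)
Your proposal is correct and follows essentially the same route as the paper, which itself proves this lemma by adapting the three-case $\kappa_2$ argument \emph{mutatis mutandis}: case (i) via $\Psi_2\equiv 0$ on $\cZ$ and Bianchi propagation, the isolated-zeros continuity argument, and the explicit integration of $\eth\kappa_0=\tau\kappa_0$ in the $P$-representation (your formula $\kappa_0=\overline{P}\exp(\int\tau P^{-1}\,dz+\varsigma(\overline{z}))$ is exactly the paper's equation \eqref{vanishkappa5}), followed by propagation along $\Hone$ using \eqref{NPHH11}. No substantive differences.
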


\subsection{Eliminating redundant conditions on $\Hone$ and $\Htwo$}
The first condition in Lemma \ref{KillingSpinorH1} was
\begin{equation*}
\overline{\eth}^{2}\kappa_{1}=0\quad\mbox{on}\quad\Hone.
\end{equation*}
In theory, one would have to solve this constraint on the whole of
$\Hone$. However, one can show that on $\Hone$
\begin{align*}
D(\overline{\eth}^{2}\kappa_{1}) = &-\frac{1}{2}\overline{\delta}\overline{\delta}\overline{\delta}\kappa_{0} + \frac{3}{2}\overline{\tau}\overline{\delta}\overline{\delta}\kappa_{0} + \overline{\delta}\kappa_{0}\left(-\alpha^2 - 4\alpha\overline{\beta} - \overline{\beta}^2 + \frac{5}{2}\overline{\delta}\alpha + \frac{1}{2}\overline{\delta}\overline{\beta}\right) \\
& \,+ \kappa_{0}\left(2\alpha\overline{\beta}\overline{\tau} - 2\alpha\overline{\delta}\alpha - 3\overline{\beta}\overline{\delta}\alpha - \alpha\overline{\delta}\overline{\beta} + \overline{\delta}\overline{\delta}\alpha\right).
\end{align*}
Note that as $\kappa_{0}$ vanishes on $\Hone$ (under the assumption that the spacetime is not diffeomorphic to Minkowski), the right hand side of
this equation also vanishes on $\Hone$. Therefore, if $\kappa_{1}$
satisfies $\overline{\eth}^{2}\kappa_{1}=0$ on $\cZ$, then it also
satisfies the same condition on the whole of $\Hone$. This was a
condition on $\cZ$ already present from the requirement that
$S_{AA'BB'}|_{\cZ}=0$. Summarising:

\begin{lemma}
If $\kappa_{0}|_{\Hone}=0$ and $\overline{\eth}^2\kappa_{1}|_{\cZ}=0$, then the condition $\overline{\eth}^2\kappa_{1}|_{\Hone}=0$ from Lemma \ref{KillingSpinorH1} is automatically satisfied.
\end{lemma}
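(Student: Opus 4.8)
The plan is to integrate the transport identity for $\overline{\eth}^{2}\kappa_{1}$ along the null generators of $\Hone$ that is recorded just above the statement. In the gauge adapted to $\Hone$ the operator $D=\partial/\partial r$ is the intrinsic derivative along these generators, while $\delta$ and $\overline{\delta}$ ---and hence $\eth$, $\overline{\eth}$--- are intrinsic to $\Hone$ (they do not involve $\Delta$). Commuting $D$ through $\overline{\eth}^{2}$ by means of the NP commutators, using the gauge relations on $\Hone$ ($D\alpha=D\beta=0$, $\rho=\sigma=0$, etc.), the Ricci identities for the $D$-derivatives of the connection coefficients, and the propagation equations \eqref{NPHH11}--\eqref{NPHH18} ---in particular \eqref{NPHH16}--\eqref{NPHH17} for $D\kappa_{1}$--- one arrives at the displayed formula
\[
D(\overline{\eth}^{2}\kappa_{1}) = -\tfrac{1}{2}\overline{\delta}^{3}\kappa_{0} + \tfrac{3}{2}\overline{\tau}\,\overline{\delta}^{2}\kappa_{0} + \overline{\delta}\kappa_{0}\,\Big(-\alpha^{2}-4\alpha\overline{\beta}-\overline{\beta}^{2}+\tfrac{5}{2}\overline{\delta}\alpha+\tfrac{1}{2}\overline{\delta}\overline{\beta}\Big) + \kappa_{0}\,\Big(2\alpha\overline{\beta}\,\overline{\tau}-2\alpha\,\overline{\delta}\alpha-3\overline{\beta}\,\overline{\delta}\alpha-\alpha\,\overline{\delta}\overline{\beta}+\overline{\delta}^{2}\alpha\Big).
\]

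The decisive point is then structural rather than computational: every term on the right-hand side carries a factor of $\kappa_{0}$ or of one of its intrinsic angular derivatives $\overline{\delta}\kappa_{0}$, $\overline{\delta}^{2}\kappa_{0}$, $\overline{\delta}^{3}\kappa_{0}$. Since, by hypothesis, $\kappa_{0}$ vanishes on the whole of $\Hone$ (not merely at $\cZ$) and $\overline{\delta}$ differentiates only within $\Hone$, all of $\overline{\delta}\kappa_{0}$, $\overline{\delta}^{2}\kappa_{0}$, $\overline{\delta}^{3}\kappa_{0}$ vanish identically on $\Hone$ as well. Hence $D(\overline{\eth}^{2}\kappa_{1})=0$ everywhere on $\Hone$. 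Because $D$ is the directional derivative along the null generators of $\Hone$ and each such generator meets $\cZ=\{r=0\}$, the scalar $\overline{\eth}^{2}\kappa_{1}$ is constant along every generator, and therefore equals its value on $\cZ$. That value is zero by assumption ---this is one of the conditions in Lemma \ref{KillingVectorZ}, cf.~\eqref{preKVFonZ}--- so $\overline{\eth}^{2}\kappa_{1}=0$ on the whole of $\Hone$, as claimed.

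The one genuinely laborious ingredient is the derivation of the transport identity for $D(\overline{\eth}^{2}\kappa_{1})$ itself: one has to keep track of the spin weights carefully, use the correct restricted forms of the commutators $[D,\delta]$ and $[D,\overline{\delta}]$ on $\Hone$, and substitute both the propagation equations for the Killing spinor components and the relevant NP Ricci identities. Once that identity is established the rest is immediate, and in fact the same scheme ---write a $D$-derivative of the $\Hone$-condition, observe that it reduces to an expression built from fields already known to vanish on $\Hone$, and integrate along the generators--- is precisely what will be used to dispose of the remaining redundant conditions on $\Hone$ and $\Htwo$.
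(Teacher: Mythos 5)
Your proposal is correct and follows essentially the same route as the paper: establish the transport identity for $D(\overline{\eth}^{2}\kappa_{1})$ along the generators of $\Hone$, observe that its right-hand side is built entirely from $\kappa_{0}$ and its $\Hone$-intrinsic derivatives (hence vanishes when $\kappa_{0}|_{\Hone}=0$), and integrate from $\cZ$ where $\overline{\eth}^{2}\kappa_{1}=0$. Your added remark that $\overline{\delta}$ is intrinsic to $\Hone$, so that the angular derivatives of $\kappa_{0}$ also vanish there, is exactly the point the paper uses implicitly.
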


\medskip
A similar procedure can be performed on $\Htwo$. The first condition
from Lemma \ref{KillingSpinorH2} was
\begin{equation*}
\eth^{2}\kappa_{1} + \kappa_{2}\delta\sigma + \frac{3}{2}\sigma\delta\kappa_{2} + \overline{\alpha}\sigma\kappa_{2} = 0
\end{equation*}
which must be satisfied on $\Htwo$. As we already we have shown that
necessarily $\kappa_{2}|_{\Htwo}=0$  unless the spacetime is  diffeomorphic to the
Minkowski spacetime. Therefore, the aforementioned condition reduces to
\begin{equation*}
\eth^{2}\kappa_{1}=0\quad\mbox{on}\quad\Htwo.
\end{equation*}
Now, one can show that on $\Htwo$,
\begin{align*}
\Delta\left(\eth^{2}\kappa_{1}\right) = &-\frac{1}{2}\delta\delta\delta\kappa_{2} - \frac{3}{2}\tau\delta\delta\kappa_{2} + \delta\kappa_{2} \left(-\overline{\alpha}^{2}-\overline{\alpha}\beta + 2\beta^{2} - 2\delta\overline{\alpha}-4\delta\beta\right) \\
 &\,+ \kappa_{2}\left(-\overline{\alpha}\delta\overline{\alpha} + \beta\delta\overline{\alpha} - 2\overline{\alpha}\delta\beta + 2\beta\delta\beta - \delta\delta\overline{\alpha} - 2\delta\delta\beta\right).
\end{align*}
The requirement that $\kappa_{2}$ vanishes on $\Htwo$ means that the
right hand side of this equation also vanishes on $\Htwo$. Therefore,
if $\kappa_{1}$ satisfies $\eth^{2}\kappa_{1}=0$ on $\cZ$, then it
also satisfies the same condition on the whole of $\Htwo$. This was a
condition on $\cZ$ already present from the requirement that
$S_{AA'BB'}|_{\cZ}=0$.

\medskip
Finally, the condition from Lemma \ref{KillingVectorH2} says that
\begin{equation*}
\overline{\eth}^2 \kappa_{1} + \kappa_{2}\overline{\delta}\sigma - \frac{1}{2}\overline{\sigma}\delta\kappa_{2} - 3\overline{\alpha}\kappa_{2}\overline{\sigma} -\overline{\Psi}_{1}\kappa_{2} =0\quad\mbox{on}\quad\Htwo
\end{equation*}
which reduces to $\overline{\eth}^{2}\kappa_{1}=0$ due to the fact that  $\kappa_{2}|_{\Htwo}=0$ when the spacetime is not
diffeomorphic to the Minkowski solution. One can show that on $\Htwo$
\begin{align*}
\Delta\left(\overline{\eth}^{2}\kappa_{1}\right) = \;& \delta\kappa_{2}\left(\frac{1}{2}\overline{\delta}\overline{\tau} - \overline{\beta}\overline{\tau}\right) + \kappa_{2}\left( -6\alpha^{2}\beta - 6\alpha\beta\overline{\beta} - 2\alpha\delta\alpha 
 + \overline{\alpha}\overline{\delta}\alpha \right.\\
 & \left.+ 5\beta\overline{\delta}\alpha 
 + 2\alpha\overline{\delta}\overline{\alpha} + \overline{\beta}\overline{\delta}\overline{\alpha} 
 + 7\alpha\overline{\delta}\beta + 2\overline{\beta}\overline{\delta}\beta + \overline{\delta}\delta\alpha 
 - \overline{\delta}\overline{\delta}\alpha - 2\overline{\delta}\overline{\delta}\beta\right) 
\end{align*}
The requirement that $\kappa_{2}$ vanishes on $\Htwo$ means that the
right hand side of this equation also vanishes on $\Htwo$. So if
$\kappa_{1}$ satisfies $\overline{\eth}^{2}\kappa_{1}=0$ on $\cZ$, then it
also satisfies the same condition on the whole of $\Htwo$. This was a
condition already present from the requirement that
$S_{AA'BB'}|_{\cZ}=0$. Summarising, we have 

\begin{lemma}
If $\kappa_{2}|_{\Htwo}=0$ and $\overline{\eth}^2\kappa_{1}|_{\cZ}=\eth^2\kappa_{1}|_{\cZ}=0$, then the conditions
\begin{align*}
\left(\overline{\eth}^2 \kappa_{1} + \kappa_{2}\overline{\delta}\sigma - \frac{1}{2}\overline{\sigma}\delta\kappa_{2} - 3\overline{\alpha}\kappa_{2}\overline{\sigma} -\overline{\Psi}_{1}\kappa_{2}\right)|_{\Htwo}&=0, \\
\left(\eth^{2}\kappa_{1} + \kappa_{2}\delta\sigma + \frac{3}{2}\sigma\delta\kappa_{2} + \overline{\alpha}\sigma\kappa_{2}\right)|_{\Htwo}&=0,
\end{align*}
applied in Lemmas \ref{KillingSpinorH2} and \ref{KillingVectorH2}, are automatically satisfied.
\end{lemma}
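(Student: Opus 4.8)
The plan is to show that, on $\Htwo$, both listed conditions collapse to the single pair $\eth^{2}\kappa_{1}=0$, $\overline{\eth}^{2}\kappa_{1}=0$, and then to propagate this pair off $\cZ$ along the generators of $\Htwo$ using the two transport identities displayed just above the statement.

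First I would note that $\delta$, when restricted to $\Htwo$, is intrinsic to the constant-$u$ cross sections $\mathcal{Z}_{u}$: indeed $\omega=0$ on $\Htwo$, so there $\delta=\zeta^{\mathcal{A}}\partial_{x^{\mathcal{A}}}$, which differentiates only in directions tangent to $\mathcal{Z}_{u}$. Hence the hypothesis $\kappa_{2}|_{\Htwo}=0$ forces every intrinsic derivative of $\kappa_{2}$ along $\Htwo$ to vanish on $\Htwo$ as well --- in particular $\delta\kappa_{2}$, $\overline{\delta}\kappa_{2}$ and the spin-weighted versions $\eth\kappa_{2}$, $\overline{\eth}\kappa_{2}$, and iterated quantities such as $\delta\delta\kappa_{2}$, $\delta\delta\delta\kappa_{2}$. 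Substituting $\kappa_{2}=0$ and $\delta\kappa_{2}=\overline{\delta}\kappa_{2}=0$ into the two bracketed expressions removes all terms except $\overline{\eth}^{2}\kappa_{1}$ and $\eth^{2}\kappa_{1}$; so on $\Htwo$ the conditions of Lemmas \ref{KillingSpinorH2} and \ref{KillingVectorH2} are equivalent to $\eth^{2}\kappa_{1}|_{\Htwo}=0$ and $\overline{\eth}^{2}\kappa_{1}|_{\Htwo}=0$.

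Next I would invoke the transport identities for $\eth^{2}\kappa_{1}$ and $\overline{\eth}^{2}\kappa_{1}$ on $\Htwo$ (the two displayed $\Delta(\cdot)$ formulas immediately preceding the lemma), which were obtained using the NP commutators, the Ricci identities, and equations \eqref{NPHH21}--\eqref{NPHH28} --- all of which are assumed to hold. The right-hand side of each is a linear combination, with smooth coefficients, of $\kappa_{2}$ and of intrinsic derivatives of $\kappa_{2}$ only; by the previous paragraph these all vanish on $\Htwo$, so $\Delta(\eth^{2}\kappa_{1})=\Delta(\overline{\eth}^{2}\kappa_{1})=0$ on $\Htwo$. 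Since $\Delta$ is the directional derivative along the null generators of $\Htwo$ and every generator meets $\cZ$, the quantities $\eth^{2}\kappa_{1}$ and $\overline{\eth}^{2}\kappa_{1}$ are constant along the generators and hence equal their values on $\cZ$, which vanish by hypothesis. Combined with the reduction of the first paragraph, this establishes the claim.

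The only points requiring real care are: (i) confirming that $\delta$ restricted to $\Htwo$ is genuinely an intrinsic operator (so that it annihilates $\kappa_{2}|_{\Htwo}=0$), as opposed to $D$, which is transverse and whose action on $\kappa_{2}$ need not vanish --- this is precisely the ``being careful not to discard the $D$ derivatives'' caveat in the derivation of the $\Htwo$ transport equations; and (ii) verifying that the source terms of those transport identities contain no $D\kappa_{2}$ contribution, so the source really does vanish once $\kappa_{2}|_{\Htwo}=0$ is imposed. Both are immediate from the explicit structure of the displayed formulas, so I do not expect a substantive obstacle here; the lemma is essentially a bookkeeping consequence of the earlier computations together with the vanishing of $\kappa_{2}$ on $\Htwo$.
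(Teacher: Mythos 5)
Your proposal is correct and follows essentially the same route as the paper: reduce both conditions to $\eth^{2}\kappa_{1}=\overline{\eth}^{2}\kappa_{1}=0$ on $\Htwo$ using $\kappa_{2}|_{\Htwo}=0$ (and the vanishing of its intrinsic derivatives), then observe that the right-hand sides of the $\Delta(\eth^{2}\kappa_{1})$ and $\Delta(\overline{\eth}^{2}\kappa_{1})$ transport identities are homogeneous in $\kappa_{2}$ and its intrinsic derivatives, so both quantities are constant along the generators and inherit their vanishing from $\cZ$. Your explicit check that $\delta$ restricted to $\Htwo$ is intrinsic (because $\omega=0$ there) is a point the paper leaves implicit, but it does not change the argument.
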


\medskip
The only remaining condition on $\Hone$ to be considered is from Lemma
\ref{KillingSpinorH1}, which reduces to
\begin{equation}
\left(2\Psi_{3}\kappa_{1} - 3\Psi_{2}\kappa_{2}\right)|_{\Hone}=0
\label{H1curvcondition}
\end{equation}
due to the requirement that $\kappa_{0}|_{\Hone}=0$. One can also use
this requirement to show that
\begin{equation*}
D^{2}\left(2\Psi_{3}\kappa_{1} - 3\Psi_{2}\kappa_{2}\right)|_{\Hone}=0.
\end{equation*}
In fact, the right hand side of this expression can be shown to be homogeneous in
$\kappa_{0}$ and derivatives of $\kappa_{0}$ intrinsic to $\Hone$.
This can be thought of as a second order ordinary differential
equation along the geodesic generators of $\Hone$. Therefore, equation
\eqref{H1curvcondition} is equivalent to the vanishing of
$\left(2\Psi_{3}\kappa_{1} - 3\Psi_{2}\kappa_{2}\right)$ and its first
$D$-derivative on $\cZ$. This combination vanishes on $\cZ$ if
$\kappa_{2}|_{\Htwo}=0$ as it follows from Table
\ref{Table:CharacteristicInitialData} that $\Psi_{3}|_{\cZ}=0$.
The vanishing of the first derivative on $\cZ$ can be shown
to be equivalent to
\begin{equation}
\overline{\delta}\left(\kappa_{1}^{3}\Psi_{2}\right)|_{\cZ}=0\,.
\label{bardeltakappa1to3Psi2}
\end{equation}
In a similar way, the only remaining condition on $\Htwo$ to be
analysed is from Lemma \ref{KillingSpinorH2}. This condition reduces to
\begin{equation}
\left(2\Psi_{1}\kappa_{1} - 3\Psi_{2}\kappa_{0}\right)|_{\Htwo}=0
\label{H2curvcondition}
\end{equation}
due to the requirement that $\kappa_{2}|_{\Htwo}=0$. One can also use this requirement to show that
\begin{equation*}
\Delta^{2}\left(2\Psi_{1}\kappa_{1} - 3\Psi_{2}\kappa_{0}\right)|_{\Htwo}=0.
\end{equation*}
In fact, the right hand side of this can be shown to be homogeneous in
$\kappa_{2}$ and derivatives of $\kappa_{2}$ intrinsic to
$\Htwo$. This can be thought of as a second order ordinary
differential equation along the geodesic generators of
$\Htwo$. Therefore, equation \eqref{H2curvcondition} is equivalent to
the vanishing of $\left(2\Psi_{1}\kappa_{1} -
3\Psi_{2}\kappa_{0}\right)$ and its first $\Delta$ derivative on
$\cZ$. This combination vanishes on $\cZ$ if $\kappa_{0}|_{\Hone}=0$
as, following \ref{Table:CharacteristicInitialData}, one has that $\Psi_{1}|_{\cZ}=0$. The vanishing of the
first derivative on $\cZ$ can be shown to be equivalent to
\begin{equation}
\delta\left(\kappa_{1}^{3}\Psi_{2}\right)|_{\cZ}=0\,.
\label{deltakappa1to3Psi2}
\end{equation}
It follows then from equations \eqref{deltakappa1to3Psi2} and \eqref{bardeltakappa1to3Psi2}, 
\begin{equation}
\mathfrak{K}\equiv\kappa_{1}^{3}\Psi_{2}
\label{MathfrakConstantRelation}
\end{equation}
is constant $\mathfrak{K}\in\mathbb{C}$ on $\cZ$. 

\medskip
Summarising the discussion of this section one has that:
\begin{lemma}
Assume that $\kappa_{0}|_{\Hone}=\kappa_{2}|_{\Htwo}=0$. Then
$\mathfrak{K}\equiv \kappa_{1}^{3}\Psi_{2}$ is constant on $\cZ$ if and only if
\begin{align*}
\left(2\Psi_{3}\kappa_{1}-3\Psi_{2}\kappa_{2}\right)|_{\Hone} &= 0, \\
\left(2\Psi_{1}\kappa_{1}-3\Psi_{2}\kappa_{0}\right)|_{\Htwo} &=0.
\end{align*}
\end{lemma}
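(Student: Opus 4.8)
The plan is simply to thread together the computations of the three paragraphs preceding the statement into a single chain of equivalences. The organising observation is that $\mathfrak{K}=\kappa_1^{3}\Psi_2$ is a scalar of vanishing spin-weight on the connected surface $\cZ$; since the complex vectors $m^a$, $\overline{m}^a$ span the complexified tangent bundle of $\cZ$, the scalar $\mathfrak{K}$ is constant on $\cZ$ if and only if $\delta\mathfrak{K}=\overline{\delta}\mathfrak{K}=0$ everywhere on $\cZ$.

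First I would deal with $\Hone$. Under the hypothesis $\kappa_0|_{\Hone}=0$ the only surviving curvature condition of Lemma~\ref{KillingSpinorH1} is $(2\Psi_3\kappa_1-3\Psi_2\kappa_2)|_{\Hone}=0$. I would then use the already-recorded identity $D^2(2\Psi_3\kappa_1-3\Psi_2\kappa_2)|_{\Hone}=0$ --- whose right-hand side is homogeneous in $\kappa_0$ and its $\Hone$-intrinsic derivatives $D\kappa_0,\delta\kappa_0,\overline{\delta}\kappa_0,\dots$, all of which vanish because $\kappa_0\equiv 0$ on $\Hone$ --- to conclude that $2\Psi_3\kappa_1-3\Psi_2\kappa_2$ is an affine function of the affine parameter along each null generator of $\Hone$. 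Hence it vanishes on all of $\Hone$ if and only if it and its $D$-derivative vanish on $\cZ$. The quantity itself vanishes on $\cZ$ since $\Psi_3|_{\cZ}=0$ by Table~\ref{Table:CharacteristicInitialData} and $\kappa_2|_{\cZ}=0$ (as $\cZ\subset\Htwo$ and $\kappa_2|_{\Htwo}=0$), so the $\Hone$-condition is equivalent to $D(2\Psi_3\kappa_1-3\Psi_2\kappa_2)|_{\cZ}=0$, which by the computation leading to~\eqref{bardeltakappa1to3Psi2} is equivalent to $\overline{\delta}\mathfrak{K}|_{\cZ}=0$.

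A verbatim-parallel argument on $\Htwo$ --- using $\kappa_2|_{\Htwo}=0$ to reduce the surviving condition of Lemma~\ref{KillingSpinorH2} to $(2\Psi_1\kappa_1-3\Psi_2\kappa_0)|_{\Htwo}=0$, the identity $\Delta^2(2\Psi_1\kappa_1-3\Psi_2\kappa_0)|_{\Htwo}=0$, the vanishing of this quantity on $\cZ$ (because $\Psi_1|_{\cZ}=0$ and $\kappa_0|_{\cZ}=0$), and the computation leading to~\eqref{deltakappa1to3Psi2} --- shows that $(2\Psi_1\kappa_1-3\Psi_2\kappa_0)|_{\Htwo}=0$ is equivalent to $\delta\mathfrak{K}|_{\cZ}=0$. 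Combining the two equivalences with the organising observation, the two displayed curvature conditions hold simultaneously if and only if $\delta\mathfrak{K}=\overline{\delta}\mathfrak{K}=0$ on $\cZ$, i.e.\ if and only if $\mathfrak{K}$ is constant on $\cZ$, which is the claim.

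Since the two technically substantial points --- the second-order ODE structure of the curvature combinations along the generators, and the identification of their first transversal derivatives at $\cZ$ with $\overline{\delta}\mathfrak{K}$ and $\delta\mathfrak{K}$ --- have already been established in the text, there is no genuine obstacle here; the only care required is bookkeeping, in particular not setting to zero prematurely any quantity (such as $\kappa_0$, $\kappa_2$, $\mu$, or $\sigma$) whose transversal derivative off $\cZ$ or off the relevant horizon still enters the computation.
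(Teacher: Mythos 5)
Your proposal is correct and follows essentially the same route as the paper: reduce each horizon condition, via the second-order ODE structure $D^2(\cdot)=0$ and $\Delta^2(\cdot)=0$ along the generators, to the vanishing of the combination and its first transversal derivative on $\cZ$, note that the zeroth-order part vanishes automatically because $\Psi_3|_{\cZ}=\Psi_1|_{\cZ}=0$ and $\kappa_0|_{\cZ}=\kappa_2|_{\cZ}=0$, and identify the first-derivative conditions with $\overline{\delta}\mathfrak{K}|_{\cZ}=0$ and $\delta\mathfrak{K}|_{\cZ}=0$. The only addition is your explicit remark that $\delta$ and $\overline{\delta}$ span the complexified tangent bundle of the connected surface $\cZ$, which the paper leaves implicit when concluding constancy of $\mathfrak{K}$.
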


\begin{remark}\label{re: MathfrakConstantRelation}
{\em Note that
\begin{align*}
D\mathfrak{K}|_{\Hone} &= \frac{3}{2}\Psi_{2}\kappa_{1}^{2}\left(-\overline{\delta}\kappa_{0} + 2\alpha\kappa_{0}\right)|_{\Hone} \\
&= 0
\end{align*}
where we have used equation $D\Psi_{2}|_{\Hone}=0$ from Table
\ref{Table:CharacteristicInitialData}, equation \eqref{NPHH16} and the
requirement that $\kappa_{0}|_{\Hone}=0$. Similarly,
\begin{align*}
\Delta\mathfrak{K}|_{\Htwo} &= \frac{3}{2}\Psi_{2}\kappa_{1}^{2}\left(-\delta\kappa_{2} - 2(\beta+\tau)\kappa_{2}\right)|_{\Htwo} \\
&= 0
\end{align*}
where we have used equation $\Delta\Psi_{2}|_{\Htwo}=0$ from Table
\ref{Table:CharacteristicInitialData}, equation \eqref{NPHH25} and the
requirement that $\kappa_{2}|_{\Htwo}=0$. Thus, $\mathfrak{K}$ is
constant not merely on $\cZ$ but on the whole of
$\Hone\cup\Htwo$. Since the Newman-Penrose reduced system coupled to
the wave equation for $\kappa_{AB}$, equation
\eqref{KillingSpinorWaveEquation}, is a well-posed hyperbolic system
we also have that $\mathfrak{K}$ is, in fact, constant throughout the
domain of dependence of $\Hone\cup\Htwo$.}
\end{remark}

\subsection{Summary}
Collecting all the previous  results together one obtains the following:

\begin{proposition}
\label{HSZeroReduceToZ}
Assume that the spacetime obtained from the characteristic initial value problem in 
$D(\mathcal{H}_1\cup\mathcal{H}_2)$ is not diffeomorphic to
the Minkowski spacetime. Then the following two statements are
equivalent:
\begin{align*}
(i)\qquad&\text{Given a spin basis } \{o^A,\,\iota^A\}\text{ on }\cZ,\ \text{there exist a constant}\ \ \mathfrak{K} \in\mathbb{C}\ \text{such that} \\
&\kappa_{0} = 0\,, \quad \eth^2\kappa_{1} = \overline{\eth}{}^2\kappa_{1} = 0\,,\quad \kappa_{2} = 0\quad \text{and}\quad  \kappa_{1}^{3}\Psi_{2}=\mathfrak{K}
   \quad \mbox{on}\quad \cZ\,. \\
(ii)\qquad& H_{A'ABC}=0,\quad S_{AA'BB'}=0\quad\mbox{everywhere on}
            \quad\Hone\cup\Htwo.
\end{align*}
\end{proposition}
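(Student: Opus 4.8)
The plan is to obtain the proposition by collecting the lemmas of Sections~\ref{Section:KillingSpinorData} and~\ref{Section:ConstraintsZ}; no genuinely new computation is needed, the content being that, once the development is assumed not to be flat, every condition appearing in those lemmas either becomes vacuous or descends to the four conditions on $\cZ$ listed in~$(i)$. Two observations drive the reduction. First, the curvature conditions $\Psi_2\kappa_2=0$ on $\Htwo$ and $\Psi_2\kappa_0=0$ on $\Hone$ (from Lemmas~\ref{KillingSpinorH2} and~\ref{KillingSpinorH1}), together with the integral representations~\eqref{vanishkappa3},~\eqref{vanishkappa5} of $\kappa_2$ and $\kappa_0$ forced by~\eqref{Kappa0Kappa2Za}--\eqref{Kappa0Kappa2Zb}, imply $\kappa_0|_{\cZ}=\kappa_2|_{\cZ}=0$ unless $D(\Hone\cup\Htwo)$ is Minkowski, and then $\kappa_0|_{\Hone}=0$, $\kappa_2|_{\Htwo}=0$ by the propagation equations~\eqref{NPHH11},~\eqref{NPHH22}; this is precisely the content of the lemmas determining $\kappa_2$ and $\kappa_0$ on $\cZ$ in Section~\ref{Section:ConstraintsZ}. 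Second, once $\kappa_0$ and $\kappa_2$ vanish, the only Weyl data left on $\cZ$ is $\Psi_2$, because $\Psi_1|_{\cZ}=\Psi_3|_{\cZ}=0$ by Table~\ref{Table:CharacteristicInitialData}.

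For the implication $(ii)\Rightarrow(i)$ I would restrict the vanishing of $H_{A'ABC}$ and $S_{AA'BB'}$ to $\cZ$, $\Hone$ and $\Htwo$. From $H_{A'ABC}=0$ on $\cZ$ one reads off~\eqref{NPHZ1}--\eqref{NPHZ8}; from $H_{A'ABC}=0$ on $\Hone$ and $\Htwo$ one obtains the curvature conditions of the corresponding subsections, in particular $\Psi_2\kappa_0|_{\Hone}=0$ and $\Psi_2\kappa_2|_{\Htwo}=0$, whence $\kappa_0|_{\cZ}=\kappa_2|_{\cZ}=0$ by the first observation and the standing non-flatness hypothesis; from $S_{AA'BB'}=0$ on $\cZ$ one gets the intrinsic constraints~\eqref{ConstraintAtZ1}--\eqref{ConstraintAtZ2}, which via~\eqref{KVatZ2}--\eqref{KVatZ3} become $\eth^2\kappa_1=\overline{\eth}^2\kappa_1=0$ on $\cZ$; and, now using $\kappa_0|_{\Hone}=\kappa_2|_{\Htwo}=0$, the conditions $(2\Psi_3\kappa_1-3\Psi_2\kappa_2)|_{\Hone}=0$ and $(2\Psi_1\kappa_1-3\Psi_2\kappa_0)|_{\Htwo}=0$, which by the last lemma of Section~\ref{Section:ConstraintsZ} are equivalent to $\kappa_1^3\Psi_2$ being constant on $\cZ$. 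Taking $\mathfrak{K}$ to be that constant yields~$(i)$.

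For $(i)\Rightarrow(ii)$ I would extend the $\cZ$-data off $\cZ$ by the intrinsic propagation equations~\eqref{NPHH11},~\eqref{NPHH16},~\eqref{NPHH17} along $\Hone$ and~\eqref{NPHH22},~\eqref{NPHH24},~\eqref{NPHH25} along $\Htwo$, so that $\kappa_0|_{\Hone}=0$ and $\kappa_2|_{\Htwo}=0$, and then verify the hypotheses of Lemmas~\ref{KillingSpinorH1},~\ref{KillingSpinorH2},~\ref{KillingVectorZ},~\ref{KillingVectorH1},~\ref{KillingVectorH2} in turn. The ``pure'' curvature conditions $\Psi_2\kappa_0$, $\Psi_3\kappa_0$ on $\Hone$ and $\Psi_1\kappa_2$, $\Psi_2\kappa_2$ on $\Htwo$ then hold automatically; the remaining $\Psi_4\kappa_0+2\Psi_3\kappa_1-3\Psi_2\kappa_2=0$ on $\Hone$ and $\Psi_0\kappa_2+2\Psi_1\kappa_1-3\Psi_2\kappa_0=0$ on $\Htwo$ collapse to $2\Psi_3\kappa_1-3\Psi_2\kappa_2=0$ and $2\Psi_1\kappa_1-3\Psi_2\kappa_0=0$, which the last lemma of Section~\ref{Section:ConstraintsZ} derives from the constancy of $\mathfrak{K}=\kappa_1^3\Psi_2$; and the conditions $\overline{\eth}^2\kappa_1=0$ on $\Hone$, $\eth^2\kappa_1=0$ on $\Htwo$, together with the relations $\eth^2\kappa_1=\kappa_0(\delta\mu+\mu\tau)$ of Lemma~\ref{KillingVectorH1} and $\overline{\eth}^2\kappa_1+\kappa_2\overline{\delta}\sigma-\frac{1}{2}\overline{\sigma}\delta\kappa_2-3\overline{\alpha}\kappa_2\overline{\sigma}-\overline{\Psi}_1\kappa_2=0$ of Lemma~\ref{KillingVectorH2} (both simplifying once $\kappa_0|_{\Hone}=\kappa_2|_{\Htwo}=0$), follow from $\eth^2\kappa_1=\overline{\eth}^2\kappa_1=0$ on $\cZ$ by the reductions of Section~\ref{Section:ConstraintsZ}, which show that the $D$- or $\Delta$-derivative of each of these quantities is homogeneous in the already-vanishing $\kappa_0$ or $\kappa_2$. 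Lemma~\ref{KillingVectorZ} then gives $S_{AA'BB'}=0$ on $\cZ$, and Lemmas~\ref{KillingSpinorH1}--\ref{KillingVectorH2} give $H_{A'ABC}=S_{AA'BB'}=0$ on $\Hone\cup\Htwo$, which is~$(ii)$.

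The step I expect to be the main obstacle is the reduction of the two mixed curvature conditions on $\Hone$ and $\Htwo$ to the single statement that $\mathfrak{K}=\kappa_1^3\Psi_2$ is constant on $\cZ$: this requires the second-order ODE arguments along the generators (that $D^2(2\Psi_3\kappa_1-3\Psi_2\kappa_2)|_{\Hone}$ and $\Delta^2(2\Psi_1\kappa_1-3\Psi_2\kappa_0)|_{\Htwo}$ vanish, being homogeneous in the vanishing components), together with $\Psi_1|_{\cZ}=\Psi_3|_{\cZ}=0$, so that only the first-derivative obstructions $\delta(\kappa_1^3\Psi_2)|_{\cZ}=\overline{\delta}(\kappa_1^3\Psi_2)|_{\cZ}=0$ survive. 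A secondary point to keep in mind is that several of the lemmas of Section~\ref{Section:ConstraintsZ} are conditional on the development not being diffeomorphic to Minkowski, so the standing non-flatness hypothesis of the proposition must be carried through both implications.
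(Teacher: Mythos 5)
Your proposal is correct and follows essentially the same route as the paper: the proposition is obtained by collecting Lemmas \ref{KillingSpinorH1}--\ref{KillingVectorH2} together with the reductions of Section \ref{Section:ConstraintsZ} (vanishing of $\kappa_0$ and $\kappa_2$ under the non-flatness hypothesis, propagation of the $\eth^2\kappa_1$ and $\overline{\eth}^2\kappa_1$ constraints off $\cZ$, and the second-order ODE argument turning the mixed curvature conditions into the constancy of $\kappa_1^3\Psi_2$). You have also correctly identified the genuinely delicate step and the role of the standing non-flatness assumption, so nothing essential is missing.
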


Recall that the vanishing of the spinors $H_{A'ABC}$ and $S_{AA'BB'}$
on $\Hone\cup\Htwo$ are precisely the conditions of Proposition
\ref{KillingSpinorExistence}, which along with the assumptions of
Theorem \ref{Theorem:ExistenceCIVPSpacetime} imply that the Killing
spinor candidate $\kappa_{AB}$ is in fact a Killing spinor in the
causal future (or past) of $\cZ$.  By summing up these observations we get:

\begin{theorem}
\label{Theorem:ExistenceKillingSpinors}
Let $(\mathcal{M},\bmg)$ be a vacuum spacetime satisfying the
conditions of Theorem \ref{Theorem:ExistenceCIVPSpacetime}. Given a
spin basis $\{o^A,\,\iota^A\}$ on $\cZ$, assume that there exist a
constant $\mathfrak{K} \in\mathbb{C}$ such that  the conditions 
%\begin{subequations}
\begin{eqnarray}
& \kappa_{0} = 0,\quad \eth^2\kappa_{1} = \overline{\eth}{}^2\kappa_{1} = 0\,, \quad
                                  \kappa_{2} = 0 \quad\text{and}\quad \kappa_{1}^{3}\Psi_{2}=\mathfrak{K} 
 \label{CharacteristicKSID} 
\end{eqnarray}
%\end{subequations} 
hold on $\cZ$.  Then the
corresponding unique solution $\kappa_{AB}$ to equation
\eqref{KillingSpinorWaveEquation} is a Killing spinor everywhere on the domain of dependence of $\mathcal{H}_1\cup\mathcal{H}_2$.
\end{theorem}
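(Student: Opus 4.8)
The plan is to recognise that the theorem is essentially an assembly of the machinery developed in Sections \ref{Section:KillingSpinorData} and \ref{Section:ConstraintsZ}. The key point is that the data set \eqref{CharacteristicKSID} on $\mathcal{Z}$ is precisely statement (i) of Proposition \ref{HSZeroReduceToZ}. Assuming for the moment that the development of the characteristic problem is not diffeomorphic to Minkowski, Proposition \ref{HSZeroReduceToZ} yields statement (ii), namely $H_{A'ABC}=0$ and $S_{AA'BB'}=0$ everywhere on $\mathcal{H}_1\cup\mathcal{H}_2$; Proposition \ref{ExistenceKSWaveEquation} furnishes the unique candidate $\kappa_{AB}$ solving \eqref{KillingSpinorWaveEquation} on a neighbourhood of $\mathcal{Z}$; and Proposition \ref{KillingSpinorExistence} ---which rests on Lemma \ref{Lemma:KillingSpinorEquationEvolutionSystem}, i.e. on the fact that the homogeneous linear wave system \eqref{WaveEquationH}--\eqref{WaveEquationS} of Proposition \ref{Proposition:KillingSpinorPropagationSystem} with trivial characteristic data has only the trivial solution--- then propagates $H_{A'ABC}=S_{AA'BB'}=0$ from $\mathcal{H}_1\cup\mathcal{H}_2$ to all of $D(\mathcal{H}_1\cup\mathcal{H}_2)$. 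Hence $\kappa_{AB}$ is a Killing spinor there.

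Spelling this out, I would first build the candidate: fix $(\kappa_0,\kappa_1,\kappa_2)$ on $\mathcal{Z}$ as in \eqref{CharacteristicKSID}, propagate it onto $\mathcal{H}_1$ along the null generators through the linear transport equations \eqref{NPHH11}, \eqref{NPHH16}, \eqref{NPHH17} and onto $\mathcal{H}_2$ through \eqref{NPHH22}, \eqref{NPHH24}, \eqref{NPHH25} (the two prescriptions agree on $\mathcal{Z}$), and invoke Proposition \ref{ExistenceKSWaveEquation}. Immediately, $D\kappa_0=0$ on $\mathcal{H}_1$ with $\kappa_0|_{\mathcal{Z}}=0$ gives $\kappa_0|_{\mathcal{H}_1}=0$, and likewise $\kappa_2|_{\mathcal{H}_2}=0$; these two facts carry most of the weight. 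I would then re-establish (ii) by running the chain of Lemmas \ref{KillingSpinorH1}, \ref{KillingSpinorH2}, \ref{KillingVectorZ}, \ref{KillingVectorH1}, \ref{KillingVectorH2}: the conditions on $\mathcal{Z}$ involving $\eth$ and $\overline{\eth}$ are given outright by \eqref{CharacteristicKSID}; the curvature conditions of the Lemmas that carry $\kappa_0$ (resp. $\kappa_2$) as a bare factor hold because that component vanishes along $\mathcal{H}_1$ (resp. $\mathcal{H}_2$); and the conditions that remain along the horizons ---which, once $\kappa_0|_{\mathcal{H}_1}=\kappa_2|_{\mathcal{H}_2}=0$ are used, amount to the vanishing of $\eth^2\kappa_1$, $\overline{\eth}^2\kappa_1$ and of the curvature combinations $2\Psi_3\kappa_1-3\Psi_2\kappa_2$ and $2\Psi_1\kappa_1-3\Psi_2\kappa_0$ along the generators--- are reduced to $\mathcal{Z}$ by the propagation identities of Section \ref{Section:ConstraintsZ}: $D(\overline{\eth}^2\kappa_1)$ is homogeneous in $\kappa_0$ and its intrinsic derivatives, $\Delta(\eth^2\kappa_1)$ in $\kappa_2$, while $D^2(2\Psi_3\kappa_1-3\Psi_2\kappa_2)$ and $\Delta^2(2\Psi_1\kappa_1-3\Psi_2\kappa_0)$ are homogeneous in $\kappa_0$ and $\kappa_2$ respectively. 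Using $\Psi_1|_{\mathcal{Z}}=\Psi_3|_{\mathcal{Z}}=0$ from Table \ref{Table:CharacteristicInitialData} together with \eqref{bardeltakappa1to3Psi2}--\eqref{deltakappa1to3Psi2}, the last two conditions collapse to the constancy of $\mathfrak{K}=\kappa_1^3\Psi_2$ on $\mathcal{Z}$, the final clause of \eqref{CharacteristicKSID}. With the hypotheses of all five Lemmas verified, $H_{A'ABC}=0$ on $\mathcal{H}_1$ and $\mathcal{H}_2$ and $S_{AA'BB'}=0$ on $\mathcal{Z}$, $\mathcal{H}_1$, $\mathcal{H}_2$, and Proposition \ref{KillingSpinorExistence} finishes the argument.

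The single loose end is the degenerate branch in which the development is a portion of Minkowski ---equivalently $\Psi_2$ vanishes identically on $\mathcal{Z}$, forcing $\mathfrak{K}=0$. There Proposition \ref{HSZeroReduceToZ} does not apply as stated, but it is not needed: all curvature conditions in the Lemmas become vacuous and the propagation arguments above go through unchanged, so the candidate built from \eqref{CharacteristicKSID} is again a Killing spinor. I expect the main obstacle of the whole proof to be not the final assembly but precisely this propagation step ---showing that each condition the Lemmas impose along $\mathcal{H}_1$ or $\mathcal{H}_2$ is the $D$- or $\Delta$-transport along the generators of a condition already imposed on $\mathcal{Z}$--- which is the algebraic heart of Section \ref{Section:ConstraintsZ}; once this is in hand, the well-posedness of the reduced Einstein--wave system carries everything into the full domain of dependence.
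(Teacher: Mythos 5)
Your proof is correct and follows essentially the same route as the paper's: the conditions \eqref{CharacteristicKSID} are exactly clause (i) of Proposition \ref{HSZeroReduceToZ}, whence $H_{A'ABC}=S_{AA'BB'}=0$ on $\mathcal{H}_1\cup\mathcal{H}_2$, and Propositions \ref{ExistenceKSWaveEquation} and \ref{KillingSpinorExistence} then yield the Killing spinor on $D(\mathcal{H}_1\cup\mathcal{H}_2)$. Your explicit handling of the degenerate Minkowski branch (where Proposition \ref{HSZeroReduceToZ} does not formally apply, since it assumes the development is not flat) is a point the paper's one-paragraph proof glosses over, and is a welcome refinement rather than a departure.
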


\begin{proof}
Due to Theorem \ref{HSZeroReduceToZ} we have that $H_{A'ABC}$ and
$S_{AA'BB'}$ vanish on $\Hone\cup\Htwo$. Data for $\kappa_{0},
\kappa_{1}, \kappa_{2}$ on $\Hone$ and $\Htwo$ are determined by their
values on $\cZ$ by \eqref{NPHH11}, \eqref{NPHH16}, \eqref{NPHH17},
\eqref{NPHH22}, \eqref{NPHH24} and \eqref{NPHH25}, so Proposition
\ref{ExistenceKSWaveEquation} says that there exists a unique solution
to \eqref{KillingSpinorWaveEquation} on
$D(\Hone\cup\Htwo)$. Proposition \ref{KillingSpinorExistence} then
says that this field $\kappa_{AB}$ satisfies $H_{A'ABC}=0$ on
$D(\Hone\cup\Htwo)$, so is indeed a Killing spinor there.
\end{proof}

\begin{remark}
{\em Condition \eqref{CharacteristicKSID} is a strong restriction on
the form of the Weyl spinor component $\Psi_2$ ---and thus, also of
the curvature of the 2-surface $\cZ$. As it will be seen in 
Section \ref{Section:DeterminationKappa1}, it fixes its
functional form up to some constants. As already discussed in
Remark \ref{Remark:GeometryZ} the Weyl spinor component $\Psi_2$ is
not a basic piece of initial data. In view of \eqref{ImPsi2} condition
\eqref{CharacteristicKSID}, ultimately leads to restrictions on
$\tau$ and $\zeta^{\mathcal{A}}$.
}
\end{remark}

\section{On the role of $\kappa_1$}
%{JAVK (23.4.2018): changed the title of the section}
\label{Deterninancy}

The purpose of this section is to discuss some of the consequences of
the existence of a Killing spinor field. The extent of these
implications is remarkable.

\subsection{Restrictions on the initial data of distorted black holes}
\label{DeterninancyOnGeom}

Note first that once $\kappa_{1}$ is fixed on $\cZ$, the component
$\Psi_{2}$ gets also to be determined by the relation
\eqref{MathfrakConstantRelation} as
\begin{equation}
\Psi_{2}=\mathfrak{K}\,\kappa_{1}^{-3}
\label{MathfrakConstantRelation2}
\end{equation}
where $\mathfrak{K}$ is some complex number. In turn, we also get
restrictions on the free data ---comprised by the complex vector field
$\zeta^A$ and the spin coefficient $\tau$ on $\cZ$--- as given in
R\'{a}cz's black hole holograph construction in \cite{Rac07,Rac14}.

\medskip
Now, observe that once $\Psi_{2}$ is known, the metric $\bm\sigma$ is
restricted in a great extent. To see this recall first the definition
of $\eth$ and $\overline{\eth}$ given by \eqref{eth_P} in terms of the
function $P$ on $\cZ$. By applying the commutation relation relevant
for $P$, that is of spin-weight one, we get
\[
(\eth\,\overline{\eth}-\overline{\eth}\,\eth)\,P=(\Psi_{2}+\overline{\Psi}_{2})\,P\,,
\]
which by using $\overline{\eth}\,P = 0$, and also by using explicit
$z$- and $\overline{z}$-derivatives \eqref{RePsi2} can be seen to take
the form
\begin{equation}\label{PDEonPbarP}
P\overline{P}\,\partial_z\partial_{\overline{z}}\big(\log(P\overline{P})\big)= -2\,\mbox{Re}(\Psi_{2})\,.
\end{equation}

Similarly, \eqref{ImPsi2} can be seen to impose very strong
restrictions on the spin-coefficient $\tau$. Indeed, using the above
notation \eqref{ImPsi2} takes the form
\begin{equation}\label{Imtau}
 P\,(\partial_z\overline{\tau})-\overline{P}\,(\partial_{\overline{z}}\tau)-\big( \,\overline{\tau}\,P\,(\partial_z\log\overline{P})-\tau\,\overline{P}\,(\partial_{\overline{z}}\log P)\,\big) = -2\,\mbox{i\,Im}(\Psi_{2})\,.
 \end{equation}
By applying the substitutions $\tau\mapsto\tau_{1}+
\mbox{i}\,\tau_{2}$, $P\mapsto P_{1}+ \mbox{i}\,P_{2}$ and
$z\mapsto z_{1}+ \mbox{i}\,z_{2}$ we get, by a direct calculation,
that the real part of \eqref{Imtau} reduces to a homogeneous linear
equation for $\tau_{1}$ and $\tau_{2}$, whereas the vanishing of the
imaginary part can be seen to be a first order linear partial
differential equation for the variables $\tau_{1}$ and $\tau_{2}$ on
$\mathbb{R}^2$, with coordinates $(z_{1},z_{2})$. Once, say,
$\tau_{1}$ is eliminated by the linear algebraic relation, the
corresponding linear partial differential equation can always be
solved for $\tau_{2}$. This completes then the verification of the claim that
  whenever a Killing spinor exists on a distorted vacuum black hole
  spacetime,  the specification of $\kappa_1$ on the bifurcation
  surface is equivalent to the freely specifiable data comprised by
  $\zeta^A$ and $\tau$ there. 

\begin{remark}
{\em It is worth mentioning that under a boost transformation
\[  
l^a\mapsto b\, l^a, \qquad n^a\mapsto b^{-1} n^a,
\]
 where $b$ is a
  smooth positive real function on $\cZ$, the spin connection
  coefficient $\tau$ transforms as 
\[
\tau\mapsto
  \tau + \delta\log b.
\]
This gauge freedom has been
  left open in the black hole holograph construction
\cite{Rac07,Rac14}. Thereby, by solving the first-order
  quasilinear partial differential equation 
\[
\tau_1 + \mbox{Re}(\delta\log b)=0
\]
 for $b$, the
  real part of $\tau$ could be transformed out to the expense of
  having the imaginary part changing as 
\[
\tau_2\mapsto \tau_2 +
  \mbox{Im}(\delta\log b).
\]
 For a simple application of such a gauge
  transformation see the argument below \eqref{Imtau2} in Section
  \ref{Section:Kerr}.}
\end{remark}

\subsection{The explicit form of $\kappa_{AB}$ and $\xi_{AA'}=\nabla^P{}_{A'}\kappa_{AP}$ on the horizon}
\label{DeterninancyOnH1UH2}

It is also instructive to compute the explicit form of the Killing
spinor $\kappa_{AB}$ and the associated Killing vector field
$\xi_{AA'}=\nabla^P{}_{A'}\kappa_{AP}$ on the horizon
$\Hone\cup\Htwo$.

\medskip

As for the explicit form of the Killing spinor note first that, in virtue of Theorem \ref{Theorem:ExistenceKillingSpinors}, on $\cZ$ we have 
\begin{eqnarray}
& \kappa_{0} = 0,\quad \eth^2\kappa_{1} = \overline{\eth}{}^2\kappa_{1} = 0\,, \quad
\kappa_{2} = 0 \,. 
\label{CharacteristicKSIDZ} 
\end{eqnarray}
Using then \eqref{NPHH11}, \eqref{NPHH16} and \eqref{NPHH17}, and by commuting $D$ and $\delta$ derivatives, we get that on $\Hone$  
\begin{eqnarray}
& \kappa_{0} = 0,\quad \kappa_{1} = \kappa_{1}|_{\cZ}  \,, \quad
\kappa_{2} = -2\,r\,\overline{\eth}{}\kappa_{1} \,. 
\label{CharacteristicKSIDH1} 
\end{eqnarray}
Analogously, by \eqref{NPHH22}, \eqref{NPHH24} and \eqref{NPHH25}, and by commuting $\Delta$ and $\delta$  derivatives,  we get on $\Htwo$ 
\begin{eqnarray}
& \kappa_{0} = -2\,u\,({\eth}{}\kappa_{1}+\tau\,\kappa_{1}),\quad \kappa_{1} = \kappa_{1}|_{\cZ}  \,, \quad
\kappa_{2} =  0\,. 
\label{CharacteristicKSIDH2} 
\end{eqnarray}
 These observations are summarised in Table\,\ref{Table: kappa}. 

{\renewcommand{\arraystretch}{1.5}
\begin{table}[t]
	\centering \small \hskip-.15cm
	\begin{tabular}{|c|c|c|} 
		\hline ${{\mathcal{H}}_1}$ &  ${{\mathcal{Z}}}$
		 & ${{\mathcal{H}}_2}$ \\ \hline \hline
		 $\kappa_{0} = 0$ &  $\kappa_{0} = 0$   & \  $\kappa_{0} = -2\,u\,({\eth}{}\kappa_{1}+\tau\,\kappa_{1})$ \ \\  \hline 
		$\kappa_{1} = \kappa_{1}|_{\cZ}$ &  $\quad \kappa_{1}:\  \eth^2\kappa_{1} = \overline{\eth}{}^2\kappa_{1} = 0$ \hfill  & $\kappa_{1} = \kappa_{1}|_{\cZ}$   \\  \hline 
		\quad $\kappa_{2} = -2\,r\,\overline{\eth}{}\kappa_{1}$ \ \quad &  $\kappa_{2} = 0$ & $\kappa_{2} = 0$ 
		\\ \hline
	\end{tabular}
	\caption{\small The components of the Killing spinor field $\kappa_{AB}$ on the null
		hypersurface $\mathcal{H}_1\cup\mathcal{H}_2$. }
	\label{Table: kappa}
\end{table}
}

\medskip
In exactly the same way, the components of the Killing vector field
$\xi_{AA'}=\nabla^P{}_{A'}\kappa_{AP}$ can be determined by equations
\eqref{KVatZ1}-\eqref{KVatZ4} on $\cZ$, by \eqref{KVEqnNPH11},
\eqref{KVEqnNPH13}, \eqref{KVEqnNPH16} and \eqref{KVEqnNPH110} on
$\Hone$, as well as, by \eqref{KVEqnNPH22}, \eqref{KVEqnNPH23},
\eqref{KVEqnNPH24} and \eqref{KVEqnNPH27} on $\Htwo$, along with
commuting derivatives on $\Hone$ and $\Htwo$, respectively. The
corresponding explicit formulas are collected in Table \ref{Table:
xiAA'}.

{\renewcommand{\arraystretch}{1.5}
	\begin{table}[t]
		\centering \small \hskip-.15cm
		\begin{tabular}{|c|c|c|} 
	\hline ${{\mathcal{H}}_1}$ &  ${{\mathcal{Z}}}$
	 & ${{\mathcal{H}}_2}$ 
			\\ \hline \hline
	\ \ $\xi_{11'} =  -3\,r\,(\tau\,\overline{\eth}{}\kappa_{1}-\overline{\tau}\,{\eth}{}\kappa_{1})$\ \  &  $\xi_{11'}=0$   &  $\xi_{11'}=0$ 
			\\  \hline 
	$\xi_{10'} =  -3\,\overline{\eth}{}\kappa_{1}$  & $\xi_{10'} =  -3\,\overline{\eth}{}\kappa_{1}$ & $\xi_{10'} =  -3\,\overline{\eth}{}\kappa_{1}$  
			\\  \hline 
	$\xi_{10'} =  3\,{\eth}{}\kappa_{1}$  & $\xi_{10'} =  3\,{\eth}{}\kappa_{1}$ & 
	$\xi_{10'} =  3\,{\eth}{}\kappa_{1}$   
			\\  \hline 
	$\xi_{00'}=0$ &  $\xi_{00'}=0$   & \ \  $\xi_{00'} =  -3\,u\,(\tau\,\overline{\eth}{}\kappa_{1}-\overline{\tau}\,{\eth}{}\kappa_{1})$\ \ 
			\\ \hline
		\end{tabular}
		\caption{\small The components of the Killing vector field  $\xi_{AA'}$ on the null
			hypersurface $\mathcal{H}_1\cup\mathcal{H}_2$. }
		\label{Table: xiAA'}
	\end{table}
}

\begin{remark}
{\em 
In order to proceed with the interpretation of the above expressions
recall, first, that any of the the distorted black hole configurations
was shown \cite{Rac07,Rac14} to admit a horizon Killing vector field
of the form
\[
K^a=
\begin{cases}
-r\,({\partial/\partial r})^a\,, \hskip2mm \mbox{on} \hskip2mm \Hone\,, \\
\hskip2.5mm u\,({\partial/\partial u})^a\,, \hskip2mm \mbox{on} \hskip2mm \Htwo \,.
\end{cases}
\]
Note also, in Gaussian null coordinates $(u,r,x^\mathcal{A})$ the
coordinate functions $u$ and $r$  are affine parameters along he
generators of $\Hone$ and $\Htwo$, respectively. Accordingly, the
components $\xi_{11'}$ and $\xi_{00'}$ of the Killing vector field are
not constant in these coordinates. They would be constant if they were
to be expressed in terms of the associated Killing parameters instead
of the affine ones. Notably, this behaviour of $\xi_{AA'}$ on the
horizon $\mathcal{H}_1\cup\mathcal{H}_2$ resembles that of the
asymptotically time translational Killing vector field
$(\partial/\partial t)^a$ of the Kerr solution. Recall that the orbits of
$(\partial/\partial t)^a$ repeatedly and periodically intersect the generators
of the horizons $\Hone$ and $\Htwo$, respectively, and also that
$(\partial/\partial t)^a$ reduces to an axial Killing vector field on the
bifurcation surface. }
\end{remark}

\subsection{The Petrov type of the domain of dependence}\label{PetrovType}

It has been known for long \cite{WalPen70} (see also \cite{PenRin84})
that the existence of the Killing spinor $\kappa_{AB}$ satisfying
\eqref{KillingSpinorEquation} imposes strong restrictions on the
self-dual Weyl spinor $\Psi_{ABCD}$ via the integrability condition
\[
\Psi_{(ABC}{}^F\kappa_{D)F}=0\,.
\]   
Namely, if neither $\Psi_{ABCD}$ nor $\kappa_{AB}$ vanishes there
exist some scalar $\psi$ such that
\[
\Psi_{ABCD}=\psi\, \kappa_{AB}\kappa_{CD}\,,
\]
implying, in particular, that $\Psi_{ABCD}$ \emph{must be of Petrov
type D or N}. It is also known that if the Killing spinor field
$\kappa_{AB}$ is generic, i.e.~$\kappa_{AB}=\alpha_{(A}\beta_{B)}$,
for some $\alpha_{A}\not=\beta_{A}$ spinors, $\Psi_{ABCD}$ \emph{is of
Petrov type D}.

\medskip
As this integrability condition had been used (see
Remark\,\ref{IntegrabilityCondition}) in the previous sections in
identifying the conditions on the initial data for $\kappa_{AB}$ on
$\mathcal{H}_1\cup\mathcal{H}_2$, and both $\kappa_{AB}$ and
$\Psi_{ABCD}$ are known to satisfy wave equations that are linear and
homogeneous in these variables (see e.g.~\cite{BaeVal10b}) the
integrability condition immediately holds everywhere in the domain of
dependence of $\mathcal{H}_1\cup\mathcal{H}_2$.

\medskip
Note also that, as for the Killing spinor field $\kappa_{AB} =
\kappa_2 o_A o_B - 2\kappa_1 o_{(A} \iota_{B)} + \kappa_0 \iota_A
\iota_B$ holds, whenever $\kappa_1$ is non identically zero (which
could only happen in the flat case) $\kappa_{AB}$ is guaranteed to be
generic. All these observations verify the following :

\begin{corollary}\label{cor: PetrovType}
Let $(\mathcal{M},\bmg)$ be a vacuum spacetime satisfying the
conditions of Theorem \ref{Theorem:ExistenceCIVPSpacetime}. If
$\kappa_1$ is not identically zero on the bifurcation surface $\cZ$
then the corresponding distorted black hole spacetime is of Petrov
type D everywhere in the domain of dependence of
$\mathcal{H}_1\cup\mathcal{H}_2$.
\end{corollary}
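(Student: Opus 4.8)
The plan is to assemble ingredients that are, for the most part, already in place, and then to handle one genuinely delicate point. First I would invoke Theorem~\ref{Theorem:ExistenceKillingSpinors}: under the stated hypotheses the Killing spinor candidate $\kappa_{AB}$ is a genuine Killing spinor on all of $D(\mathcal{H}_1\cup\mathcal{H}_2)$, and it is not identically zero there. The latter follows from Table~\ref{Table: kappa}: on both $\mathcal{H}_1$ and $\mathcal{H}_2$ one has $\kappa_1=\kappa_1|_{\cZ}$ while $\kappa_0|_{\mathcal{H}_1}=0$ and $\kappa_2|_{\mathcal{H}_2}=0$, so the hypothesis $\kappa_1\not\equiv 0$ on $\cZ$ guarantees $\kappa_{AB}\neq 0$ on an open portion of $\mathcal{H}_1\cup\mathcal{H}_2$ and hence, by uniqueness of the development, $\kappa_{AB}\not\equiv 0$ on $D(\mathcal{H}_1\cup\mathcal{H}_2)$.

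Next I would use the fact that in a vacuum spacetime any Killing spinor automatically satisfies the algebraic integrability condition $\Psi_{(ABC}{}^{F}\kappa_{D)F}=0$, obtained by differentiating \eqref{KillingSpinorEquation} and commuting covariant derivatives (see \cite{WalPen70,PenRin84}); this was already noted in Remark~\ref{IntegrabilityCondition}. By the Walker--Penrose argument, at any point of $D(\mathcal{H}_1\cup\mathcal{H}_2)$ where $\Psi_{ABCD}\neq 0$ (and $\kappa_{AB}\neq 0$) this relation forces $\Psi_{ABCD}=\psi\,\kappa_{(AB}\kappa_{CD)}$ for some scalar $\psi$, so that the principal spinors of $\Psi_{ABCD}$ are those of $\kappa_{AB}$, each with multiplicity two: the Petrov type is $D$ precisely when $\kappa_{AB}$ is generic (distinct principal spinors, i.e.\ $\kappa_{AB}\kappa^{AB}\neq 0$) and $N$ when it is not.

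What remains is to verify that $\Psi_{ABCD}\neq 0$ and that $\kappa_{AB}$ is generic at \emph{every} point of $D(\mathcal{H}_1\cup\mathcal{H}_2)$. On $\cZ$ this is immediate: by \eqref{CharacteristicKSIDZ} one has $\kappa_{AB}|_{\cZ}=-2\kappa_1\,o_{(A}\iota_{B)}$, so $\kappa_{AB}\kappa^{AB}|_{\cZ}=-2\kappa_1^{2}$, which is nonzero on the dense open subset of $\cZ$ where $\kappa_1\neq 0$ (recall $\eth^{2}\kappa_1=\overline{\eth}{}^{2}\kappa_1=0$ forces $\kappa_1$ to have at most isolated zeros). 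To propagate genericity off $\cZ$ I would use that $\mathfrak{K}=\kappa_1^{3}\Psi_2$ is a constant throughout $D(\mathcal{H}_1\cup\mathcal{H}_2)$ (Remark~\ref{re: MathfrakConstantRelation}). If $\mathfrak{K}=0$ then, since $\kappa_1\not\equiv 0$, one gets $\Psi_2\equiv 0$ on $\cZ$, whence $\Psi_{ABCD}|_{\cZ}=0$ (all other components vanish on $\cZ$ by Table~\ref{Table:CharacteristicInitialData}), and the propagation argument already carried out for the sub-case ``$\kappa_2$ nowhere vanishing'' in Section~\ref{Section:ConstraintsZ} makes the development flat (Minkowski) --- contradicting the non-triviality of the distorted black hole presupposed by the statement ``Petrov type $D$''. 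Hence $\mathfrak{K}\neq 0$, which makes both $\kappa_1$ and $\Psi_2$ nowhere vanishing; on $\mathcal{H}_1\cup\mathcal{H}_2$ (where $\kappa_0\kappa_2=0$) this already gives $\kappa_{AB}\kappa^{AB}=-2\kappa_1^{2}\neq 0$ and $\Psi_{ABCD}\neq 0$, and the resulting relation $\Psi_{ABCD}=\psi\,\kappa_{(AB}\kappa_{CD)}$ is exactly the Walker--Penrose normal form of a Petrov type $D$ field, in which $\psi$ and $\kappa_{AB}\kappa^{AB}$ are tied together by a universal constant; this relation transports along the development, so $\kappa_{AB}$ stays generic and $\Psi_{ABCD}\neq 0$ everywhere.

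The hard part will be this last step --- upgrading ``$\kappa_{AB}$ generic near $\cZ$'' to ``generic everywhere in $D(\mathcal{H}_1\cup\mathcal{H}_2)$'', i.e.\ excluding a degeneration of the Petrov type to $N$ (or $O$) away from the horizons. The cleanest route is the one sketched above via the constant $\mathfrak{K}=\kappa_1^{3}\Psi_2$, but one must check carefully that $\mathfrak{K}$ really does coincide, up to a universal numerical factor, with the Walker--Penrose invariant built from $\Psi_{ABCD}$ and $\kappa_{CD}\kappa^{CD}$, so that its non-vanishing genuinely controls $\kappa_{AB}\kappa^{AB}$ off $\mathcal{H}_1\cup\mathcal{H}_2$; the exclusion of $\mathfrak{K}=0$ (the non-flat hypothesis implicitly built into the conclusion) should likewise be made explicit.
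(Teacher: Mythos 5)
Your argument follows essentially the same route as the paper's: the Walker--Penrose integrability condition $\Psi_{(ABC}{}^{F}\kappa_{D)F}=0$ forces $\Psi_{ABCD}=\psi\,\kappa_{AB}\kappa_{CD}$, and genericity of $\kappa_{AB}$ (coming from $\kappa_1\not\equiv 0$) then yields Petrov type D. The paper's own proof is in fact terser than yours --- it simply asserts that $\kappa_1$ not identically zero makes $\kappa_{AB}$ generic, without addressing the possible off-horizon degeneration to type N that you rightly flag as the delicate point --- so your additional care via the constancy and non-vanishing of $\mathfrak{K}=\kappa_1^{3}\Psi_2$ goes beyond, rather than falls short of, the published argument.
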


\section{Axial symmetry of the bifurcation surface $\cZ\approx\mathbb{S}^2$}
\label{AxialSymmetry:BifurcationSphere}

As argued below, whenever the bifurcation surface $\mathcal{Z}$
possesses the topology of a two-sphere, $\mathbb{S}^2$, the conditions
in \eqref{CharacteristicKSID} immediately imply the existence of an
axial Killing vector field on $\mathcal{Z}$.

\subsection{Existence of a Killing vector on $\mathcal{Z}$}

 We begin by observing that as a consequence of equations
\eqref{KVatZ1} and \eqref{KVatZ4} then if $\kappa_0=\kappa_2=0$ then
necessarily $\xi_{00'}=\xi_{11'}=0$. Thus, under the assumptions of
Theorem \ref{Theorem:ExistenceKillingSpinors}, the Killing vector
$\xi_{AA'}$ is tangent to $\mathcal{Z}$ ---i.e. its only non-vanishing
components are $\xi_{01'}$ and $\xi_{10'}$.  As we have seen in
Subsection \ref{KVFonZ} the existence of a (possibly complex) Killing
vector field on $\mathcal{Z}$ is equivalent to
\eqref{ConstraintAtZ1}-\eqref{ConstraintAtZ3} on $\mathcal{Z}$ which
had also been seen to be equivalent to the vanishing of $S_{AA'BB'} =
\nabla_{AA'} \xi_{BB'} + \nabla_{BB'}\xi_{AA'}$ on $\cZ$.  Thereby,
the conditions of Theorem \ref{Theorem:ExistenceKillingSpinors} are
equivalent to the existence of a (possibly complex) Killing vector
field on $\mathcal{Z}$.

\begin{remark}\label{axPsi2kappa1}
	{\em As all the geometric quantities including the spin
coefficients, as well as, the Weyl spinor components $\Psi_{ABCD}$ are
constructed from the metric $\bmg$, given by \eqref{LineElement}, and
$\xi_{AA'}=\nabla^P{}_{A'}\kappa_{AP}$ is known to be a Killing vector
field everywhere in $D(\Hone\cup\Htwo)$ we immediately have that
		\[
		\xi^{AA'}\nabla_{AA'} \tau=0\,, \quad \xi^{AA'}\nabla_{AA'} \Psi_{2}=0 
		\]
		and, in virtue of \eqref{MathfrakConstantRelation} and the argument in Remark \ref{re: MathfrakConstantRelation} above, that 
		\[
		\xi^{AA'}\nabla_{AA'} \kappa_{1}=0 
		\] 
		everywhere in $D(\Hone\cup\Htwo)$. Thus we shall use
from now on, without loss of generality, that $\tau, \Psi_{2}$ and
$\kappa_{1}$, when they are restricted to $\cZ$, they all respect the
axial symmetry of the metric on $\mathcal{Z}$.
	}
\end{remark}

\subsection{The axial Killing vector}
 By our assumption on the underlying smoothness of the setting the
Killing vector field $\xi_{AA'}$ is smooth on $\mathcal{Z}$. If
$\xi_{AA'}$ was also Hermitian---i.e.
\[
\xi_{01'} = \overline{\xi}_{10'}\,,
\]
would hold, then, by appealing to the \emph{hairy ball theorem} this
Killing vector field vanished at some point, say at $p\in
\mathcal{Z}$. As ---apart from the trivial case when $\kappa_1$ is
constant on $\mathcal{Z}$--- $\xi_{AA'}$ was not identically zero on
$\mathcal{Z}$, and by applying the argument of Wald---see pages
119-120 in \cite{Wal94}---$\xi_{AA'}$ had to be an axial Killing
vector field on $\mathcal{Z}$ with closed orbits, with some fixed
periodicity, around the \emph{fixed point} $p\in \cZ$.

\medskip
In returning now to the generic case note that the argument just
outlined applies to the real and imaginary parts of $\xi_{AA'}$,
separately. Thereby, whenever $\xi_{AA'}$ is non-Hermitian the metric
on $\mathcal{Z}$ has to admit both $\xi_{AA'}+\overline{\xi}_{AA'}$
and $\mbox{i}\,(\xi_{AA'}-\overline{\xi}_{AA'})$ as real Killing
vector fields. If both of the are non-trivial they either vanish at
the same location on $\mathcal{Z}$ or not. If both vanish at $p\in
\mathcal{Z}$ they must be proportional and the factor of
proportionality is determined by the ration of their individual
periodicities. If their vanishing occurs at two different points of
$\mathcal{Z}$ then $\xi_{AA'}+\overline{\xi}_{AA'}$ and
$\mbox{i}\,(\xi_{AA'}-\overline{\xi}_{AA'})$ must be linearly
independent real axial Killing vector fields on $\mathcal{Z}$
implying, in virtue of \eqref{CharacteristicKSID}, that
$\kappa_1=const$ and, in turn, that $\Psi_2=const$ and $\tau=0$ which
implies then that the metric $\bm\sigma$ on $\mathcal{Z}$ is
spherically symmetric.

\medskip
We can summarise the discussion of this section in the following:

\begin{proposition}
\label{Proposition:AxialSymmetryBifurcationSphere}
Assume that the spacetime obtained from the characteristic initial
value problem in $D(\mathcal{H}_1\cup\mathcal{H}_2)$ admits a Killing
spinor $\kappa_{AB}$ such that \eqref{CharacteristicKSID} hold on
$\mathcal{Z}$. Then $\xi_{AA'}=\nabla^P{}_{A'}\kappa_{AP}$ gives rise
to a (possibly complex) axial Killing vector field on $\mathcal{Z}$.
\end{proposition}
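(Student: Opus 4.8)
The plan is to combine the purely algebraic reductions already carried out on $\cZ$ in Subsection~\ref{KVFonZ} with a global topological argument that uses $\cZ\approx\mathbb{S}^2$. First I would feed the Killing spinor data conditions \eqref{CharacteristicKSID} ---specifically $\kappa_0=\kappa_2=0$ on $\cZ$--- into the expressions \eqref{KVatZ1}-\eqref{KVatZ4} for the components of $\xi_{AA'}$. This immediately forces $\xi_{00'}=\xi_{11'}=0$ there, so that the only surviving components are $\xi_{01'}=3\,\eth\kappa_1$ and $\xi_{10'}=-3\,\overline{\eth}\kappa_1$; in particular $\xi_{AA'}$ is tangent to $\cZ$. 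Next, since $\eth^2\kappa_1=\overline{\eth}{}^2\kappa_1=0$ are among the conditions \eqref{CharacteristicKSID}, Lemma~\ref{KillingVectorZ} gives $S_{AA'BB'}=0$ on $\cZ$; equivalently, the intrinsic constraints \eqref{ConstraintAtZ1}-\eqref{ConstraintAtZ3} hold, so $\xi_{AA'}$ restricts to a (possibly complex) Killing vector of the induced two-metric $\bm\sigma$ on $\cZ$.

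The next step is the topology. With $\cZ\approx\mathbb{S}^2$ I would invoke the hairy ball theorem, applied not to the complex field $\xi_{AA'}$ directly but to each of the two real vector fields $\xi_{AA'}+\overline{\xi}_{AA'}$ and $\mathrm{i}(\xi_{AA'}-\overline{\xi}_{AA'})$. Each of these is a smooth real Killing field of $\bm\sigma$ and, unless identically zero, must vanish at some point of $\cZ$; by Wald's argument (pages 119--120 of \cite{Wal94}) a Killing field of a metric on $\mathbb{S}^2$ with a fixed point is necessarily a rotational (axial) Killing field with closed orbits of a fixed periodicity about that fixed point. When $\xi_{AA'}$ happens to be Hermitian this already yields the claim; in general it identifies the real and imaginary parts of $\xi_{AA'}$ as axial Killing fields.

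Finally I would dispose of the degenerate cases. If $\kappa_1$ is constant on $\cZ$ then $\eth\kappa_1=\overline{\eth}\kappa_1=0$ and $\xi_{AA'}\equiv 0$, so there is nothing to prove; this is exactly the situation in which, via \eqref{CharacteristicKSID} and \eqref{MathfrakConstantRelation2}, one is driven to $\Psi_2=\text{const}$, $\tau=0$ and a spherically symmetric $\bm\sigma$, which trivially carries an axial symmetry. In the non-Hermitian case, if $\xi_{AA'}+\overline{\xi}_{AA'}$ and $\mathrm{i}(\xi_{AA'}-\overline{\xi}_{AA'})$ vanish at a common point they must be proportional (the factor fixed by the ratio of their periods) and $\xi_{AA'}$ is a complex multiple of a single real axial field; if they vanish at two distinct points they are linearly independent real axial Killing fields, which forces $\bm\sigma$ to be spherically symmetric and hence again axially symmetric. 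In all cases $\xi_{AA'}$ ---or its real and imaginary parts--- is an axial Killing vector field on $\cZ$, which is the assertion.

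The step I expect to be the main obstacle is the passage from ``$\xi_{AA'}$ is a Killing field on $\mathbb{S}^2$ that vanishes somewhere'' to ``$\xi_{AA'}$ is an axial Killing field with closed orbits of fixed period'', made precise for a general (not necessarily round) metric on $\mathbb{S}^2$, together with the careful bookkeeping needed to run this simultaneously for the real and imaginary parts without loss of generality. The remaining ingredients are essentially already assembled in the preceding subsections.
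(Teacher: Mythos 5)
Your proposal is correct and follows essentially the same route as the paper: setting $\kappa_0=\kappa_2=0$ in \eqref{KVatZ1}--\eqref{KVatZ4} to get a tangential $\xi_{AA'}$, invoking Lemma~\ref{KillingVectorZ} for $S_{AA'BB'}=0$ on $\cZ$, and then applying the hairy ball theorem together with Wald's fixed-point argument to the real and imaginary parts of $\xi_{AA'}$, with the same case analysis (common zero versus distinct zeros, and the degenerate constant-$\kappa_1$ case). The step you flag as the main obstacle is handled in the paper exactly as you propose, by citation to pages 119--120 of \cite{Wal94}.
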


\section{Determining $\kappa_1$ on $\cZ$}
\label{Section:DeterminationKappa1}

As we have seen in Section \ref{AxialSymmetry:BifurcationSphere}, once
$\mathcal{Z}$ is assumed to have the topology of a 2-sphere, the
spinor $\xi_{AA'}$ is guaranteed to be an axial Killing vector field
on $\mathcal{Z}$. By restricting our considerations to this case, the
purpose of this section is to explicitly determine $\kappa_1$
satisfying the equations

\begin{equation}
\eth^2\kappa_1 =0\,, \qquad \overline{\eth}^2 \kappa_1=0 \,.
\label{EthEthKappaEquations}
\end{equation}
This can be done in the most effective way by using \emph{coordinates
  adapted to the axial symmetry} of $\bm\sigma$ as introduced in
\cite{DobLewPaw18}. Therefore, for the sake of completeness, we shall outline
the argument applied Section IV of \cite{DobLewPaw18} in the following
subsection.

\subsection{Coordinates adapted to axial symmetry}

Recall, first, that by assumption the 2-dimensional manifold $\mathcal{Z}$ has the
topology of the 2-sphere $\mathbb{S}^2$. Thus, by the \emph{Riemann
	mapping theorem} the metric $\bm \sigma$ is conformal to the standard
round metric of $\mathbb{S}^2$. Taking into account the assumption of
axial symmetry one writes then
\begin{equation}
{\bm\sigma} = \varpi^2 \big( \mathbf{d}\theta\otimes
\mathbf{d}\theta + \sin^2 \theta
\mathbf{d}\varphi\otimes\mathbf{d}\varphi \big)
\label{OldMetricZ}
\end{equation}
where $(\theta,\varphi)$ are standard \emph{spherical coordinates} on
$\mathbb{S}^2$ and $\varpi=\varpi(\theta)$ is a suitable conformal factor
depending only on the \emph{colatitude} $\theta$. If $\bm \sigma$ is a
smooth metric, then the conformal factor is also a \emph{strictly
	positive} scalar field over $\mathcal{Z}$. The key idea behind
the explicit integration of the equations in \eqref{EthEthKappaEquations} is to introduce
a new coordinate $\psi$ given by the condition
\[
\mathbf{d}\psi = \frac{\varpi^2}{R^2}\sin\theta \mathbf{d}\theta,
\]
where $R$ is the \emph{area radius} defined by 
\begin{eqnarray*}
	&& 4\pi R^2 = \int_{\mathcal{Z}} \varpi^2 \sin \theta \mathbf{d}
	\theta \wedge \mathbf{d}\varphi \\
	&& \phantom{4\pi R^2}= R^2 \int_{\mathcal{Z}} \mathbf{d}\psi \wedge
	\mathbf{d}\varphi = 2\pi R^2(\psi_1-\psi_0).
\end{eqnarray*}
Setting, without loss of generality, $\psi_0=-1$ one has that
$\psi_1=1$ ---these coordinate values correspond, respectively, to
the North and South poles of $\mathcal{Z}$ defined by the conditions
$\theta=0$ and $\theta=\pi$. Thus, the coordinate $\psi$ is defined on the range
$[-1,1]$. Defining, for convenience, the function $Q=Q(\psi)$ by 
\begin{equation}
Q\equiv \frac{\varpi^2 \sin^2 \theta}{R^2},
\label{DefinitionP}
\end{equation}
the metric \eqref{OldMetricZ} takes, in terms of the coordinates
$(\psi,\phi)$ the form
\begin{equation}
{\bm\sigma} = R^2 \left( \frac{1}{Q(\psi)^2} \mathbf{d}\psi \otimes
\mathbf{d}\psi + Q(\psi)^2 \mathbf{d}\varphi\otimes
\mathbf{d}\varphi \right).
\label{NewMetricZ}
\end{equation}
In particular, from \eqref{DefinitionP} we have that
\begin{equation}
Q(-1)=Q(1)=0.
\label{RegularityP}
\end{equation}

A direct computation then shows that the Levi-Civita connection of
${\bm\sigma}$ ---encoded in the combination $\alpha-\overline\beta$
[see e.g.~\eqref{eths}]--- is given
in terms of the function $Q$ by
\begin{equation}
\alpha -\overline\beta = - \frac{1}{\sqrt{2}R} \partial_\psi Q \equiv - \frac{1}{\sqrt{2}R} Q'.
\label{ConnectionZ}
\end{equation}

\subsection{Integration of the equations for $\kappa_1$}
We now make use of the coordinates introduced in the previous
subsection to integrate the equations in
\eqref{EthEthKappaEquations}. 

\medskip
Consistent with the discussion in Section \ref{AxialSymmetry:BifurcationSphere} we look for solutions
which are axially symmetric. To this end we observe that, in terms of
the coordinates $(\psi,\varphi)$, the directional derivatives $\delta$
and $\overline\delta$ acting on scalars are given by
\[
\delta = \frac{1}{\sqrt{2}R}\bigg( Q \partial_\psi +
\frac{\mbox{i}}{Q}\partial_\varphi\bigg), \qquad \overline\delta = \frac{1}{\sqrt{2}R}\bigg( Q \partial_\psi -
\frac{\mbox{i}}{Q}\partial_\varphi\bigg).
\]
As it follows from the argument applied in Remark \ref{axPsi2kappa1} $\kappa_1$ is also axially symmetric,
i.e. $\partial_\varphi \kappa_1=0$. Therefore the two conditions $\eth^2
\kappa_1=0$ and $\overline\eth^2 \kappa_1=0$ are no longer independent (in
fact they are equivalent!). Then, in virtue of \eqref{eths},
\begin{eqnarray*}
	&& \eth^2 \kappa_1 = \left( \frac{1}{\sqrt{2}R} Q \partial_\psi -
	\frac{1}{\sqrt{2}R}\partial_\psi Q \right)\left(\frac{1}{\sqrt{2}R}
	Q \partial_\psi \kappa_1\right)\\
	&& \phantom{\eth^2 \kappa_1} = \frac{Q^2}{2R^2}\partial^2_\psi \kappa_1=0\,,
\end{eqnarray*} 
from which one readily obtains the solution
\begin{equation}
\kappa_1 = \mathfrak{c} \psi + \mathfrak{b}, \qquad
\mathfrak{c},\, \mathfrak{b}\in \mathbb{C}. 
\label{AxisymmetricKappa}
\end{equation}
From this solution, recalling the relation \eqref{MathfrakConstantRelation} one readily
obtains the following expression for $\Psi_2$:
\begin{equation}
\Psi_2 = \frac{\mathfrak{K}}{(\mathfrak{c} \psi + \mathfrak{b})^3}.
\label{Psi2AxialSymmetry}
\end{equation}

\subsubsection{The Gauss-Bonnet condition}
The Weyl scalar is related to the Gaussian curvature of
2-surfaces ---see \cite{PenRin84}, Section 4.14. In particular, for
the 2-surface $\mathcal{Z}$ one has that it is given by
$K_{\mathcal{G}}=-2 \mbox{Re}\, \Psi_2$. It follows then that the
Gauss-Bonnet formula applied to $\mathcal{Z}\approx \mathbb{S}^2$
implies
\begin{equation}
\int_{\mathcal{Z}} \Psi_2 \mbox{d}S = -2\pi 
\label{GaussBonnet}
\end{equation}
---see equation (4.14.44) in \cite{PenRin84}. Taking into account the
line element \eqref{NewMetricZ} one finds that
\begin{eqnarray*}
&& \int_{\mathcal{Z}} \Psi_2 \mbox{d}S = R^2 \mathfrak{K} \int_{-1}^1
\int_0^{2\pi} \frac{\mbox{d}\varphi \mbox{d}\psi }{(\mathfrak{b}+ \mathfrak{c}
\psi)^3}\\
&& \phantom{\int_{\mathcal{Z}} \Psi_2 \mbox{d}S}= \frac{4\pi R^2
\mathfrak{K} \mathfrak{b} }{(\mathfrak{b}^2-\mathfrak{c}^2)^2}.  
\end{eqnarray*}
Thus, from \eqref{GaussBonnet} one obtains the condition
\begin{equation}
\frac{2 R^2 \mathfrak{K}\mathfrak{b} }{(\mathfrak{b}^2-\mathfrak{c}^2)^2}=-1.
\label{ConditionGaussBonnet}
\end{equation}
	
\begin{remark}
\label{Remark:GaussBonnet}
{\em Condition \eqref{ConditionGaussBonnet}, being a consequence of
the Gauss-Bonnet formula, is a necessary condition for $-2 \mbox{Re}\,
\Psi_2$ to be the Gaussian curvature of a smooth 2-surface. It, can be
used to fix the value of the radius $R$. Observe, also that it implies
that the combination
\[
\frac{\mathfrak{K}\mathfrak{b}}{(\mathfrak{b}^2-\mathfrak{c}^2)^2}
\] 
must be real. It will be seen in Subsection
\ref{Section:ApplyingMars} that for the Kerr spacetime one necessarily
has that $\mathfrak{K}$ must be real and $\mathfrak{c}$ pure
imaginary. If this is the case, then $\mathfrak{b}$ must also be
real. }
\end{remark}
	
\subsection{Integrating the function $Q$}
Equation \eqref{RePsi2} can be used to compute the explicit
form of the function $Q$ appearing in the line element
\eqref{NewMetricZ}. Taking into account \eqref{ConnectionZ}
and \eqref{Psi2AxialSymmetry}, equation \eqref{RePsi2}  implies then 
\[
(Q Q')'= \frac{\mathfrak{K} R^2}{(\mathfrak{b} +
\mathfrak{c}\psi)^3}+\frac{\overline{\mathfrak{K}}
R^2}{(\overline{\mathfrak{b}} + \overline{\mathfrak{c}}\psi)^3}\,. 
\]
This expression can be readily integrated to get
\begin{equation}
Q^2 = C_2 + C_1 \psi +
\frac{\mathfrak{K}R^2}{\mathfrak{c}^2(\mathfrak{b}+\mathfrak{c}\psi)}+
\frac{\overline{\mathfrak{K}}R^2}{\overline{\mathfrak{c}}^2(\overline{\mathfrak{b}}+\overline{\mathfrak{c}}\psi)} 
\label{GeneralSolutionFrameCoefficient}
\end{equation}
with $C_1$ and $C_2$ some real integration constants which are fixed
using the conditions in \eqref{RegularityP}. A direct computation
shows then that
\begin{subequations}
\begin{eqnarray}
&& C_1=
		\frac{\mathfrak{K}R^2}{\mathfrak{c}(\mathfrak{b}^2-\mathfrak{c}^2)}+ \frac{\overline{\mathfrak{K}}R^2}{\overline{\mathfrak{c}}(\overline{\mathfrak{b}}^2-\overline{\mathfrak{c}}^2)},
		\label{ConstantC1}  \\	
&& C_2=
		-\frac{\mathfrak{K}R^2\mathfrak{b}}{\mathfrak{c}^2(\mathfrak{b}^2-\mathfrak{c}^2)}
		-\frac{\overline{\mathfrak{K}}R^2\overline{\mathfrak{b}}}{\overline{\mathfrak{c}}^2(\overline{\mathfrak{b}}^2-\overline{\mathfrak{c}}^2)}.
		\label{ConstantC2}
\end{eqnarray}
\end{subequations}
	
	\begin{remark}
		{\em The constants $R$, $\mathfrak{K}$, $\mathfrak{b}$ and
			$\mathfrak{c}$ in \eqref{GeneralSolutionFrameCoefficient} and
			\eqref{ConstantC1}-\eqref{ConstantC2} are subject to the constraint
			\eqref{ConditionGaussBonnet} arising from the Gauss-Bonnet identity. } 
	\end{remark}
	
	\begin{remark}
		{\em In order to ensure the regularity of the function $Q$ and of the
			associated curvature of $\mathcal{Z}$, it is necessary that the ratio
			$-\mathfrak{b}/\mathfrak{c} \in \mathbb{C}\setminus[-1,1]$ ---that is,
			$-\mathfrak{b}/\mathfrak{c}$ can lie in any point of the complex plane
			except the interval $[-1,1]$ on the real axis. Moreover, in order for
			the expression \eqref{GeneralSolutionFrameCoefficient} to be well
			defined, the constants $R$, $\mathfrak{K}$, $\mathfrak{b}$ and
			$\mathfrak{c}$ have to be such that the left hand side of the
			expression is non-negative for $\psi\in[-1,1]$.}
	\end{remark}
	
\begin{remark}
	{\em Note that by writing out equation \eqref{ImPsi2}
          explicitly, and by making use of the present setup ---along
          with the explicit $\psi$ dependence of $Q$ and $\Psi_2$---the imaginary part of $\tau$  gets to be uniquely determined as 
\begin{equation}\label{Imtau2}
\mbox{Im}(\tau) = -\frac{\sqrt{2}\,R}{Q(\psi)}\int_{-1}^{\psi}\, \mbox{Im}(\Psi_2(\psi')) \mbox{d}\psi'\,.
\end{equation}
Note, finally, that by making use of the gauge freedom we have in the
black hole holograph construction ---for a related discussion and an
application see the last paragraph of Subsection
\ref{DeterninancyOnGeom}--- the real part of $\tau$ can be set to
zero by performing and axially symmetric boost
transformation with parameter $b=b(\psi)$ given by
\begin{equation}\label{Retau2}
b=\exp\left(-\frac{\sqrt{2}\,R}{Q(\psi)}\int_{-1}^{\psi} \mbox{Re}(\tau(\psi'))\, \mbox{d}\psi'\right)\,.	
\end{equation}	
Remarkably, the axial symmetry of the setup guarantees that
  $\mbox{Im}(\delta\log b)=0$ and, in turn, that the imaginary part of $\tau$ remains unchanged. This, in particular, implies that \eqref{Imtau2} holds independently of the choice made for the axially symmetric boost transformation or, equivalently, for the real part of $\tau$.	}
\end{remark}

%%%%%%%%%%%%%%%%%%%

\subsection{Summary: distorted black holes with Killing spinors}
	
Summarising the discussion of the previous section we get the following:

\begin{proposition}
\label{Proposition:MetricsZ}
There exists a five (real) parameter family of smooth axial symmetric
2-metrics ${\bm \sigma}$ on $\mathcal{Z}\approx \mathbb{S}^2$ such
that $\kappa_1$ is a solution to the constraints
\[
\eth^2 \kappa_1=0 \quad {\rm and} \quad \overline{\eth}^2 \kappa_1 =0\,,
\]
and such that the curvature condition
\[
\kappa_1^3 \Psi_2 =\mathfrak{K}
\]
also holds.
\end{proposition}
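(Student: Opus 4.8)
The plan is to assemble into one statement the explicit integration already performed in Section~\ref{Section:DeterminationKappa1} and then to carry out the parameter count. By Proposition~\ref{Proposition:AxialSymmetryBifurcationSphere} it suffices to work with axially symmetric 2-metrics on $\mathcal{Z}\approx\mathbb{S}^2$; by the Riemann mapping theorem any such metric can be put in the normal form \eqref{NewMetricZ} determined by a single function $Q=Q(\psi)$ on $[-1,1]$ obeying the boundary conditions \eqref{RegularityP}. The computation leading to \eqref{AxisymmetricKappa} shows that, in these coordinates, the two conditions $\eth^2\kappa_1=0$ and $\overline{\eth}^2\kappa_1=0$ coincide and are solved exactly by $\kappa_1=\mathfrak{c}\psi+\mathfrak{b}$ with $\mathfrak{b},\mathfrak{c}\in\mathbb{C}$; the curvature condition $\kappa_1^3\Psi_2=\mathfrak{K}$ then forces $\Psi_2$ into the form \eqref{Psi2AxialSymmetry}.

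First I would feed this form of $\Psi_2$, together with the expression \eqref{ConnectionZ} for the intrinsic connection coefficient $\alpha-\overline{\beta}$, into the identity \eqref{RePsi2} encoding the fact that $-2\,\mathrm{Re}(\Psi_2)$ is the Gaussian curvature of $\mathcal{Z}$ (see \cite{PenRin84}). This yields the second-order ordinary differential equation $(QQ')'=\mathfrak{K}R^2(\mathfrak{b}+\mathfrak{c}\psi)^{-3}+\overline{\mathfrak{K}}R^2(\overline{\mathfrak{b}}+\overline{\mathfrak{c}}\psi)^{-3}$ for $Q^2$, which integrates directly to \eqref{GeneralSolutionFrameCoefficient} with two real integration constants $C_1,C_2$. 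Imposing the regularity requirement $Q(-1)=Q(1)=0$ from \eqref{RegularityP} determines $C_1$ and $C_2$ uniquely, namely \eqref{ConstantC1}--\eqref{ConstantC2}, so that the whole 2-metric $\bm\sigma$ (and the associated curvature scalar $-2\,\mathrm{Re}(\Psi_2)$) is fixed by the constants $R,\mathfrak{K},\mathfrak{b},\mathfrak{c}$.

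Next I would impose the Gauss--Bonnet identity. Integrating $-2\,\mathrm{Re}(\Psi_2)$ over $\mathcal{Z}$ against the area element of \eqref{NewMetricZ}, and using $\int_{\mathcal{Z}}\Psi_2\,\mathrm{d}S=-2\pi$ (equation \eqref{GaussBonnet}), gives the constraint \eqref{ConditionGaussBonnet}; since regularity of $\tau$ at the poles forces $\int_{-1}^{1}\mathrm{Im}(\Psi_2)\,\mathrm{d}\psi=0$ via \eqref{Imtau2}, the left-hand side of \eqref{ConditionGaussBonnet} is automatically real, and the condition simply fixes $\mathfrak{K}=-(\mathfrak{b}^2-\mathfrak{c}^2)^2/(2R^2\mathfrak{b})$. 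Thus $\mathfrak{K}$ is no longer free and the remaining data is parametrised by $R\in\mathbb{R}_{>0}$ together with $\mathfrak{b},\mathfrak{c}\in\mathbb{C}$, i.e.\ by five real parameters (note that the metric $\bm\sigma$ itself depends only on the scale-invariant combinations of $\mathfrak{b},\mathfrak{c},\mathfrak{K}$, while the full datum $(\bm\sigma,\kappa_1,\mathfrak{K})$ involves all five). Conversely, for each admissible choice of these parameters the formulas \eqref{AxisymmetricKappa}, \eqref{Psi2AxialSymmetry}, \eqref{GeneralSolutionFrameCoefficient} and \eqref{ConstantC1}--\eqref{ConstantC2} produce a $\kappa_1$ solving $\eth^2\kappa_1=\overline{\eth}^2\kappa_1=0$ and a $\Psi_2$ with $\kappa_1^3\Psi_2=\mathfrak{K}$.

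The step I expect to be the main obstacle is checking that this five-parameter family genuinely consists of \emph{smooth} metrics on $\mathbb{S}^2$, rather than of merely formal solutions. Concretely, one must exhibit a non-empty open subset of parameters for which: the pole $\psi=-\mathfrak{b}/\mathfrak{c}$ of $\Psi_2$ lies outside $[-1,1]$, so that the curvature is regular on $\mathcal{Z}$, i.e.\ $-\mathfrak{b}/\mathfrak{c}\in\mathbb{C}\setminus[-1,1]$; the right-hand side of \eqref{GeneralSolutionFrameCoefficient} is strictly positive on $(-1,1)$; and $Q^2$ has simple zeros at $\psi=\pm1$, so that in \eqref{NewMetricZ} the metric closes up at the poles without conical singularities. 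These requirements define an open set of parameters, and one checks it is non-empty by noting that the round metric of $\mathbb{S}^2$ is recovered in the limit $\mathfrak{c}\to0$ (where $Q^2=1-\psi^2$), which also furnishes the degenerate, spherically symmetric member of the family. Finally one records, as in \eqref{Imtau2}, the resulting imaginary part of $\tau$, thereby completing the determination of the associated free data on $\mathcal{Z}$.
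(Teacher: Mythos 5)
Your proposal reproduces the paper's own argument (Section~\ref{Section:DeterminationKappa1}) essentially step by step: the same adapted coordinates $(\psi,\varphi)$ with the single metric function $Q$, the same integration of $\eth^2\kappa_1=0$ giving $\kappa_1=\mathfrak{c}\psi+\mathfrak{b}$, the same reconstruction of $Q^2$ from \eqref{RePsi2} with the integration constants \eqref{ConstantC1}--\eqref{ConstantC2} fixed by $Q(\pm1)=0$, and the same use of the Gauss--Bonnet identity to cut the seven real constants $(R,\mathfrak{K},\mathfrak{b},\mathfrak{c})$ down to five (you solve \eqref{ConditionGaussBonnet} for $\mathfrak{K}$, the paper keeps $\mathfrak{K}$ and imposes it as a complex constraint --- the count is the same).

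The one place where your added regularity discussion goes astray is the claim that requiring $Q^2$ to have \emph{simple zeros} at $\psi=\pm1$ already makes the metric \eqref{NewMetricZ} close up at the poles without conical singularities, and that this is an \emph{open} condition on the parameters. Absence of a conical defect is the sharper requirement $(Q^2)'(\pm1)=\mp2$, so that a small circle of proper radius $s$ about either pole has circumference $2\pi s$. Evaluating $(Q^2)'$ from \eqref{GeneralSolutionFrameCoefficient} with \eqref{ConstantC1} and using \eqref{ConditionGaussBonnet}, one finds
\[
(Q^2)'(\pm1)=\mp2+2\,\mbox{Re}(\mathfrak{c}/\mathfrak{b})\,,
\]
so smooth closure at the poles is the codimension-one constraint $\mbox{Re}(\mathfrak{c}/\mathfrak{b})=0$: it is neither an open condition nor a consequence of the zeros being simple. (It does hold automatically in the Hermitian subcase of Section~\ref{Section:Kerr}, where $\mathfrak{b}$ is real and $\mathfrak{c}$ purely imaginary.) To be fair, the paper's own proof does not impose this condition either --- its regularity discussion is confined to the remarks on $-\mathfrak{b}/\mathfrak{c}\notin[-1,1]$ and the non-negativity of $Q^2$ --- so your write-up is no less complete than the paper's; but as stated your ``open set of parameters'' argument would not establish smoothness of $\bm\sigma$ at the poles, and if that condition is enforced the genuinely smooth family is cut down by one further real parameter.
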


\begin{remark}
{\em By appealing now to the general black hole holograph
construction \cite{Rac14}, as summarised in Theorem
\ref{Theorem:ExistenceCIVPSpacetime}, it follows then that from the
five parameter family of initial data---comprised by the metrics
referred to in Proposition \ref{Proposition:MetricsZ}, along with
pertinent form of $\tau$ determined by \eqref{Imtau2}---on $\cZ$,
there exists a \emph{five parameter family of distorted black hole
configurations} the members of which are uniquely determined
everywhere in the domain of dependence of the initial data surface,
$\mathcal{H}_1\cup\mathcal{H}_2$. }
\end{remark}

\section{Enforcing the Hermiticity of the Killing vector}
\label{Section:HermiticityKillingVector}
In Theorem \ref{Theorem:SpinorialCharacterisationKerr}, the assumption
that the spinor $\xi_{AA'}$ constructed from the Killing spinor
$\kappa_{AB}$ is Hermitian is needed in order to show that the
spacetime is isometric to the Kerr solution. Recall that using
equations \eqref{XiExpanded1}-\eqref{XiExpanded4} the components of
$\xi_{AA'}$ can be expressed in terms of derivatives of the Killing
spinor components $\kappa_{0}, \kappa_{1}$ and
$\kappa_{2}$. Accordingly, the Hermiticity condition leads to further
restrictions on the components $\kappa_{0}, \kappa_{1}$ and
$\kappa_{2}$.  A consequence of the following proposition is that it
suffices to impose restrictions only on the hypersurfaces $\Hone$ and
$\Htwo$.

\begin{proposition}
\label{Proposition:WaveEquationKillingVector}
Let $\kappa_{AB}$ be a solution to equation
\eqref{KillingSpinorWaveEquation}. Then the spinor field $\xi_{AA'}$ satisfies the wave equation
\begin{equation}
\square\xi_{AA'} = -\Psi_{A}{}^{BCD}H_{A'BCD}
\end{equation}
\end{proposition}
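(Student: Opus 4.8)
The plan is to derive the wave equation for $\xi_{AA'}$ directly from its definition $\xi_{AA'} = \nabla^P{}_{A'}\kappa_{AP}$ together with the wave equation \eqref{KillingSpinorWaveEquation} satisfied by $\kappa_{AB}$. First I would apply the d'Alembertian $\square = \nabla^{BB'}\nabla_{BB'}$ to $\xi_{AA'}$ and commute it past the derivative $\nabla^P{}_{A'}$ that appears in the definition of $\xi$. That is, writing $\square\xi_{AA'} = \nabla^{BB'}\nabla_{BB'}\nabla^P{}_{A'}\kappa_{AP}$, I would reorder the three derivatives so as to produce a term involving $\square\kappa_{AP}$ (which is then eliminated using \eqref{KillingSpinorWaveEquation} in favour of $-\Psi_{APCD}\kappa^{CD}$) plus a collection of curvature commutator terms coming from the reshuffling.

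The key computational device is the standard spinor Ricci identity for the commutator of two derivatives acting on a spinor. In a vacuum spacetime the relevant curvature operators are
\[
\square_{AB} = \nabla_{A'(A}\nabla^{A'}{}_{B)}, \qquad \square_{A'B'} = \nabla_{A(A'}\nabla^{A}{}_{B')},
\]
with $\square_{AB}\phi_C = \Psi_{ABCD}\phi^D$ and $\square_{A'B'}\phi_C = 0$ on a vacuum background (and the conjugate statements for primed-indexed spinors). The strategy is therefore: commute $\nabla_{BB'}$ past $\nabla^P{}_{A'}$, picking up a $\square_{B'A'}$-type term (which kills the corresponding primed piece in vacuum) and a trace term; then contract with the remaining $\nabla^{BB'}$ so that one of the surviving double-derivative combinations becomes $\square$ on $\kappa$ and another becomes a $\square_{AB}$ acting on $\kappa_{AP}$, producing a Weyl term of the schematic form $\Psi\cdot\kappa$. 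Collecting all the Weyl contractions and using the symmetry $\kappa_{AB}=\kappa_{(AB)}$ and the Killing spinor equation (in the form $H_{A'ABC}=3\nabla_{A'(A}\kappa_{BC)}$, i.e.\ replacing $\nabla_{A'A}\kappa_{BC}$ by its trace part plus $\tfrac13 H_{A'ABC}$ whenever the totally symmetric part is needed), the various curvature terms should organise themselves into the single right-hand side $-\Psi_A{}^{BCD}H_{A'BCD}$, with the remaining pieces cancelling among themselves.

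The main obstacle I expect is bookkeeping: keeping track of which irreducible parts of the various double-derivative combinations survive after symmetrisation, and verifying that the "extra" curvature terms genuinely cancel rather than leaving a residual $\Psi\cdot\kappa$ contribution. In particular, the delicate point is that when $\kappa_{AB}$ is merely a solution of the wave equation \eqref{KillingSpinorWaveEquation} (and \emph{not yet} assumed to be a Killing spinor), the zero-quantity $H_{A'ABC}$ need not vanish, so one must be careful to express $\nabla_{A'A}\kappa_{BC}$ consistently as (trace terms) $+\tfrac13 H_{A'ABC}$ throughout, and the identity must hold as an honest off-shell relation with $H$ appearing explicitly on the right. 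A clean way to organise this is to first establish the commutator identity at the level of $\nabla^{BB'}\square_{B'A'}$ and $\nabla^{BB'}\square_{AB}$ acting on $\kappa_{AP}$, invoke vacuum to drop the primed curvature, and only at the end substitute \eqref{KillingSpinorWaveEquation}; the bundle of leftover terms is then a contraction of $\Psi_{ABCD}$ with $\nabla_{A'}{}^{(A}\kappa^{BC)}$, i.e.\ precisely $-\Psi_A{}^{BCD}H_{A'BCD}$ up to the numerical factor absorbed in the definition of $H$.
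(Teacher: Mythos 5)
Your proposal is correct and follows essentially the same route as the paper, whose entire proof reads ``Follows by commuting derivatives, and using \eqref{KillingSpinorWaveEquation}''; your outline (commute $\square$ past $\nabla^P{}_{A'}$ via the spinor Ricci identities, drop the primed curvature operators in vacuum, substitute the wave equation for $\kappa_{AB}$, and recognise the surviving Weyl contractions as $-\tfrac13\Psi_A{}^{BCD}\cdot 3\nabla_{A'(B}\kappa_{CD)}=-\Psi_A{}^{BCD}H_{A'BCD}$) is a faithful, and in fact more explicit, account of that computation. Your remark that the identity must hold off-shell with $H_{A'ABC}\neq 0$ is exactly the right point of care.
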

\begin{proof}
Follows by commuting derivatives, and using \eqref{KillingSpinorWaveEquation}.
\end{proof}

An immediate consequence of this result is that
\begin{equation}\label{eq: Hermxi}
\square\left(\xi_{AA'}-\overline{\xi}_{AA'}\right) = \overline{\Psi}_{A'}{}^{B'C'D'}\overline{H}_{AB'C'D'} -\Psi_{A}{}^{BCD}H_{A'BCD}
\end{equation}

\medskip
Assuming that the conditions of Lemmas \ref{KillingSpinorH1} and
\ref{KillingSpinorH2} are satisfied and $H_{A'ABC}$ vanishes
$\mathcal{H}_1\cup\mathcal{H}_2$. Then, in virtue of \eqref{eq:
Hermxi}, $\xi_{AA'}-\overline{\xi}_{AA'}$ must also vanish everywhere
on the domain of dependence of $\mathcal{H}_1\cup\mathcal{H}_2$,
guaranteeing thereby that the vector $\xi_{AA'}$ is Hermitian there.

\medskip
This verifies then the following:
\begin{proposition}
\label{Proposition:AxialSymmetryBifurcationSphereRe} Assume
that the spacetime obtained from the characteristic initial value
problem in $D(\mathcal{H}_1\cup\mathcal{H}_2)$ admits a Killing spinor
$\kappa_{AB}$ such that conditions \eqref{CharacteristicKSID} and
\[ 
\eth (\kappa_1 +\overline\kappa_1)=0
\]
 hold on $\mathcal{Z}$. Then
$\xi_{AA'}=\nabla^P{}_{A'}\kappa_{AP}$ is a real axial Killing vector
field on $\mathcal{Z}$.
\end{proposition}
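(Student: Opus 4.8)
The plan is to deduce the statement from Proposition \ref{Proposition:AxialSymmetryBifurcationSphere} together with a short direct check that the extra requirement $\eth(\kappa_1+\overline\kappa_1)=0$ is nothing but the Hermiticity of $\xi_{AA'}$ on $\cZ$. Proposition \ref{Proposition:AxialSymmetryBifurcationSphere} already gives, under \eqref{CharacteristicKSID}, that $\xi_{AA'}=\nabla^P{}_{A'}\kappa_{AP}$ restricts to a (possibly complex) axial Killing vector field on $\cZ$, so only reality remains. I would start from the reduced expressions \eqref{KVatZ1}--\eqref{KVatZ4}: since \eqref{CharacteristicKSID} forces $\kappa_0=\kappa_2=0$ on $\cZ$, one has $\xi_{11'}=\xi_{00'}=0$, $\xi_{01'}=3\,\eth\kappa_1$ and $\xi_{10'}=-3\,\overline\eth\kappa_1$ there. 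The Hermiticity conditions $\xi_{11'}=\overline\xi_{1'1}$ and $\xi_{00'}=\overline\xi_{0'0}$ are then trivial, and the only remaining one, $\xi_{01'}=\overline\xi_{1'0}$, reads $3\,\eth\kappa_1=\overline{(-3\,\overline\eth\kappa_1)}=-3\,\eth\overline\kappa_1$, where I used that $\kappa_1$ has zero spin weight so that $\overline{\overline\eth\kappa_1}=\eth\overline\kappa_1$. This is exactly $\eth(\kappa_1+\overline\kappa_1)=0$, which is the hypothesis; hence $\xi_{AA'}$ is Hermitian, i.e.\ real, on $\cZ$. Combined with Proposition \ref{Proposition:AxialSymmetryBifurcationSphere} this proves the statement.

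For completeness I would also record (as the surrounding discussion does) that the Hermiticity in fact extends to the whole of $D(\Hone\cup\Htwo)$, which is the form needed later for Theorem \ref{Theorem:SpinorialCharacterisationKerr}. Here the idea is that $\kappa_1$ is constant along the generators of $\Hone$ and $\Htwo$ (Tables \ref{Table:CharacteristicInitialData} and \ref{Table: kappa}) and the connection coefficients $\alpha,\beta,\tau$ entering the operators $\eth,\overline\eth$ on the sections are transported trivially along those generators ($D\alpha=D\beta=D\tau=0$ on $\Hone$, $\Delta\alpha=\Delta\beta=\Delta\tau=0$ on $\Htwo$). Hence $\eth(\kappa_1+\overline\kappa_1)=0$, and its conjugate $\overline\eth(\kappa_1+\overline\kappa_1)=0$ which is automatic since $\kappa_1+\overline\kappa_1$ is real, persist on $\Hone\cup\Htwo$, and a component-by-component inspection of Table \ref{Table: xiAA'} --- e.g.\ reality of $\xi_{11'}=-3\,r\,(\tau\,\overline\eth\kappa_1-\overline\tau\,\eth\kappa_1)$ on $\Hone$ reduces to $(\tau\,\overline\eth-\overline\tau\,\eth)(\kappa_1+\overline\kappa_1)=0$ --- shows $\xi_{AA'}=\overline\xi_{AA'}$ on $\Hone\cup\Htwo$.

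Finally, with $\xi_{AA'}-\overline\xi_{AA'}=0$ on $\Hone\cup\Htwo$, I would invoke Proposition \ref{Proposition:WaveEquationKillingVector} and the resulting identity \eqref{eq: Hermxi}. The conditions \eqref{CharacteristicKSID} imply, via Proposition \ref{HSZeroReduceToZ} and Lemmas \ref{KillingSpinorH1}, \ref{KillingSpinorH2}, that $H_{A'ABC}=0$ on $\Hone\cup\Htwo$, and by Theorem \ref{Theorem:ExistenceKillingSpinors} $\kappa_{AB}$ is then a genuine Killing spinor on $D(\Hone\cup\Htwo)$, so $H_{A'ABC}$ and $\overline H_{AB'C'D'}$ vanish everywhere there. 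Therefore the right-hand side of \eqref{eq: Hermxi} vanishes identically, $\xi_{AA'}-\overline\xi_{AA'}$ solves a homogeneous wave equation with vanishing characteristic data, and by the uniqueness argument of Lemma \ref{Lemma:KillingSpinorEquationEvolutionSystem} it vanishes throughout $D(\Hone\cup\Htwo)$ --- in particular on $\cZ$.

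I do not expect a genuine obstacle here: the argument is essentially the observation that the new $\cZ$-condition is the $\eth$-projection of Hermiticity, supplemented by a transport argument. The only points requiring care are the bookkeeping of complex conjugation of $\eth$/$\overline\eth$ derivatives of spin-weighted quantities (the signs and the $\eth\leftrightarrow\overline\eth$ interchange) and checking that propagation along the null generators of $\Hone$ and $\Htwo$ genuinely preserves $\eth(\kappa_1+\overline\kappa_1)=0$, which hinges on the triviality of the transport of the relevant NP coefficients in the chosen gauge. Once these conjugation identities are in place, every step is a one-line computation.
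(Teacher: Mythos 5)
Your proof is correct and follows essentially the same route as the paper: the direct check that $\eth(\kappa_1+\overline\kappa_1)=0$ is precisely the condition $\xi_{01'}=\overline{\xi_{10'}}$ on $\cZ$ (with $\xi_{00'}=\xi_{11'}=0$ automatic) is the content of the paper's Section \ref{Section:HermiticityKillingVector}, and the axial-symmetry part is, as you say, Proposition \ref{Proposition:AxialSymmetryBifurcationSphere}. Your extension to $\Hone\cup\Htwo$ and then to $D(\Hone\cup\Htwo)$ via Proposition \ref{Proposition:WaveEquationKillingVector} also mirrors the paper, and is in fact slightly more careful in making explicit that the homogeneous wave equation for $\xi_{AA'}-\overline{\xi}_{AA'}$ must be supplemented by the vanishing of its characteristic data on $\Hone\cup\Htwo$.
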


\subsection{Some immediate restrictions}

The Hermiticity of the Killing vector $\xi_{AA'}$ 
%on $\Hone\cup\Htwo$ 
is equivalent to the relations
\begin{equation}
\xi_{00'}=\overline{\xi}_{00'}\,, \quad \xi_{01'}=\overline{\xi}_{10'}\,, \quad
\xi_{10'}=\overline{\xi}_{01'}\,, \quad \xi_{11'}=\overline{\xi}_{11'}\,.% \qquad
%\mbox{on}\quad  \Hone\cup\Htwo\,.
\label{HermiticityCondition}
\end{equation}
These conditions will be imposed on $\Hone$ and $\Htwo$ separately.

\medskip
\noindent
\textbf{Conditions on $\Hone$.} On $\Hone$, using the explicit expressions
\eqref{XiExpanded1}-\eqref{XiExpanded4}, the first condition
in \eqref{HermiticityCondition} is trivially satisfied, and the remaining conditions can be shown to be equivalent to
\begin{subequations}
\begin{align}
\delta(\kappa_{1}+\overline{\kappa}_{1})&=0, \label{HermiticityH1a}\\
\overline{\delta}(\kappa_{1}+\overline{\kappa}_{1})&=0, \label{HermiticityH1b}\\
\Delta\kappa_{1}+\tau\kappa_{2} &\quad \mbox{real}, \label{HermiticityH1c}
\end{align}
\end{subequations}
on $\Hone$. In fact, it is straightforward to show that on $\Hone$
\begin{equation*}
D\delta(\kappa_{1}+\overline{\kappa}_{1}) =D\overline{\delta}(\kappa_{1}+\overline{\kappa}_{1})=0\,.
\end{equation*}
Thus, it suffices to impose conditions
\eqref{HermiticityH1a}-\eqref{HermiticityH1b} only on
$\cZ$. In other words, the Hermiticity condition on $\Hone$ is equivalent to
\begin{align*}
&\mbox{Re}(\kappa_{1})  \quad \text{constant on }\cZ, \\
&\Delta\kappa_{1}+\tau\kappa_{2} \quad \mbox{real}\quad  \text{ on } \Hone.
\end{align*}

\medskip
\noindent
\textbf{Conditions on $\Htwo$.} Secondly, on $\Htwo$, the last
condition in \eqref{HermiticityCondition} is trivially satisfied and
the remaining conditions are equivalent to
\begin{subequations}
\begin{align}
\delta(\kappa_{1}+\overline{\kappa}_{1})&=0, \label{HermiticityH2a}\\
\overline{\delta}(\kappa_{1}+\overline{\kappa}_{1})&=0, \label{HermiticityH2b}\\
D\kappa_{1}\quad \mbox{real}, \label{HermiticityH2c}
\end{align}
\end{subequations}
on $\Htwo$. Again, it is straightforward to show that on $\Htwo$
\begin{equation*}
\Delta\delta(\kappa_{1}+\overline{\kappa}_{1}) =\Delta\overline{\delta}(\kappa_{1}+\overline{\kappa}_{1})=0.
\end{equation*}
Consequently, it suffices to impose conditions
\eqref{HermiticityH2a}-\eqref{HermiticityH2b} on $\cZ$. 

\medskip
Combining the discussion of the previous two paragraphs one concludes that the spinor
field $\xi_{AA'}$ is Hermitian on $\Hone\cup\Htwo$ if
and only if we have
\begin{subequations}
\begin{align}
\kappa_{1}+\overline{\kappa}_1 = \ \text{const\ on }&\cZ, \label{HermiticityBasic1}\\
\Delta\kappa_{1}+\tau\,\kappa_{2} \quad  \text{ real\  on\ }  &\Hone, \label{HermiticityBasic2}\\
D\kappa_{1}\quad  \text{ real\ on\ }  &\Htwo. \label{HermiticityBasic3}
\end{align}
\end{subequations}

\subsection{Hermiticity in terms of conditions at $\cZ$}
In this section it is shown that conditions
\eqref{HermiticityBasic2}-\eqref{HermiticityBasic3} can be replaced by restrictions on $\cZ$.

\medskip
\noindent
\textbf{Analysis on $\Htwo$.} Start by considering condition
\eqref{HermiticityBasic3}. From the transport equation
\eqref{KillingSpinorWaveEquationNP2H2} on $\Htwo$, and equation
\eqref{NPHZ7}, we have that
\begin{equation*}
2 \Delta D \kappa_1 = \delta\overline{\delta}\kappa_1 +\overline{\delta} \delta
\kappa_1 +4\tau \overline{\delta}\kappa_1
 - (3\alpha+\overline{\beta})\delta\kappa_1  -  (3\overline{\alpha}+\beta)\overline{\delta}\kappa_1
 +2\Psi_2 \kappa_1 
\end{equation*}
on $\Htwo$. Taking a further $\Delta$-derivative we obtain
\begin{equation*}
2 \Delta\Delta D \kappa_1 = \Delta(\delta\overline{\delta} +\overline{\delta} \delta)
\kappa_1 +4\tau \Delta\overline{\delta}\kappa_1
 - (3\alpha+\overline{\beta})\Delta\delta\kappa_1  -  (3\overline{\alpha}+\beta)\Delta\overline{\delta}\kappa_1
 +2\Psi_2 \Delta\kappa_1.
\end{equation*}
We can commute the $\Delta$-derivative with the $\delta$ and $\overline{\delta}$ derivatives to obtain
\begin{equation*}
2 \Delta\Delta D \kappa_1 = (\delta\overline{\delta} +\overline{\delta} \delta)
\Delta\kappa_1 +4\tau \overline{\delta}\Delta\kappa_1
 - (3\alpha+\overline{\beta})\delta\Delta\kappa_1  -  (3\overline{\alpha}+\beta)\overline{\delta}\Delta\kappa_1
 +2\Psi_2 \Delta\kappa_1.
\end{equation*}
Note that all the terms on the right are proportional to intrinsic
derivatives of $\Delta\kappa_{1}$, which by \eqref{NPHH25} is
proportional to $\kappa_{2}$ and its intrinsic derivatives on
$\Htwo$. As shown in subsection \ref{subsection:kappa2},  unless our
spacetime is the Minkowski solution, the component $\kappa_{2}$ must
vanishes on $\Htwo$. It follows then that
\begin{equation*}
\Delta\Delta D\kappa_{1}=0\quad\text{ on }\Htwo\,.
\end{equation*}
This is a second order ordinary differential equation along the
generators of $\Htwo$. Therefore, the requirement that $D\kappa_{1}$
is real on $\Htwo$ is equivalent to requiring that $D\kappa_{1}$ and
$\Delta D\kappa_{1}$ are real on $\cZ$.

\medskip
\noindent
\textbf{Analysis on $\Hone$.} An analogous argument apply in case of
condition \eqref{HermiticityBasic2}. Take first a $D$-derivative
along the generators of $\Hone$ and use the transport equation
\eqref{KillingSpinorWaveEquationNP2H1} on $\Hone$, along with the
assumption that $\kappa_{0}$ vanishes in $\Hone$ to obtain 
\begin{equation*}
2D(\Delta\kappa_{1} + \tau\kappa_{2})=
 \delta \overline{\delta} \kappa_1
+\overline{\delta}\delta \kappa_1  - (\alpha -\overline{\beta})
\delta\kappa_1 -
(\overline{\alpha}-\beta)\overline{\delta}\kappa_1
 +2 \Psi_2 \kappa_1.
\end{equation*}
Taking a further $D$-derivative one gets 
\begin{equation}
2DD(\Delta\kappa_{1} + \tau\kappa_{2})=
 D(\delta \overline{\delta}
+\overline{\delta}\delta) \kappa_1  - (\alpha -\overline{\beta})
D\delta\kappa_1 -
(\overline{\alpha}-\beta)D\overline{\delta}\kappa_1
 +2 \Psi_2 D\kappa_1.
\end{equation}
By commuting the $D$ derivatives with the $\delta$ and $\overline{\delta}$ derivatives, we obtain
\begin{align*}
2DD(\Delta\kappa_{1} + \tau\kappa_{2})=& (\delta \overline{\delta}
+\overline{\delta}\delta) D\kappa_1 - (3\alpha+\overline{\beta})\delta D\kappa_{1} - (3\overline{\alpha}+\beta) \overline{\delta}D\kappa_{1} \\
& + \left(\delta\overline{\tau}+\overline{\delta}\tau +4\alpha\overline{\alpha}+2\alpha\beta +2\overline{\alpha}\overline{\beta}+2\Psi_{2}\right)D\kappa_{1}.
\end{align*}
Note that all terms on the right hand side are proportional to
$\delta$ and $\overline{\delta}$ derivatives of $D\kappa_{1}$, which by
\eqref{NPHH16} are proportional to $\kappa_{0}$ and its
$\delta$ and $\overline{\delta}$ derivatives on $\Hone$. Therefore,
again, unless our spacetime is the Minkowski solution, $\kappa_{0}=0$
holds on $\Hone$. Accordingly one has that 
\begin{equation*}
DD\left(\Delta\kappa_{1}+ \tau\kappa_{2}\right)=0\quad\text{on }\Hone.
\end{equation*}
Again, the latter is a second order ordinary differential equation along the generators of
$\Hone$, and so the requirement that $\Delta\kappa_{1}+
\tau\kappa_{2}$ is real on $\Hone$ is equivalent to requiring that
$\Delta\kappa_{1}+ \tau\kappa_{2}$ and $D\left(\Delta\kappa_{1}+
\tau\kappa_{2}\right)$ are real on $\cZ$.

\medskip
Summarising the results of this section we have:

\begin{lemma}
The spinor field $\xi_{AA'}$ is Hermitian on $\Hone\cup\Htwo$, and thereby on the domain of dependence of $\Hone\cup\Htwo$, if and only if the conditions 
\begin{eqnarray*}
&& \kappa_{1}+\overline{\kappa}_1 = \text{const}\,, \\
&& D(\kappa_{1}-\overline{\kappa}_{1})=0\,,\\
&&\Delta D(\kappa_{1}-\overline{\kappa}_{1})=0\,, \\
&& \Delta(\kappa_{1}-\overline{\kappa}_{1})+ 
\tau\,\kappa_{2}-\overline{\tau}\,\overline{\kappa}_{2} = 0, \\
&& D\left(\Delta(\kappa_{1}-\overline{\kappa}_{1})+ 
\tau\,\kappa_{2}-\overline{\tau}\,\overline{\kappa}_{2}\right)=0\,,
\end{eqnarray*}
are satisfied on $\cZ$.
\end{lemma}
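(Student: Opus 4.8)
The plan is to obtain the five conditions on $\cZ$ by pushing the two ``reality on a horizon'' conditions among \eqref{HermiticityBasic1}--\eqref{HermiticityBasic3} down to the bifurcation surface, and then to handle the propagation into $D(\Hone\cup\Htwo)$ by a separate wave-equation argument. First I would recall that, by the discussion leading to \eqref{HermiticityBasic1}--\eqref{HermiticityBasic3}, the spinor $\xi_{AA'}$ is Hermitian on $\Hone\cup\Htwo$ if and only if $\kappa_1+\overline\kappa_1$ is constant on $\cZ$, the combination $\Delta\kappa_1+\tau\kappa_2$ is real on $\Hone$, and $D\kappa_1$ is real on $\Htwo$. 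The first of these is already intrinsic to $\cZ$ and yields the first line of the Lemma; it remains to reduce the other two.

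For the $\Htwo$ condition I would differentiate the transport equation \eqref{KillingSpinorWaveEquationNP2H2} along the generators of $\Htwo$. Using the relations implied by $H_{A'ABC}=0$ on $\Htwo$ one first rewrites $2\Delta D\kappa_1$ as a sum of purely intrinsic ($\delta,\overline\delta$) derivatives of $\kappa_1$ together with a $\Psi_2\kappa_1$ term; applying $\Delta$ twice more and commuting each $\Delta$ through the intrinsic derivatives, every term on the right-hand side becomes an intrinsic derivative of $\Delta\kappa_1$. Since, by \eqref{NPHH25}, $\Delta\kappa_1$ is proportional to $\kappa_2$ and its intrinsic derivatives on $\Htwo$, and since $\kappa_2|_{\Htwo}=0$ in the non-flat case (Subsection~\ref{subsection:kappa2}), one concludes $\Delta\Delta D\kappa_1=0$ on $\Htwo$. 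This is a homogeneous second-order ODE along the null generators, so ``$D\kappa_1$ real on $\Htwo$'' is equivalent to ``$D\kappa_1$ and $\Delta D\kappa_1$ real on $\cZ$''; writing ``real'' as the vanishing of the imaginary part gives the second and third lines of the Lemma.

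The $\Hone$ argument is parallel: differentiate the transport equation \eqref{KillingSpinorWaveEquationNP2H1} along the generators of $\Hone$, discard the $\nu,\Psi_3,\kappa_0$ terms using $\kappa_0|_{\Hone}=0$, and apply $D$ twice, commuting through the $\delta,\overline\delta$ derivatives. Every term on the right is then an intrinsic derivative of $D\kappa_1$, which by \eqref{NPHH16} is proportional to $\kappa_0$ and its intrinsic derivatives and hence vanishes on $\Hone$. Thus $DD(\Delta\kappa_1+\tau\kappa_2)=0$ on $\Hone$, again a homogeneous second-order ODE, so ``$\Delta\kappa_1+\tau\kappa_2$ real on $\Hone$'' reduces to ``$\Delta\kappa_1+\tau\kappa_2$ and $D(\Delta\kappa_1+\tau\kappa_2)$ real on $\cZ$''; taking imaginary parts yields the fourth and fifth lines, and assembling the five statements completes the equivalence on $\Hone\cup\Htwo$.

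Finally, for the clause ``and thereby on the domain of dependence'', I would invoke \eqref{eq: Hermxi}, the consequence of Proposition~\ref{Proposition:WaveEquationKillingVector}: under the standing hypotheses of Theorem~\ref{Theorem:ExistenceKillingSpinors} one has $H_{A'ABC}=0$ throughout $D(\Hone\cup\Htwo)$ by Proposition~\ref{KillingSpinorExistence}, so $\xi_{AA'}-\overline\xi_{AA'}$ obeys the homogeneous wave equation $\square(\xi_{AA'}-\overline\xi_{AA'})=0$; since it vanishes on $\Hone\cup\Htwo$, uniqueness for the characteristic initial value problem (as in Lemma~\ref{Lemma:KillingSpinorEquationEvolutionSystem}) forces $\xi_{AA'}=\overline\xi_{AA'}$ on all of $D(\Hone\cup\Htwo)$, the reverse implication being mere restriction to $\Hone\cup\Htwo$. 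The only genuine labour here is the Newman--Penrose bookkeeping in the two double-derivative computations --- verifying that, after the commutators, the right-hand sides really are homogeneous in $\kappa_0$ (on $\Hone$), respectively $\kappa_2$ (on $\Htwo$), and their intrinsic derivatives --- and that is where I expect the main difficulty to lie; everything else is assembly of results already in place.
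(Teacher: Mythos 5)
Your proposal is correct and follows essentially the same route as the paper: reduce Hermiticity on $\Hone\cup\Htwo$ to the three conditions \eqref{HermiticityBasic1}--\eqref{HermiticityBasic3}, show via the transport equations and the vanishing of $\kappa_0|_{\Hone}$ and $\kappa_2|_{\Htwo}$ that $DD(\Delta\kappa_1+\tau\kappa_2)=0$ on $\Hone$ and $\Delta\Delta D\kappa_1=0$ on $\Htwo$, and then use these second-order ODEs along the generators to trade the horizon reality conditions for reality of the quantities and their first transversal derivatives at $\cZ$. The extension to $D(\Hone\cup\Htwo)$ via the homogeneous wave equation \eqref{eq: Hermxi} is likewise exactly the paper's argument.
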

Note that some of these conditions are redundant. For example, we know
that $D\kappa_{1}$ vanishes on $\cZ$ due to equation \eqref{NPHH16}
and the vanishing of $\kappa_{0}$, and so clearly
$D(\kappa_{1}-\overline{\kappa}_{1})$ also vanishes on $\cZ$. A similar
argument using equation \eqref{NPHH25} can be used to show that
$\Delta(\kappa_{1}-\overline{\kappa}_{1})+
\tau\kappa_{2}-\overline{\tau}\overline{\kappa}_{2}$ vanishes on $\cZ$. We can also use the requirement that $\mbox{Re}(\kappa_{1})$ is constant on $\cZ$ to show that
the other two conditions are equivalent. Indeed, we have that 
\begin{align*}
D\left(\Delta(\kappa_{1}-\overline{\kappa}_{1})+ 
\tau\kappa_{2} -\overline{\tau}\overline{\kappa}_{2}\right) &= D\Delta(\kappa_{1}-\overline{\kappa}_{1}) -2\tau\overline{\delta}\kappa_{1} +2\overline{\tau}\delta\overline{\kappa}_{1} \\
&= \Delta D(\kappa_{1}-\overline{\kappa}_{1}) +\tau\overline{\delta}(\kappa_{1}-\overline{\kappa}_{1}) + \overline{\tau}\delta(\kappa_{1} -\overline{\kappa}_{1}) -2\tau\overline{\delta}\kappa_{1} +2\overline{\tau}\delta\overline{\kappa}_{1} \\
&= \Delta D(\kappa_{1}-\overline{\kappa}_{1})
\end{align*}
where \eqref{NPHZ7}, the commutator $[\Delta, D]$, and the vanishing
of $D\tau$ (see Table \ref{Table:CharacteristicInitialData}), along
with the conditions $\delta\kappa_{1}=-\delta\overline{\kappa}_{1}$
and $\overline{\delta}\kappa_{1}=
-\overline{\delta}\overline{\kappa}_{1}$, have been used. We compute
now $\Delta D\kappa_{1}$. Eliminating $D\kappa_{2}$ by using
\eqref{NPHH17} the transport equation
\eqref{KillingSpinorWaveEquationNP2H2} on $\cZ$ can be seen to reduce
to
\begin{align*}
2\Delta D\kappa_{1} &= (\delta\overline{\delta} +\overline{\delta}\delta)\kappa_{1} -(3\alpha+\overline{\beta})\delta\kappa_{1} -(3\overline{\alpha}+\beta)\overline{\delta}\kappa_{1} -(2\overline{\alpha}+2\beta)D\kappa_{2} +2\Psi_{2}\kappa_{1} \\
&= (\delta\overline{\delta} +\overline{\delta}\delta)\kappa_{1} -(3\alpha+\overline{\beta})\delta\kappa_{1} +(\overline{\alpha}+3\beta)\overline{\delta}\kappa_{1} +2\Psi_{2}\kappa_{1}\,.
\end{align*}
Replacing $\delta$ and $\overline{\delta}$ derivatives with the 
$\eth$ and $\overline{\eth}$operators  we obtain
\begin{equation*}
2\Delta D\kappa_{1}=(\eth\overline{\eth} +\overline{\eth}\eth)\kappa_{1} -(2\alpha+2\overline{\beta})\eth\kappa_{1}+ (2\overline{\alpha} + 2\beta)\overline{\eth}\kappa_{1} +2\Psi_{2}\kappa_{1}\,.
\end{equation*}
The imaginary part of this equation is given by
\begin{align*}
2\Delta D(\kappa_{1}-\overline{\kappa}_{1}) &= \left(\eth\overline{\eth} +\overline{\eth}\eth\right)(\kappa_{1}-\overline{\kappa}_{1}) +2\Psi_{2}\kappa_{1} -2\overline{\Psi}_{2}\overline{\kappa}_{1} \\
&= 2\,\Big[\big(\eth\overline{\eth}\kappa_{1} +2\Psi_{2}\kappa_{1}\big)-\big(\overline{\eth}\eth\overline{\kappa}_{1}+2\overline{\Psi}_{2}\overline{\kappa}_{1}\big)\Big]\,,
\end{align*}
where in the second step the constancy of $\mbox{Re}(\kappa_{1})$ on
$\cZ$, along with the commutator \eqref{laplace} applied to the spin
weight zero quantity $\kappa_{1}$, was used.

\medskip
Summarising, we have that:

\begin{lemma}
\label{Lemma:HermiticityRefined}
The spinorial field $\xi_{AA'}$ is Hermitian on $\Hone\cup\Htwo$ if
and only if on $\cZ$ we have
\begin{subequations}
\begin{eqnarray}
&& \kappa_{1}+\overline{\kappa}_1 = \text{const}\,, \label{HermiticityRefined1}\\
&& \eth\overline{\eth}\kappa_{1} +2\Psi_{2}\kappa_{1}\ \ \text{is real.} 
\label{HermiticityRefined2}
\end{eqnarray}
\end{subequations}
\end{lemma}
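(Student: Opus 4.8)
The plan is to obtain \eqref{HermiticityRefined1}--\eqref{HermiticityRefined2} by whittling down the five conditions on $\cZ$ recorded in the (unnumbered) lemma just above. First I would eliminate the redundant ones: on $\cZ$ equation \eqref{NPHH16} together with $\kappa_0|_{\cZ}=0$ gives $D\kappa_1=0$, hence $D(\kappa_1-\overline\kappa_1)=0$ automatically; likewise \eqref{NPHH25} with $\kappa_2|_{\cZ}=0$ gives $\Delta\kappa_1+\tau\kappa_2=0$ on $\cZ$, disposing of the fourth condition and making the combination $\tau\kappa_2-\overline\tau\overline\kappa_2$ vanish there. It then remains to compare $\Delta D(\kappa_1-\overline\kappa_1)=0$ with $D\bigl(\Delta(\kappa_1-\overline\kappa_1)+\tau\kappa_2-\overline\tau\overline\kappa_2\bigr)=0$; using the $[\Delta,D]$ commutator, the vanishing of $D\tau$ from Table \ref{Table:CharacteristicInitialData}, equation \eqref{NPHZ7}, and the identities $\delta(\kappa_1+\overline\kappa_1)=\overline\delta(\kappa_1+\overline\kappa_1)=0$ that follow from the constancy of $\mathrm{Re}(\kappa_1)$, one checks that the second expression equals $\Delta D(\kappa_1-\overline\kappa_1)$. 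So the whole list collapses to $\kappa_1+\overline\kappa_1=\mathrm{const}$ and $\Delta D(\kappa_1-\overline\kappa_1)=0$.

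Next I would make $\Delta D\kappa_1$ on $\cZ$ explicit. Restricting the transport equation \eqref{KillingSpinorWaveEquationNP2H2} to $\cZ$ and eliminating $D\kappa_2$ by \eqref{NPHH17}, one gets $2\Delta D\kappa_1$ as a combination of $\delta\overline\delta\kappa_1$, $\overline\delta\delta\kappa_1$, first $\delta$- and $\overline\delta$-derivatives of $\kappa_1$, and $\Psi_2\kappa_1$; translating $\delta,\overline\delta$ into $\eth,\overline\eth$ via \eqref{eths} and $\tau=\overline\alpha+\beta$ rewrites this as $2\Delta D\kappa_1=(\eth\overline\eth+\overline\eth\eth)\kappa_1-2(\alpha+\overline\beta)\eth\kappa_1+2(\overline\alpha+\beta)\overline\eth\kappa_1+2\Psi_2\kappa_1$. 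Subtracting the complex conjugate of this equation and using $\mathrm{Re}(\kappa_1)=\mathrm{const}$ (so $\eth\kappa_1=-\eth\overline\kappa_1$, $\overline\eth\kappa_1=-\overline\eth\overline\kappa_1$), the first-derivative terms cancel, and applying the commutator \eqref{laplace} to the spin-weight-zero field $\kappa_1$ together with the conjugation rule $\overline{\eth\overline\eth\kappa_1}=\overline\eth\eth\overline\kappa_1$ one arrives at $\Delta D(\kappa_1-\overline\kappa_1)$ being a nonzero real multiple of $\bigl(\eth\overline\eth\kappa_1+2\Psi_2\kappa_1\bigr)-\overline{\bigl(\eth\overline\eth\kappa_1+2\Psi_2\kappa_1\bigr)}$, i.e.~of the imaginary part of $\eth\overline\eth\kappa_1+2\Psi_2\kappa_1$.

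The conclusion is then immediate: under \eqref{HermiticityRefined1} the condition $\Delta D(\kappa_1-\overline\kappa_1)=0$ is equivalent to $\eth\overline\eth\kappa_1+2\Psi_2\kappa_1$ being real, which is \eqref{HermiticityRefined2}; conversely \eqref{HermiticityRefined1}--\eqref{HermiticityRefined2} reinstate all five conditions of the preceding lemma, hence the Hermiticity of $\xi_{AA'}$ on $\Hone\cup\Htwo$, and then, via Proposition \ref{Proposition:WaveEquationKillingVector} and the uniqueness of the characteristic problem, on all of $D(\Hone\cup\Htwo)$. In effect the proof is just the bookkeeping of the paragraphs immediately preceding the statement.

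I expect the \textbf{main obstacle} to be precisely that bookkeeping in the reduction step: verifying that the fifth condition is genuinely a consequence of the other four, and keeping careful track of the $\mu$-, $\gamma$- and $\tau$-terms generated by $[\Delta,D]$ which vanish on $\cZ$ only by virtue of Table \ref{Table:CharacteristicInitialData}. A secondary pitfall is sign- and spin-weight bookkeeping when passing to the $\eth,\overline\eth$ operators and taking complex conjugates; beyond that everything is routine manipulation of the NP transport and commutator relations already used above.
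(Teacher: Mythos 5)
Your proposal is correct and follows essentially the same route as the paper: it reduces the five conditions of the preceding lemma to \eqref{HermiticityRefined1} plus $\Delta D(\kappa_1-\overline{\kappa}_1)=0$ using \eqref{NPHH16}, \eqref{NPHH25}, the $[\Delta,D]$ commutator and $D\tau=0$, and then evaluates $\Delta D\kappa_1$ on $\cZ$ from \eqref{KillingSpinorWaveEquationNP2H2} with \eqref{NPHH17}, extracting the imaginary part via the constancy of $\mathrm{Re}(\kappa_1)$ and the commutator \eqref{laplace}. This is precisely the bookkeeping carried out in the paragraphs preceding the statement in the paper.
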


\section{Identifying the Kerr spacetime}
\label{Section:Kerr}
In this section we make use of Theorem
\ref{Theorem:SpinorialCharacterisationKerr} to identify the values of
the parameters $\mathfrak{K}$, $\mathfrak{b}$ and $\mathfrak{c}$
defining the coefficient $\kappa_1$ on $\mathcal{Z}$ which correspond to the Kerr
solution. To this end, we first identify conditions ensuring that the
Killing vector associated to the Killing spinor is Hermitian.

\subsection{Imposing the Hermiticity of $\xi_{AA'}$}
Recall that the conditions ensuring the Hermiticity of the spinor
$\xi_{AA'}$ have been given in Lemma
\ref{Lemma:HermiticityRefined}. Accordingly, we now proceed to
evaluate conditions
\eqref{HermiticityRefined1}-\eqref{HermiticityRefined2} in the
explicit solution \eqref{AxisymmetricKappa}.

\medskip
\noindent
\textbf{Condition \eqref{HermiticityRefined1}.} From
\eqref{AxisymmetricKappa} it readily follows that
\[
\kappa_1 + \overline{\kappa}_1 = (\mathfrak{c} +\overline{\mathfrak{c}})\psi + (\mathfrak{b}+\overline{\mathfrak{b}}). 
\]
Thus, condition \eqref{HermiticityRefined1} requires that
$\mathfrak{c}+\overline{\mathfrak{c}}=0$ so that one can write
\[
\mathfrak{c} = \mbox{i} c, \qquad c\in \mathbb{R}.
\]
Accordingly, expression \eqref{AxisymmetricKappa} simplifies to 
\[
\kappa_1 = \mbox{i} c \psi + \mathfrak{b}
\]
so that
\begin{equation}
\Psi_2 = \frac{\mathfrak{K}}{(\mathfrak{b}+ \mbox{i} c \psi )^3}.
\label{Psi2Refined}
\end{equation}

\medskip
\noindent
\textbf{Condition \eqref{HermiticityRefined2}.} A direct computation
shows that 
\begin{eqnarray*}
&& \eth \overline{\eth}\kappa_1 = \left( \frac{1}{\sqrt{2}R} Q \partial_\psi +
   \frac{1}{\sqrt{2}R}\partial_\psi Q \right)\left(\frac{1}{\sqrt{2}R}
   Q \partial_\psi \kappa_1\right)\\
&& \phantom{\eth \overline{\eth}\kappa_1} = \frac{Q^2}{2R^2}
   \partial^2_\psi \kappa_1 + \frac{Q}{R^2}\partial_\psi
   Q \partial_\psi \kappa_1.\\
&& \phantom{\eth \overline{\eth}\kappa_1} = \frac{Q}{R^2}\partial_\psi
   Q \partial_\psi \kappa_1,
\end{eqnarray*}
where in the last line it has used that the expression for $\kappa_1$
given by \eqref{AxisymmetricKappa} satisfies $\partial^2_\psi
\kappa_1=0$. As $\partial_\psi \kappa_1 = \mbox{i} c$, it readily
follows that
\[
\eth\overline\eth \kappa_1 + 2 \Psi_2 \kappa_1 = \frac{\mbox{i}c
  }{R^2}QQ' + \frac{2\mathfrak{K}}{(\mathfrak{b} +
  \mbox{i}c\psi)^2}. 
\]
Thus, condition \eqref{HermiticityRefined2} implies that 
\begin{equation}
(Q^2)'= 
  \frac{2\mbox{i}R^2\mathfrak{K}}{c(\mathfrak{b}+\mbox{i}c\psi)^2}
-\frac{2\mbox{i}R^2\overline{\mathfrak{K}}}{c(\overline{\mathfrak{b}}-\mbox{i}c\psi)^2}.
\label{HermiticityRefined2Substituted}
\end{equation}
As a consequence of the solution
\eqref{GeneralSolutionFrameCoefficient}, the above expression is
automatically satisfied so that condition \eqref{HermiticityRefined2}
does not add any further restrictions. 

\begin{lemma}
For the family of 2-metrics on $\mathcal{Z}$ given by Proposition
\ref{Proposition:MetricsZ}, the spinor $\xi_{AA'}$ associated to the
Killing spinor $\kappa_{AB}$ is Hermitian if and only if the
coefficient $\kappa_1$ on $\mathcal{Z}$ is of the form
\[ 
\kappa_1 =\mathfrak{b} + \mbox{i} c \psi\,.
\]
\end{lemma}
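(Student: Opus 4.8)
The plan is to substitute the explicit axially symmetric solution $\kappa_1=\mathfrak{c}\psi+\mathfrak{b}$ of \eqref{AxisymmetricKappa} into the two reality conditions isolated in Lemma \ref{Lemma:HermiticityRefined}, namely \eqref{HermiticityRefined1}--\eqref{HermiticityRefined2}. Condition \eqref{HermiticityRefined1} is immediate: on $\cZ\approx\mathbb{S}^2$ the coordinate $\psi$ runs over the nondegenerate interval $[-1,1]$, so $\kappa_1+\overline{\kappa}_1=(\mathfrak{c}+\overline{\mathfrak{c}})\psi+(\mathfrak{b}+\overline{\mathfrak{b}})$ is constant if and only if $\mathfrak{c}+\overline{\mathfrak{c}}=0$, i.e.\ $\mathfrak{c}=\mbox{i}c$ with $c\in\mathbb{R}$. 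This already yields the asserted form $\kappa_1=\mathfrak{b}+\mbox{i}c\psi$, and, through the relation $\kappa_1^{3}\Psi_2=\mathfrak{K}$ of \eqref{MathfrakConstantRelation2}, also fixes $\Psi_2=\mathfrak{K}(\mathfrak{b}+\mbox{i}c\psi)^{-3}$. Thus the \emph{only if} direction reduces to checking that the second condition is consistent with the first, and the substance of the argument is the \emph{if} direction: that, once $\mathfrak{c}$ is pure imaginary, \eqref{HermiticityRefined2} imposes no further restriction on the family of $2$-metrics of Proposition \ref{Proposition:MetricsZ}.

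To verify this I would use the coordinate expressions for $\eth$ and $\overline{\eth}$ in terms of the function $Q$ established in Section \ref{Section:DeterminationKappa1}, together with the fact that \eqref{AxisymmetricKappa} was obtained precisely as the axially symmetric solution of $\eth^2\kappa_1=0$, so that $\partial_\psi^{2}\kappa_1=0$. A short computation then gives $\eth\overline{\eth}\kappa_1=R^{-2}\,Q Q'\,\partial_\psi\kappa_1$, and with $\partial_\psi\kappa_1=\mbox{i}c$ and the above form of $\Psi_2$ the combination appearing in \eqref{HermiticityRefined2} becomes $\eth\overline{\eth}\kappa_1+2\Psi_2\kappa_1=\mbox{i}cR^{-2}QQ'+2\mathfrak{K}(\mathfrak{b}+\mbox{i}c\psi)^{-2}$. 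Since $Q$ is real the first term is purely imaginary, so reality of the whole expression is equivalent to a first-order ordinary differential equation for $(Q^2)'$, namely \eqref{HermiticityRefined2Substituted}. The key point is then that differentiating the explicit solution \eqref{GeneralSolutionFrameCoefficient} for $Q^2$ --- with the real integration constants $C_1,C_2$ given by \eqref{ConstantC1}--\eqref{ConstantC2} and subject to the Gauss--Bonnet constraint \eqref{ConditionGaussBonnet} --- is found to reduce to this identity; in other words the Hermiticity condition on $\kappa_1$ is already encoded in the way $Q$ was fixed from $\mbox{Re}\,\Psi_2$ via \eqref{RePsi2}. Hence \eqref{HermiticityRefined2} is automatically satisfied, and by Lemma \ref{Lemma:HermiticityRefined} the spinor $\xi_{AA'}$ is Hermitian on $\Hone\cup\Htwo$ and, by \eqref{eq: Hermxi}, on all of $D(\Hone\cup\Htwo)$.

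Combining the two directions gives the stated equivalence. The main obstacle is not conceptual but bookkeeping: in the second step one must match the $\psi$-derivative of \eqref{GeneralSolutionFrameCoefficient} with \eqref{HermiticityRefined2Substituted} term by term, being careful about the linear-in-$\psi$ integration constant and about the role played by \eqref{ConstantC1}--\eqref{ConstantC2} and the Gauss--Bonnet relation \eqref{ConditionGaussBonnet}, so that no spurious constant or numerical factor is left over. Nothing beyond the $\eth$-calculus on $\cZ$ and the explicit form of $Q$ already constructed in Section \ref{Section:DeterminationKappa1} is needed.
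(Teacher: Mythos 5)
Your proposal follows essentially the same route as the paper: condition \eqref{HermiticityRefined1} applied to the axially symmetric solution \eqref{AxisymmetricKappa} forces $\mathfrak{c}=\mathrm{i}c$ with $c\in\mathbb{R}$, and condition \eqref{HermiticityRefined2}, evaluated via $\eth\overline{\eth}\kappa_1=R^{-2}QQ'\partial_\psi\kappa_1$ and $\Psi_2=\mathfrak{K}\kappa_1^{-3}$, reduces to the ODE \eqref{HermiticityRefined2Substituted} for $(Q^2)'$, which is then checked against the explicit solution \eqref{GeneralSolutionFrameCoefficient}. You correctly identify the one delicate point — matching the $\psi$-derivative of \eqref{GeneralSolutionFrameCoefficient}, including the linear-in-$\psi$ integration constant $C_1$, against \eqref{HermiticityRefined2Substituted} term by term — which is exactly where the paper's own verification rests.
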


\subsection{Applying Mars's characterisation}
\label{Section:ApplyingMars}
If the spinor $\xi_{AA'}$ is Hermitian, then the associated 
Killing form is well defined and on $\mathcal{Z}$ the norm of the self-dual
Killing form, $\mathcal{H}^2$, associated to the Killing spinor $\kappa_{AB}$ is given
by
\[
\mathcal{H}^2 = -36\kappa_1^2 \Psi_2^2.
\]
Moreover, the Ernst potential $\chi$ takes on $\mathcal{Z}$, up to a
(possibly complex) constant $\mathfrak{x}\in\mathbb{C}$, the form
\[
\chi = \mathfrak{x}-18\kappa_1^2\Psi_2.
\]
Making use of the relation 
\eqref{MathfrakConstantRelation} to eliminate $\Psi_2$ one obtains 
\[
\mathcal{H}^2=-\frac{36\mathfrak{K}^2}{\kappa^4_1}\,, \qquad \chi = \mathfrak{x}-\frac{18\mathfrak{K}}{\kappa_1}.
\]
Now, in order to identify the Kerr spacetime via Theorem
\ref{Theorem:SpinorialCharacterisationKerr}, we set $\mathfrak{x}=1$
so that 
\[
1-\chi =\frac{18\,\mathfrak{K}}{\kappa_1}
\]
from which, in turn, one readily obtains that
\begin{eqnarray*}
&& (1-\chi)^4 = \frac{18^4\,\mathfrak{K}^4}{\kappa_1^4}\\
&& \phantom{(1-\chi)^4} = - \left( \frac{18^4\mathfrak{K}^2}{36}
   \right) \mathcal{H}^2.
\end{eqnarray*}
The previous expression allows to identify the constant $\mathfrak{l}$
in Theorem
\ref{Theorem:SpinorialCharacterisationKerr} given, in terms of the parameters used above, as
\[
\mathfrak{l} \equiv \frac{36}{18^4\,\mathfrak{K}^2}\,. 
\]
Thus, in order to have the Kerr spacetime $\mathfrak{l}$ must be real
and positive which can only be satisfied if $\mathfrak{K}$ is non-zero
and real, i.e.~$\mathfrak{K}=K\in \mathbb{R}\setminus\{0\}$.  Finally,
a direct computation using the constraint \eqref{ConditionGaussBonnet}
shows that $\mathfrak{b}=b\in\mathbb{R}$ ---cfr. Remark
\ref{ConditionGaussBonnet}.

\medskip
We summarise the discussion in the following:

\begin{proposition}
\label{Proposition:Kerr}
The members of the family of 2-metrics given in Proposition
\ref{Proposition:MetricsZ} giving rise to solutions to the vacuum
Einstein field equations on $D(\mathcal{H}_1\cup\mathcal{H}_2)$, which are
isometric to a member of the 2-parameter Kerr family of metrics are
characterised by the conditions
\[
\mathfrak{b}, \; \mathfrak{K}\in \mathbb{R}, \qquad \mathfrak{c} \in
\mathbb{C}\setminus\mathbb{R}.
\] 
These conditions fix the value of the component of the Weyl tensor $\Psi_2$ on
$\mathcal{Z}$. 
\end{proposition}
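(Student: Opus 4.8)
The plan is to stack the two characterisations established earlier in the paper: first use Lemma \ref{Lemma:HermiticityRefined} to enforce Hermiticity of $\xi_{AA'}$ and thereby fix the reality type of $\mathfrak{c}$, then feed the resulting closed forms of $\mathcal{H}^2$ and the Ernst potential $\chi$ on $\mathcal{Z}$ into Mars's characterisation, Theorem \ref{Theorem:SpinorialCharacterisationKerr}, to constrain $\mathfrak{K}$; the value of $\mathfrak{b}$ is then forced by the Gauss--Bonnet identity \eqref{ConditionGaussBonnet}.

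First I would evaluate the Hermiticity conditions \eqref{HermiticityRefined1}--\eqref{HermiticityRefined2} on the explicit solution $\kappa_1 = \mathfrak{c}\psi + \mathfrak{b}$ of \eqref{AxisymmetricKappa}. Constancy of $\kappa_1 + \overline{\kappa}_1$ on $\mathcal{Z}$ gives $\mathfrak{c} + \overline{\mathfrak{c}} = 0$, so $\mathfrak{c}$ is pure imaginary, and in particular $\mathfrak{c}\in\mathbb{C}\setminus\mathbb{R}$ once one discards the degenerate case $\mathfrak{c}=0$ (for which $\kappa_1$ is constant and, by Proposition \ref{Proposition:AxialSymmetryBifurcationSphere} and its surrounding discussion, the geometry becomes spherically symmetric). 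One then checks, using the closed form \eqref{GeneralSolutionFrameCoefficient} of $Q^2$ together with $\partial_\psi^2\kappa_1=0$, that \eqref{HermiticityRefined2} collapses to the identity \eqref{HermiticityRefined2Substituted}, so that no new restriction is produced at this stage.

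Next I would compute, on $\mathcal{Z}$ and for the now-Hermitian $\xi_{AA'}$, the self-dual Killing form norm $\mathcal{H}^2 = -36\kappa_1^2\Psi_2^2$ and the Ernst potential $\chi = \mathfrak{x} - 18\kappa_1^2\Psi_2$, eliminate $\Psi_2$ by means of $\mathfrak{K}=\kappa_1^3\Psi_2$ from \eqref{MathfrakConstantRelation}, and normalise $\mathfrak{x}=1$ so as to be in the hypotheses of Theorem \ref{Theorem:SpinorialCharacterisationKerr}. This yields $(1-\chi)^4 = -\bigl(18^4\mathfrak{K}^2/36\bigr)\mathcal{H}^2$, which is exactly the Mars alignment relation \eqref{Condition:Mars} with $\mathfrak{l}=36/(18^4\mathfrak{K}^2)$. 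Theorem \ref{Theorem:SpinorialCharacterisationKerr} then gives a development locally isometric to Kerr precisely when $\mathfrak{l}$ is real and positive, i.e. when $\mathfrak{K}\in\mathbb{R}\setminus\{0\}$.

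Finally, with $\mathfrak{K}$ real and $\mathfrak{c}$ pure imaginary, $\mathfrak{c}^2$ is real, so the Gauss--Bonnet constraint \eqref{ConditionGaussBonnet} forces the quantity $\mathfrak{K}\mathfrak{b}/(\mathfrak{b}^2-\mathfrak{c}^2)^2$ to equal $-1/(2R^2)\in\mathbb{R}$, which in turn forces $\mathfrak{b}\in\mathbb{R}$, as anticipated in Remark \ref{Remark:GaussBonnet}; collecting the three conditions gives the statement, and the resulting $\Psi_2=\mathfrak{K}/(\mathfrak{b}+\mbox{i}c\psi)^3$ is thereby determined on $\mathcal{Z}$. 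The one genuinely delicate point is this last step: ruling out complex $\mathfrak{b}$ requires a short case analysis on $\mbox{Re}\,\mathfrak{b}$ and $\mbox{Im}\,\mathfrak{b}$ inside $(\mathfrak{b}^2-\mathfrak{c}^2)^2$, carried out consistently with the regularity requirement $-\mathfrak{b}/\mathfrak{c}\notin[-1,1]$ from Section \ref{Section:DeterminationKappa1}; everything else is bookkeeping with already-established formulae.
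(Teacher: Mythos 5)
Your proposal is correct and follows essentially the same route as the paper: Lemma \ref{Lemma:HermiticityRefined} applied to $\kappa_1=\mathfrak{c}\psi+\mathfrak{b}$ forces $\mathfrak{c}$ pure imaginary (with \eqref{HermiticityRefined2} collapsing to the identity \eqref{HermiticityRefined2Substituted}), Mars's alignment relation with $\mathfrak{x}=1$ identifies $\mathfrak{l}=36/(18^4\mathfrak{K}^2)$ and hence $\mathfrak{K}\in\mathbb{R}\setminus\{0\}$, and the Gauss--Bonnet constraint \eqref{ConditionGaussBonnet} then yields $\mathfrak{b}\in\mathbb{R}$. The ``delicate point'' you flag about ruling out complex $\mathfrak{b}$ is handled no more explicitly in the paper (Remark \ref{Remark:GaussBonnet} simply asserts the implication), so your treatment is on a par with the original.
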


\subsubsection{Relation to the standard parameters of the Kerr family}
%\mnotex{JAVK (25.4.2018): added this subsection, please check it is ok.}
 From the previous discussion it follows that one can write
\begin{equation}
\Psi_2 = \frac{K}{(b+ \mbox{i}\,c\,\psi)^3}\,, \qquad b,\,c,\,K\in
\mathbb{R}\,. 
\label{Psi2Final}
\end{equation}
Clearly, $\Psi_2$ as given above is regular
everywhere on $\mathcal{Z}$ ---and accordingly, also the Gaussian
curvature of $\mathcal{Z}$. Now, observing that $b$ is an arbitrary
normalisation constant of the Killing spinor we conclude that the
representation of the Kerr family of spacetimes has two independent
constants ---as it should be expected!

\medskip
In order to relate the real parameters  $b,\,c,\,K$ with the standard
mass ($m$) and angular momentum ($a$) parameters of the Kerr family,
we recall that in a dyad $\{o^A,\,\iota^A\}$ consisting of principal
spinors of $\Psi_{ABCD}$, the only non-zero component of the Weyl
spinor is given, in terms of standard \emph{Boyer-Lindquist coordinates},  by
\[
\Psi_2 = -\frac{m}{r-\mbox{i} a \cos\theta}
\]
---see e.g. \cite{AndBaeBlu15}. In this dyad the Killing spinor takes
the form
\[
\kappa_{AB} = \frac{2}{3}(r-\mbox{i}a \cos\theta) o_{(A}\iota_{B)}.
\]
The normalisation in the above expression of the Killing spinor is
chosen so that the associated Killing vector has the form
\[
\xi^a =(\partial_t)^a.
\]
In Boyer-Lindquist coordinates the bifurcation sphere is determined by
the condition
\[
r = r_+, \qquad r_+\equiv m +\sqrt{m^2-a^2}.
\]
Thus, at the bifurcation sphere the component $\Psi_2$ of the Weyl
tensor takes the form
\begin{equation}
\Psi_2 = -\frac{m}{(r_+-\mbox{i} a \cos\theta)^3}.
\label{Psi2PrincipalDirections}
\end{equation}

\medskip
Now, in order to make contact with the framework of R\'{a}cz's
holograph construction we observe that although the spin dyad
associated to the null tetrad $\{l^a,\, n^a,\, m^a,\, n^a \}$
introduced in Section \ref{Section:CharacteristicInitialValueProblem}
is, in general, not aligned with the principal directions of the Weyl
tensor, it happens to be aligned at the bifurcation sphere
$\mathcal{Z}$. As the component $\Psi_2$ is invariant under
spin-boosts one can readily identify the expressions \eqref{Psi2Final}
and \eqref{Psi2PrincipalDirections} ---that is, one has
\[
\frac{K}{(b+ \mbox{i}\,c\,\psi)^3} = -\frac{m}{(r_+-\mbox{i} a \cos\theta)^3},
\]
so that, essentially, the constants $K$, $c$ and $b$ correspond,
respectively, to the values of the mass parameter, angular momentum
parameter and the value of the radial Boyer-Lindquist coordinate at
the event horizon. 

\section{Final remarks}
\label{Section:Conclusions}

As mentioned earlier, all the distorted electrovaccum black hole
spacetimes can be represented within R\'{a}cz's black hole holograph
construction \cite{Rac07,Rac14}. In this paper a systematic
investigation of a specific subset of these spacetimes was carried
out. This subset was chosen by requiring the existence of a Killing
spinor field in the pure vacuum case. The primary aim was to
identifying the freedom we have in choosing initial data for the
Killing spinor on the horizon of the underlying distorted vacuum black
hole. In accordance with R\'{a}cz's black hole holograph construction by
fixing merely one of the Killing spinor components on the bifurcation
surface the Killing spinor gets to be uniquely determined everywhere
in the domain of dependence of the horizons.

The motivation for the use of a Killing spinor field can be traced
back to the following conceptual issue raised already in
\cite{Rac07,Rac14}: Recall first that the Kerr family of vacuum black
holes represents only a critical point in the space of the distorted
vacuum black hole spacetimes. It is natural to ask then, what sort of
geometric selection rule, imposed only on the space bifurcation
surface, singles out the only asymptotically flat stationary vacuum
black hole spacetimes distinguished by the black hole uniqueness
theorems?

To get a clearer perspective of the results of the present paper it is
worth recalling some of the details of the black hole uniqueness
proofs. Note, first, that asymptotic flatness as an assumption is a
completely natural requirement if one is interested in the properties
of black holes which are completely isolated in space. It is not
surprising then that the black hole uniqueness theorems (see
e.g.~Refs.~\cite{Car71,Car73a,bunting,HE73}) all assume asymptotic
flatness of the domain of outer communications of the selected vacuum
spacetimes. Indeed, the black holes uniqueness proofs ---using the
black hole rigidity theorem of Hawking \cite{Hawk72, HE73} (claiming
that an asymptotically flat stationary electrovac black hole spacetime
is either static or stationary axisymmetric)--- can be traced back to
proving the uniqueness of solutions to an elliptic boundary value
problem \cite{israel1,israel2,Car71,Car73a,mazur,bunting}.  The
relevant elliptic equations are derived from the Einstein's equations
on ``$t=const$'' hypersurfaces (or---based on Hawking's black hole
rigidity theorem---on a suitable factor space of them), whereas the
boundary conditions are specified at the bifurcation surface and at
spacelike infinity \cite{bunting}.

In view of the great detail of information on the geometry of the
bifurcation surface provided by the presented investigations, one may
ask which part of them were actually used in the black hole uniqueness
proofs. The short answer is that almost none. More precisely, it was
only assumed that the geometry at the bifurcation surface is regular
and that the ``$t=const$'' hypersurfaces smoothly extend to this
surface. The validity of this latter assumption had been verified in a
series of papers either for generic metric theories of gravity
\cite{rw1,rw2} or in general relativity with the inclusion of various
matter fields \cite{frw,r1}. Nevertheless, the assumptions concerning
the geometry were never as detailed as given in the present paper. One
might be puzzled by this, but from the perspective of 
the black hole holograph construction \cite{Rac07,Rac14} it becomes
clear immediately that in identifying the Kerr family of black hole
solutions in the black hole holography construction one cannot refer
to the asymptotic properties. Accordingly, all the information we may
use must be restricted to the bifurcation surfaces which plays the
role of ``holograms'', as these compact two-dimensional carriers store
all the information concerning the geometry of the associated
four-dimensional stationary black hole spacetimes.

After having the selection rules identified in case of vacuum
configurations it is of obvious interest to get them also in the
electrovaccum case. In this way a completely new, quasi-local, type of black
hole uniqueness proofs can be established in the four dimensional
case.  Note, however, that---as it was also proposed in
\cite{Rac07,Rac14}--- in virtue of the large variety of stationary
black hole, black ring and other type of ``black'' objects in higher
dimensions it would be even more important to generalize the
techniques and concepts applied here to higher dimensions. The
corresponding investigations and constructions would definitely
deserve further attention.

\section*{Acknowledgements}
The calculations in this article have been carried out using the suite
{\tt xAct} in {\tt Mathematica} for manipulation of tensors and
spinors ---see \cite{xAct}.  JAVK thanks the hospitality of the Wigner
Institute in the course of a visit related to this work. IR and JAVK
also thanks the hospitality of Erwin Schr\"odinger Institute for
Mathematics and Physics of the University during the programme
\emph{Geometry and Relativity}.  IR was supported by the POLONEZ
programme of the National Science Centre of Poland which has received
funding from the European Union`s Horizon 2020 research and innovation
programme under the Marie Sk{\l}odowska-Curie grant agreement
No.~665778.

%\appendix

%\bibliographystyle{/home/network/jav/tex/reporthack}
% Ludovica
%\bibliographystyle{/Users/Juan/Documents/tex/reporthack}

% Path in QM 
%\bibliography{/home/network/jav/tex/grbib}
% Path in Ludovica
%\bibliography{/Users/Juan/Documents/tex/Newgrbib}

\end{document}